\DeclareMathOperator{\reached}{\textit{reached}}
\DeclareMathOperator{\unreached}{\textit{unreached}}
\DeclareMathOperator{\twostep}{2-step}
\DeclareMathOperator{\visited}{\textit{visited}}
\DeclareMathOperator{\unvisited}{\textit{unvisited}}
\DeclareMathOperator{\current}{\textit{current}}
\DeclareMathOperator{\covered}{\textit{covered}}
\DeclareMathOperator{\uncovered}{\textit{uncovered}}
\DeclareMathOperator{\dominating}{\textit{dominating}}
\newcounter{ctheorem}
\newtheorem{defn}[ctheorem]{Definition}
\Crefname{observation}{Observation}{Observations}
\newcommand{\decisionproblem}[3]{%
\vspace{0,1cm} \noindent \fbox{%
\begin{minipage}{0.96\textwidth}
 \begin{tabular*}{\textwidth}{@{\extracolsep{\fill}}lr} \textsc{#1} & \\ \end{tabular*}%
  \vspace{1.2mm}%
\par%
{\bf{Input:}}  #2\\%
{\bf{Output:}} #3%
\end{minipage}} \vspace{0,3cm}%
}
\title{Families of tractable problems with respect to vertex-interval-membership width and its generalisations } 
\author{Jessica Enright}{School of Computing Science, University of Glasgow, UK}{jessica.enright@glasgow.ac.uk}{0000-0002-0266-3292}{Supported by EPSRC grant EP/T004878/1.}
 \author{Samuel D. Hand}{School of Computing Science, University of Glasgow, UK}{s.hand.1@research.gla.ac.uk}{0000-0001-8021-249X}{Supported by an EPSRC doctoral training account.}
 \author{Laura Larios-Jones}{School of Computing Science, University of Glasgow, UK}{Laura.Larios-Jones@glasgow.ac.uk}{0000-0003-3322-0176}{}
 \author{Kitty Meeks}{School of Computing Science, University of Glasgow, UK}{kitty.meeks@glasgow.ac.uk}{0000-0001-5299-3073}{Supported by EPSRC grants  EP/T004878/1 and EP/V032305/1.}
\authorrunning{J. Enright, S.\,D. Hand, L. Larios-Jones, and K. Meeks}
\titlerunning{Tractable problems with respect to VIM width and generalisations}
\keywords{Graph algorithms,
Parameterized Algorithms,
Temporal Graphs}
\begin{document}

\maketitle

\begin{abstract}
Temporal graphs are graphs whose edges are labelled with times at which they are active. Their time-sensitivity provides a useful model of real networks, but renders many problems studied on temporal graphs more computationally complex than their static counterparts. To contend with this, there has been recent work devising parameters for which temporal problems become tractable. One such parameter is vertex-interval-membership (VIM) width. Broadly, this gives a bound on the number of vertices we need to keep track of at any given time to solve many problems. Our contributions are two-fold.  Firstly, we introduce a new parameter, tree-interval-membership (TIM) width, that generalises both VIM width and several existing generalisations.  Secondly, we provide meta-algorithms for both VIM and TIM width which can be used to prove fixed-parameter-tractability for large families of problems, bypassing the need to give involved dynamic programming arguments for every problem. In doing this, we provide a characterisation of problems in FPT with respect to both parameters. We apply these algorithms to temporal versions of Hamiltonian path, dominating set, matching, and edge deletion to limit maximum reachability. 
\end{abstract}

\section{Introduction}
Temporal graphs are graphs with a fixed vertex set and edges which appear and disappear over discrete timesteps.
They are useful for modelling networks which change over time in contexts such as communication~\cite{baker_gossips_1972}, epidemiology~\cite{enright_deleting_2021}, and transport~\cite{fuchsle_temporal_2022}. Allowing edges to appear and disappear results in many temporal problems being computationally complex, even on very restricted inputs~\cite{akrida_temporal_2021,mertzios_computing_2023}. One approach for tackling these problems is to look for fpt-algorithms\footnote{We use fpt (lowercase) as a descriptor for algorithms witnessing the inclusion of a problem in the parameterised complexity class FPT.}; that is, algorithms which run in time $f(k)|x|^{O(1)}$, where $f$ is a computable function, $x$ is the input instance, and $k$ is a parameter independent of the input size, that might capture structural information about the input, or properties of the desired solution. Problems admitting such an algorithm are said to belong to the class FPT with respect to $k$.
To design such algorithms, we need to identify temporal graph parameters for which the problem becomes tractable: this is an active area of research. In particular, there exist many temporal analogues to static structural graph parameters such as temporal feedback edge number~\cite{haag_feedback_2022}, timed feedback vertex number~\cite{casteigts_finding_2021}, temporal neighbourhood diversity, temporal modular-width, temporal cliquewidth~\cite{enright_structural_2024}, and multiple temporal versions of treewidth~\cite{fluschnik_as_2020,mans_treewidth_2014}. 

Here, we discuss a family of parameters which restrict the \emph{temporal} graph structure (in particular, the function that assigns edges to sets of times), but do not explicitly restrict the underlying graph. These parameters are specialisations of the treewidth of the undirected static expansion of a temporal graph, which has been proposed as a treewidth analogue for temporal graphs~\cite{fluschnik_as_2020}. The first, and most restrictive, parameter we consider is vertex-interval-membership width (VIM width), as defined by Bumpus and Meeks~\cite{bumpus_edge_2023}. Roughly speaking, this upper bounds the number of vertices at any given time which have been incident to an active edge in the past and will also be incident to an active edge in the future. A range of problems have been shown to be in FPT with respect to VIM width by dynamic programming arguments over a decomposition associated with the parameter~\cite{bumpus_edge_2023,enright_counting_2023,hand_making_2022}. Motivated by capturing how the connected components in a temporal graph evolve, Christodoulou et al.~\cite{christodoulou_making_2024} introduced three generalisations of VIM width: $\leq$-connected-vertex-interval-membership width, $\geq$-connected-vertex-interval-membership width, and bidirectional connected-vertex-interval-membership width. They also showed tractability of several problems by dynamic programming over a suitable decomposition.

We introduce a new parameter, tree-interval-membership (TIM) width, which generalises the parameters introduced by Christodoulou et al.~\cite{christodoulou_making_2024}, and Bumpus and Meeks~\cite{bumpus_edge_2023}. The definition is motivated by noting that, in many temporal problems, if two vertices are not in the same connected component at time $t$, they can be considered independently at that time. 
In addition to introducing this new parameter, our main contributions in this paper are two meta-algorithms
that solve large families of problems that satisfy certain simple conditions. We also show that these sufficient conditions are also necessary for the problem to be fixed-parameter tractable, and thus we give a complete characterisation of problems that are FPT when parameterised by VIM or TIM width.
These meta-algorithms provide shortcuts to proving that a problem admits an fpt-algorithm with respect to either of the parameters without the need to describe the details of a dynamic programming algorithm. We illustrate this by applying our meta-algorithms to several well-studied temporal problems. We note that finding an algorithm directly for each parameter will likely have a faster running time, but that our approach provides a framework for finding such an algorithm. In particular, we believe it is much simpler to apply our meta-algorithm than argue correctness of a specific algorithm over a TIM decomposition.

This work is organised as follows: we begin with preliminary definitions in Section~\ref{sec:prelims}. We define our new parameter and discuss its relationship with existing parameters in Section~\ref{sec:comparison}. We then give our meta-algorithms in Section~\ref{sec:meta}, with examples of their use in Sections~\ref{sec:applications-vim} and~\ref{sec:applications-tim}. We finish with some concluding remarks in Section~\ref{sec:conclusion}. 

\subsection{Notation}\label{sec:prelims}
We begin by defining a temporal graph and some of its characteristics. A \emph{temporal graph} $\mathcal{G}$ consists of a static graph $G=(V,E)$ and a \emph{temporal assignment} $\lambda:E(G)\to 2^{\mathbb{N}}$ describing the times at which each edge in the graph is active. We give an example of a temporal graph in Figure~\ref{fig:eg-both}. The static graph $G$, also denoted $\mathcal{G}_{\downarrow}$, is known as the \emph{underlying} graph of $\mathcal{G}$. We refer to the pair $(e,t)$ where $t\in\lambda(e)$ as a \emph{time-edge}. The set of all time-edges in $\mathcal{G}$ is denoted by $\mathcal{E}(\mathcal{G})$. The static graph $G_t$ consisting of the vertex set $V(\mathcal{G}):= V$ and the edges which are active at time $t$ is called the \emph{snapshot} of $\mathcal{G}$ at time $t$. The \emph{lifetime} of a temporal graph, denoted $\Lambda(\mathcal{G})$ (or simply $\Lambda$ when the graph is clear from context), is the latest time at which an edge is active in the graph. That is, $\Lambda(\mathcal{G})=\max_{e\in E(\mathcal{G}_{\downarrow})}\max\lambda(e)$.
A \emph{strict temporal walk} on a temporal graph $\mathcal{G}$ is a sequence of time-edges $(e_0,t_0),\ldots,(e_l,t_l)\in \mathcal{E}(\mathcal{G})$ such that $e_0,\ldots,e_l$ form a walk in $\mathcal{G}_{\downarrow}$ and
$t_i<t_{i+1}$ for all $0\leq i< l$. A non-strict temporal walk is defined similarly, but the times on consecutive edges are non-decreasing, rather than increasing. 
A \emph{temporal path} is a temporal walk such that each vertex is traversed at most once.

\section{A hierarchy of parameters}\label{sec:comparison}
Before introducing our new parameter, we discuss some existing interval-membership width parameters.
These rely heavily on the notion of an \emph{active interval} of a vertex $v$: that is, the time interval spanning the time of the first time-edge incident to $v$ and the last (inclusive).
We begin with \emph{vertex-interval-membership} (VIM) width, defined by Bumpus and Meeks~\cite{bumpus_edge_2023}.
\begin{defn}[Vertex-Interval-Membership Width (Bumpus and Meeks \cite{bumpus_edge_2023})]
    The vertex-interval-membership (VIM) sequence of a temporal graph $(G, \lambda)$ is the sequence $(F_t)_{t\in[\Lambda]}$ of vertex-subsets of $G$, called \emph{bags}, where $F_t = \{v \in V(G) : \exists uv,vw\in E(G)\text{, where } \min \lambda(vu) \leq t \leq \max \lambda(vw) \}$ and $\Lambda$ is the lifetime of $(G, \lambda)$. The \emph{vertex-interval-membership width} of a temporal graph $(G, \lambda)$ is the integer $\omega = \max_{t \in [\Lambda]}|F_t|$.
\end{defn}

 Although VIM width and algorithms over the VIM sequence are relatively straightforward, the class of temporal graphs for which this parameter is small is quite restricted: at each time, all but a small number of vertices must either have not yet been active or must never be active again. In a VIM sequence, all vertices in their active interval at a given time are in the same bag regardless of their graph-distance at that time. 
 By relaxing this requirement and taking connectivity into account, Christodoulou et al.~\cite{christodoulou_making_2024} introduce three notions of connected-VIM width. They use the temporal graphs $\mathcal{G}_{\leq}(t)$ and $\mathcal{G}_{\geq}(t)$ which consist of the vertices in $\mathcal{G}$ and the time-edges which appear in $\mathcal{G}$ at or before (respectively after) time $t$. The static graphs $G_{\leq}(t)$ and $G_{\geq}(t)$ are the underlying graphs of $\mathcal{G}_{\leq}(t)$ and $\mathcal{G}_{\geq}(t)$ respectively. 

\begin{defn}[Connected-Vertex-Interval-Membership Width (Christodoulou et al.~\cite{christodoulou_making_2024})]\label{def:CVIM}
    Let $d\in\{\leq, \geq\}$. For each time $t\in[\Lambda]$ and connected component $C$ in $G_{d}(t)$, let $\Psi_d(\mathcal{G},t,C)=V(C)\cap F_t$ be the $d$\emph{-connected bag} at time $t$ of $\mathcal{G}$ and $C$, where $F_t$ is the bag at time $t$ of the VIM sequence of $\mathcal{G}$. Let $\mathcal{F}_d(t)=\{\Psi_d(\mathcal{G},t,C)\, :\, C\text{ a connected component of }G_d(t)\}$. The \emph{$d$-connected-vertex-interval-membership width} $\psi$ is given by $\psi_d(\mathcal{G})=\max_{t\in[\Lambda],\Psi\in\mathcal{F}_d(t)}|\Psi|$.
\end{defn}
They show that the two directions of connected-VIM width are incomparable, but both generalise VIM width as defined by Bumpus and Meeks. Christodoulou et al.~also consider a bidirectional connected-VIM width which generalises both connected-VIM widths.%
\begin{defn}[Bidirectional Connected-Vertex-Interval-Membership Width (Christodoulou et al.~\cite{christodoulou_making_2024})]
    Let $\psi_{\leq}(\mathcal{G})$, $\psi_{\geq}(\mathcal{G})$ be the $\leq$- and $\geq$-connected-VIM width of the temporal graph $\mathcal{G}$, respectively. Then, the bidirectional connected-vertex-interval-membership width $\psi_{\sim}(t)$ of a temporal graph $\mathcal{G}$ at time $t$ is
    
    $
       \psi_{\sim}(t) =\left\{ \begin{array}{ll} \max\{\psi_{\leq}(\mathcal{G}_{\leq}(t-1)),\psi_{\geq}(\mathcal{G}_{\geq}(t+1)), |F_t(\mathcal{G})|\} & \text{if }1<t<\Lambda\\
       \psi_{\geq}(\mathcal{G}) & \text{if } t=1\\
       \psi_{\leq}(\mathcal{G}) & \text{if }t=\Lambda
       \end{array}
       \right.        
    $\\
    where $F_t(\mathcal{G})$ is the bag at time $t$ of the VIM sequence of $\mathcal{G}$.
    The \emph{bidirectional connected-VIM width} of $\mathcal{G}$ is $\min_{t\in[\Lambda]}\psi_{\sim}(t)$.
\end{defn}

We now define our generalisation of the existing interval-membership width parameters, namely tree-interval-membership (TIM) width.  
Unlike the sets in the VIM decomposition which are linearly ordered (one set is associated with each timestep), the bags of a TIM decomposition are indexed by an arbitrary directed tree (a directed graph whose underlying graph is a tree); moreover, there can be multiple bags associated with \emph{every} timestep, whereas for each of the parameters introduced by Christodoulou~et~al. there is some time associated with a single bag.  

The utility of our new parameter comes from noticing that, in order to store enough information to solve many natural problems, vertices in different connected components of a temporal graph need not be placed in the same bag despite being active at the same time. Therefore, when solving temporal problems where we can consider the connected components of each snapshot independently, we can use TIM width in place of VIM width. In the remainder of this section, we assume all temporal graphs have connected underlying graphs. For clarity, we refer to \emph{vertices} of the original graph and \emph{nodes} of the indexing tree.
\begin{defn}[Tree-Interval-Membership Width]\label{def:TIMW}
    We say a triple $(T,B,\tau)$ is a \emph{tree-interval-membership decomposition} (TIM decomposition) of a temporal graph $\mathcal{G}$ with lifetime $\Lambda$ if $T$ is a labelled directed tree, where $B = \{B(s) : s\in V(T)\}$ is a collection of subsets of $V(\mathcal{G})$, called \emph{bags}, and $\tau: V(T)\to [\Lambda]$ is a function which labels each node with a time, satisfying:
    \begin{enumerate}
        \item For all vertices $v\in V(\mathcal{G})$ and times $t\in [\Lambda]$, there exists a unique node $i\in V(T)$ such that $\tau(i)=t$ and $v\in B(i)$.
        \item For all time-edges $(uv,t)\in\mathcal{E}(\mathcal{G})$, there exists a node $i\in V(T)$ such that $\{u,v\}\subseteq B(i)$ and $\tau(i)=t$.
        \item The set of arcs of $T$ is given by $\{(i,j) : B(i)\cap B(j)\neq \emptyset \text{ and } \tau(j)=\tau(i)+ 1\}$.
    \end{enumerate}
The \emph{width} of a TIM decomposition is defined to be $\max\{|B(s)|:s\in V(T)\}$. The \emph{TIM width} of a temporal graph $\mathcal{G}$ is the minimum $\phi$ such that $G$ has a TIM decomposition of width $\phi$. 
\end{defn}

We leave open whether the TIM width of a temporal graph is computable in polynomial time, but are optimistic that choosing the grouping of vertices in bags will require an exponential factor only in the width and not in the size of the input.
\begin{figure}[ht]
    \centering
    \includegraphics[width=0.7\linewidth,page=15]{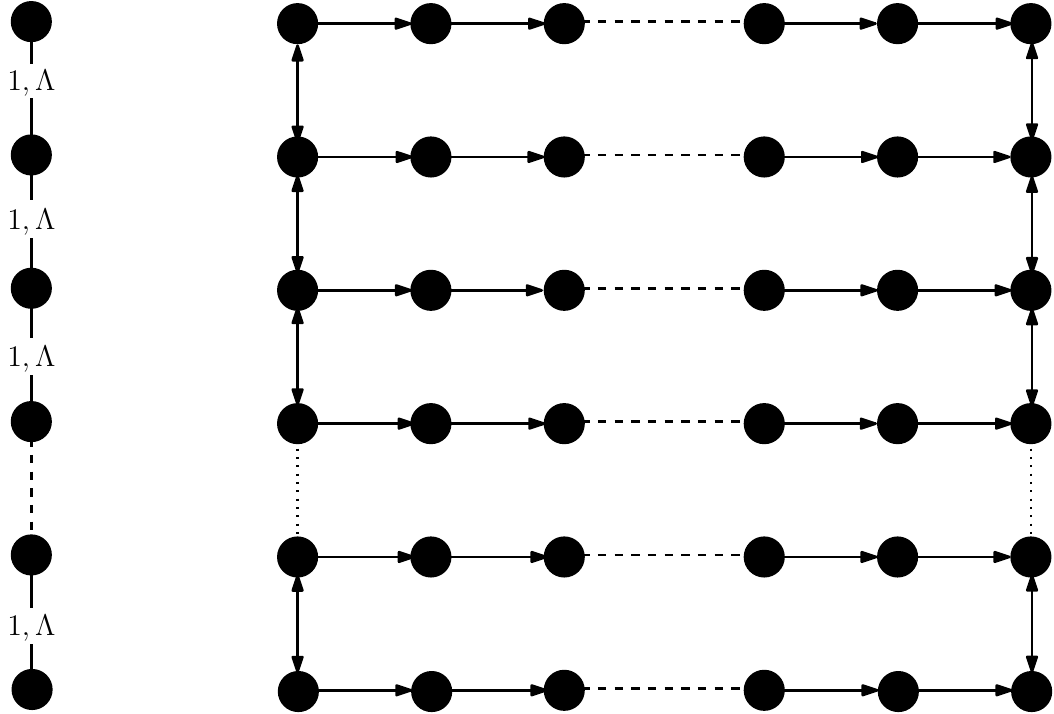}
    \caption{(A) An example temporal graph $\mathcal{G}$, where the number(s) on an edge indicates the time(s) at which it is active. Note that this graph has lifetime~5. The VIM sequence (B) of $\mathcal{G}$, and a TIM decomposition (C) of $\mathcal{G}$.}
    \label{fig:eg-both}
\end{figure}

Figure~\ref{fig:eg-both} gives an example of a temporal graph, and a comparison of its VIM sequence and a TIM decomposition of the graph. We may abuse notation by referring to $t$ as the label of a bag $B(i)$ when $\tau(i)=t$. We now compare TIM width and VIM width with some related parameters. A diagram depicting a hierarchy of parameters can be seen in Figure~\ref{fig:hierarchy}.

\begin{toappendix}
This definition gives us some simple observations.
    
\end{toappendix}



\begin{toappendix}
 Based on our definition of TIM width (Definition~\ref{def:TIMW}), we have some observations.
\begin{observation}\label{obs:tim-node-bound}
    There are at most $n\Lambda$ nodes in a TIM decomposition.
\end{observation}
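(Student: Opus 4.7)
The plan is to bound the number of bags at each fixed timestep and then sum over all $\Lambda$ timesteps. The key observation is that property~1 of Definition~\ref{def:TIMW} forces every vertex of $\mathcal{G}$ to appear in exactly one bag at each time $t \in [\Lambda]$, so the bags associated with any given timestep partition $V(\mathcal{G})$ (restricted to that time). If we know every bag is non-empty, an immediate counting argument yields at most $n$ bags per timestep and hence at most $n\Lambda$ nodes overall.

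First I would verify that bags must be non-empty, which is the only step requiring more than a one-line argument. Suppose some node $i \in V(T)$ had $B(i) = \emptyset$. Then by property~3, the set of arcs incident to $i$ is empty: an arc $(i,j)$ or $(j,i)$ exists only when $B(i) \cap B(j) \neq \emptyset$, which is impossible if $B(i)$ is empty. So $i$ would be an isolated node in $T$. Since $T$ is a tree (hence connected when it has more than one node), this forces $T$ to consist solely of $i$; in that case the decomposition contains a single node, so the bound $n\Lambda \geq 1$ holds trivially. Otherwise, every bag is non-empty.

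Next I would fix a time $t \in [\Lambda]$ and consider the collection $\mathcal{N}_t = \{i \in V(T) : \tau(B(i)) = t\}$. By property~1, each vertex $v \in V(\mathcal{G})$ belongs to exactly one bag $B(i)$ with $i \in \mathcal{N}_t$, so $\{B(i) : i \in \mathcal{N}_t\}$ forms a partition of $V(\mathcal{G})$. Since each such bag is non-empty by the argument above, the number of parts is at most $|V(\mathcal{G})| = n$, giving $|\mathcal{N}_t| \leq n$. Summing over the $\Lambda$ possible timestep labels yields $|V(T)| = \sum_{t \in [\Lambda]} |\mathcal{N}_t| \leq n\Lambda$, as required.

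The argument is essentially a one-paragraph counting proof and I do not expect any serious obstacle; the only subtlety is the brief non-emptiness check, which is a direct consequence of properties~1 and~3 combined with the fact that the indexing graph $T$ is a tree.
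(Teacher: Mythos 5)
Your proof is correct and is exactly the counting argument the paper intends: the paper states this as an unproved observation following immediately from property~1 of Definition~\ref{def:TIMW}, namely that each vertex lies in exactly one bag per timestep, so there are at most $n$ (non-empty) bags for each of the $\Lambda$ time labels. Your extra care about empty bags (they would be isolated in $T$, contradicting connectedness unless $T$ is a single node) is a sound way to close the one gap the one-line version leaves open.
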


\begin{observation}\label{obs:tim-all-one-bag}
    The decomposition found by creating one bag at every timestep containing all vertices in a temporal graph is a TIM decomposition.
\end{observation}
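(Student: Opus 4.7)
The plan is to exhibit the described decomposition explicitly and then verify each of the three clauses of Definition~\ref{def:TIMW} directly. Let $\mathcal{G}$ have lifetime $\Lambda$. Take $T$ to be the directed path on nodes $i_1, i_2, \ldots, i_\Lambda$ with arcs $(i_t, i_{t+1})$ for $1\leq t < \Lambda$, set $B(i_t) = V(\mathcal{G})$ for each $t \in [\Lambda]$, and define $\tau(B(i_t)) = t$. Since a directed path is a (labelled, directed) tree, $(T, B, \tau)$ has the right syntactic form for a TIM decomposition; it then remains only to check it satisfies the three defining conditions.

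For condition 1, fix any vertex $v \in V(\mathcal{G})$ and time $t \in [\Lambda]$: the node $i_t$ is the \emph{only} node of $T$ with $\tau(B(i_t)) = t$, and it does contain $v$ because $B(i_t) = V(\mathcal{G})$, so $v$ appears in a unique bag labelled $t$. For condition 2, any time-edge $(uv, t) \in \mathcal{E}(\mathcal{G})$ satisfies $\{u,v\} \subseteq V(\mathcal{G}) = B(i_t)$, and uniqueness of the index again follows because $i_t$ is the only node with label $t$. For condition 3, consecutive bags $B(i_t)$ and $B(i_{t+1})$ both equal $V(\mathcal{G})$, so their intersection is nonempty (assuming $\mathcal{G}$ has at least one vertex, which is implicit since $\Lambda \geq 1$ requires some edge to be active); meanwhile, for any pair of nodes $i_t, i_{t'}$ with $t' \neq t+1$ the condition $\tau(B(i_{t'})) = \tau(B(i_t)) + 1$ fails, so no other arcs are prescribed. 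Hence the arc set produced by condition 3 is exactly $\{(i_t, i_{t+1}) : 1 \leq t < \Lambda\}$, matching the arcs of $T$.

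There is no real obstacle here; the only thing to be careful about is that the arc set in condition 3 is defined by a rule rather than given as data, so one must confirm that this rule reproduces the path structure of $T$ and does not force any additional arcs — which is immediate since all bags are identical and only consecutive pairs have labels differing by exactly one.
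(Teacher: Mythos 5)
Your proof is correct: the paper states this as an unproved observation, and your direct verification of the three clauses of Definition~\ref{def:TIMW} for the all-vertices-per-timestep path decomposition is exactly the intended (and only reasonable) argument. The one point you rightly flag — that the arc set in condition 3 is determined by a rule and must be checked to reproduce the path — is the only non-trivial step, and you handle it correctly.
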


We refer to the \emph{neighbours} of a node $s$ as the nodes $s'$ such that either the arc $ss'$ or the arc $s's$ exists.

\begin{observation}\label{obs:tim-neighbours}
    For a bag $B(s)$ of a TIM decomposition $(T,B,\tau)$ with time $t=\tau(s)$, any node $s'$ neighbouring $s$ in $T$ must be assigned $t'\in\{t+1,t-1\}$ by $\tau$.
\end{observation}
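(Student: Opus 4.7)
The plan is to derive the observation directly from condition~3 of Definition~\ref{def:TIMW}, which fully specifies the arc set of the indexing tree $T$ in terms of bag overlap and timestep difference. There are no clever combinatorial ideas needed; the proof is essentially a case analysis on the direction of the arc connecting $s$ to one of its neighbours.

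First I would fix a neighbour $s'$ of $s$ in the underlying undirected graph of $T$; since $T$ is a directed tree, exactly one of the ordered pairs $(s,s')$ or $(s',s)$ is an arc of $T$. In the first case, condition~3 of Definition~\ref{def:TIMW} immediately yields $\tau(B(s')) = \tau(B(s)) + 1 = t+1$, because that condition states that every arc $(i,j)$ of $T$ satisfies $\tau(B(j)) = \tau(B(i)) + 1$. In the second case, the same condition applied to $(s',s)$ gives $\tau(B(s)) = \tau(B(s')) + 1$, whence $\tau(B(s')) = t-1$. Combining the two cases establishes $\tau(B(s')) \in \{t-1, t+1\}$, as required.

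There is no real obstacle here: the observation is essentially a restatement of the third clause of the TIM decomposition definition, once one is careful to account for the fact that neighbours of $s$ in a directed tree may be either out- or in-neighbours. The only subtle point worth flagging in the write-up is that condition~3 is stated as a set equality (not merely an inclusion), so every arc incident to $s$ is guaranteed to have its endpoints' timesteps differ by exactly one, ruling out any other possibility for $\tau(B(s'))$.
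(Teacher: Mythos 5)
Your argument is correct and matches the paper's intent exactly: the paper states this as an unproved observation following immediately from Definition~\ref{def:TIMW}, and your case analysis on whether the arc joining $s$ to its neighbour is oriented as $(s,s')$ or $(s',s)$, combined with the fact that condition~3 specifies the arc set as an equality, is precisely the reasoning being left implicit. Nothing is missing.
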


\begin{observation}\label{obs:tim-no-children}
    For any node $s$ in a TIM decomposition, there are at most $2\phi$ neighbours of $s$, where $\phi$ is the width of the decomposition.
\end{observation}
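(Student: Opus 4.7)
The plan is to argue directly from the definition by counting, separately, the neighbours at time $t+1$ and at time $t-1$, where $t = \tau(B(s))$. By Observation~\ref{obs:tim-neighbours}, every neighbour of $s$ falls into one of these two categories, so it suffices to bound each by $\phi$.

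First I would fix a neighbour $s'$ of $s$ at time $t+1$. By property~3 of Definition~\ref{def:TIMW}, an arc between $s$ and $s'$ (in either direction) exists precisely when $B(s) \cap B(s') \neq \emptyset$, so there is some vertex $v \in B(s) \cap B(s')$ that witnesses this adjacency. The key observation is that property~1 guarantees that $v$ appears in a \emph{unique} bag at time $t+1$; therefore the witness $v$ determines the neighbour $s'$ uniquely. This gives an injection from the set of neighbours of $s$ at time $t+1$ into $B(s)$, bounding this set by $|B(s)| \le \phi$.

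An identical argument, swapping the roles of $t$ and $t+1$ with $t$ and $t-1$, bounds the number of neighbours at time $t-1$ by $|B(s)| \le \phi$. Summing the two contributions gives the claimed bound of $2\phi$.

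There is no real obstacle here; the statement is an immediate consequence of the uniqueness clause in property~1 together with the time-consecutive adjacency of Observation~\ref{obs:tim-neighbours}. The only care required is in noting that the arcs in the tree $T$ are directed, so ``neighbours'' must be read as in-neighbours plus out-neighbours, which is exactly what the two cases $t+1$ and $t-1$ account for.
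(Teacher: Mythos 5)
Your proof is correct and matches the intended justification: the paper states this as an unproved observation, and the evident argument is exactly your counting — each vertex of $B(s)$ lies in a unique bag at time $t+1$ and a unique bag at time $t-1$ (property~1 of Definition~\ref{def:TIMW}), so witnesses of adjacency give an injection from neighbours into two copies of $B(s)$, yielding the bound $2\phi$.
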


\begin{observation}\label{obs:tim-directed-path}
    For all vertices $v\in V(\mathcal{G})$, $T[\{s:v\in B(s)\}]$ is a directed path. That is, for each vertex, the subgraph of the TIM decomposition obtained by deleting every node not containing $v$ in its bag from $T$ is a directed path.
\end{observation}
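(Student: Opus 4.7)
The plan is to show that for any fixed vertex $v$, the set $S_v := \{s : v \in B(s)\}$ naturally splits into exactly one node per timestep, and that the arcs forced by condition~3 of Definition~\ref{def:TIMW} chain these nodes together in time-order with no branching. First I would invoke property~1 of the definition: for every $t \in [\Lambda]$ there is a unique node $s_t \in V(T)$ with $\tau(B(s_t)) = t$ and $v \in B(s_t)$. Hence $S_v = \{s_1, s_2, \ldots, s_\Lambda\}$ consists of exactly $\Lambda$ distinct nodes, one per timestep.

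Next I would show that these $\Lambda$ nodes are linked by arcs of $T$ in ascending time order. For each $t \in \{1, \ldots, \Lambda-1\}$, the vertex $v$ lies in both $B(s_t)$ and $B(s_{t+1})$, so $B(s_t) \cap B(s_{t+1}) \neq \emptyset$; combined with $\tau(B(s_{t+1})) = \tau(B(s_t)) + 1$, property~3 guarantees the arc $(s_t, s_{t+1})$ belongs to $T$. This gives a directed path $s_1 \to s_2 \to \cdots \to s_\Lambda$ contained in $T[S_v]$.

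It then remains to verify that $T[S_v]$ contains no additional arcs. Any arc of $T$ between two nodes $s_t$ and $s_{t'}$ in $S_v$ must, by property~3, satisfy $\tau(B(s_{t'})) = \tau(B(s_t)) + 1$, i.e.\ $t' = t+1$. Thus the only arcs induced on $S_v$ are precisely those of the chain constructed above, and $T[S_v]$ is exactly the directed path $s_1 \to s_2 \to \cdots \to s_\Lambda$.

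I do not anticipate any substantive obstacle: the argument is a direct bookkeeping exercise applying conditions~1 and~3 of Definition~\ref{def:TIMW}. The only subtlety is the correct reading of condition~1, namely that every vertex appears in exactly one bag at \emph{every} timestep (consistent with Observation~\ref{obs:tim-all-one-bag}), which guarantees that $S_v$ has exactly one representative at each of the $\Lambda$ timesteps rather than possibly skipping times and producing a disconnected induced subgraph.
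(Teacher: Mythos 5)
Your proof is correct and is exactly the direct bookkeeping argument from conditions~1 and~3 of Definition~\ref{def:TIMW} that the paper implicitly relies on (the paper states this as an unproved observation). Your closing remark about the reading of condition~1 is also the right one, and it is consistent with Observation~\ref{obs:tim-all-one-bag}.
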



As a result of the above observation, we can extend this reasoning to entire connected components of a snapshot of a temporal graph.

\begin{observation}\label{obs:tim-connected-component}
    Let $u$ and $v$ be vertices in the same connected component of $G_t$ for some $t$. Then, if $u\in B(i)$ and $\tau(i)=t$, $v\in B(i)$.
\end{observation}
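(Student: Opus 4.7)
The plan is to prove the statement by induction along a path in the snapshot $G_t$ connecting the two vertices. Since $u$ and $v$ lie in the same connected component of $G_t$, there exists a path $u = w_0, w_1, \dots, w_k = v$ in $G_t$. The heart of the argument will be to show that consecutive vertices on this path belong to the same bag at time $t$; transitivity then immediately yields the statement.

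First I would appeal to condition~2 of Definition~\ref{def:TIMW}: every edge $w_j w_{j+1}$ on the path is active at time $t$, so $(w_j w_{j+1}, t)$ is a time-edge of $\mathcal{G}$, and hence there is a unique node $i_j \in V(T)$ such that $\{w_j, w_{j+1}\} \subseteq B(i_j)$ and $\tau(B(i_j)) = t$. In particular, both $w_j$ and $w_{j+1}$ lie in $B(i_j)$ at time $t$. Next I would invoke condition~1 of Definition~\ref{def:TIMW} (or equivalently Observation~\ref{obs:tim-directed-path}): for any fixed vertex $w$ and time $t$, there is a unique bag at time $t$ containing $w$. Applying this to $w_{j+1}$, the bag $B(i_j)$ must coincide with the bag $B(i_{j+1})$ established for the next edge on the path, so $i_j = i_{j+1}$ for all $0 \le j < k-1$.

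Iterating along the path, the common bag index $i_0 = i_1 = \cdots = i_{k-1}$ gives a single node $i$ with $\tau(B(i)) = t$ whose bag contains all of $w_0, \dots, w_k$. In the special case $k = 0$ (so $u = v$) the statement is vacuous. Finally, the hypothesis $u \in B(i)$ and $\tau(B(i)) = t$, together with the uniqueness of the bag containing $u$ at time $t$ from condition~1, forces this common bag to be precisely $B(i)$, and therefore $v \in B(i)$ as required.

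I do not anticipate any genuine obstacle: the argument is essentially a direct unpacking of conditions~1 and~2 of the TIM decomposition definition, with Observation~\ref{obs:tim-timeedge-appear-once} providing helpful intuition for why each step's bag identification is well-defined. The only care needed is to handle the base case and to be explicit about the uniqueness step when propagating the bag identity across successive edges of the path.
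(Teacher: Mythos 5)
Your proof is correct and matches the paper's intended reasoning: the paper states this as an unproved observation that "extends" Observation~\ref{obs:tim-timeedge-appear-once} to whole components, and your path-propagation argument via conditions~1 and~2 of Definition~\ref{def:TIMW} is exactly the spelling-out of that extension. (The base case $k=0$ is trivial rather than vacuous, since $v=u\in B(i)$ directly, but that is a wording quibble only.)
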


We conclude our observations with a note on edges which are active more than once.

\begin{observation}\label{obs:tim:nonsimple-edge}
    If $uv$ is an edge in $\mathcal{G}_{\downarrow}$ and $u$ and $v$ are in different bags of a TIM decomposition at time $t$, then either $t>\max(\lambda(uv))$ or $t<\min(\lambda(uv))$.
\end{observation}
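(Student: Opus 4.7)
The plan is to prove the contrapositive: if $\min\lambda(uv)\le t\le \max\lambda(uv)$, then $u$ and $v$ lie in the same bag of the TIM decomposition at time $t$. Set $t_1=\min\lambda(uv)$ and $t_2=\max\lambda(uv)$. Applying property 2 of Definition~\ref{def:TIMW} to the time-edges $(uv,t_1)$ and $(uv,t_2)$ yields bags $B_1,B_2$ with $\tau(B_1)=t_1$, $\tau(B_2)=t_2$, each containing both $u$ and $v$. If $t_1=t_2$ there is nothing to show, so assume $t_1<t_2$.

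Next I would invoke Observation~\ref{obs:tim-directed-path}, which tells us that the bags containing $u$ induce a directed path $P_u$ in $T$, and likewise the bags containing $v$ induce a directed path $P_v$. Since $u\in B_1\cap B_2$, the path $P_u$ visits both $B_1$ and $B_2$, and analogously for $P_v$. Because property 1 guarantees that each vertex appears in exactly one bag per timestep, the sub-path of $P_u$ from $B_1$ to $B_2$ traverses one bag at each time $t_1,t_1+1,\ldots,t_2$; by the arc definition in property 3 these must be consecutive nodes of $T$, so this sub-path is a directed path from $B_1$ to $B_2$ in $T$. The same holds for $P_v$.

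The crux is now a standard tree argument: in the (undirected) tree underlying $T$, there is a unique path between any two nodes, so the two $B_1$-to-$B_2$ sub-paths obtained from $P_u$ and $P_v$ must coincide node-for-node. Consequently, for every $t$ with $t_1\le t\le t_2$, the unique node on this common sub-path with $\tau$-label $t$ has a bag that contains both $u$ and $v$. Property 1 then forces this bag to be the unique bag containing $u$ at time $t$ and also the unique bag containing $v$ at time $t$, so $u$ and $v$ share a bag at every such $t$. Contraposing recovers the statement.

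The only subtle point, and thus the main obstacle, is checking the directionality carefully enough to legitimately invoke tree-uniqueness of paths: one needs to observe that by property 3 the $\tau$-labels strictly increase along any directed path of $T$, so the sub-paths extracted from $P_u$ and $P_v$ are in fact directed from $B_1$ to $B_2$ and have length exactly $t_2-t_1$; after that, uniqueness in the underlying tree immediately identifies them. All remaining work is bookkeeping against the three properties of Definition~\ref{def:TIMW} and Observation~\ref{obs:tim-directed-path}.
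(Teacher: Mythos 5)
Your proof is correct. The paper states this as an unproved observation (an immediate consequence of Definition~\ref{def:TIMW}), and your argument --- locating bags containing both endpoints at times $\min\lambda(uv)$ and $\max\lambda(uv)$ via property 2, then using Observation~\ref{obs:tim-directed-path} together with uniqueness of paths in the underlying tree to force the two directed subpaths to coincide at every intermediate time --- is exactly the natural formalisation of what the paper leaves implicit.
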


  \end{toappendix}

\begin{figure}[ht]
    \centering
    \hspace*{2cm}
    \includegraphics[width=0.75\linewidth,page=16]{figures.pdf}
    \caption{A hierarchy of parameters. There is an arc from parameter A to parameter B if bounding A implies that B is also bounded. The relationships are strict -- for every arc from A to B, there exists an infinite family of graphs for which B is bounded and A is unbounded. The parameters we are focussing on are highlighted with boxes.}
     \label{fig:hierarchy}
\end{figure}



It is straightforward to see that the VIM width of a temporal graph is always at least the TIM width, since we can turn a VIM sequence of a temporal graph $\mathcal{G}$ into a TIM decomposition of the same width by letting each set $F_t$ be a bag labelled with time $t$, and placing every vertex not active at time $t$ in a singleton bag. It transpires that TIM width also lower bounds all three parameters introduced by Christodoulou et al.~\cite{christodoulou_making_2024}.

\begin{figure}[ht]
    \centering
    \includegraphics[width=0.5\linewidth,page=6]{figures.pdf}
    \caption{A comparison of the bags of a VIM sequence (A), a bidirectional connected-VIM sequence (B), and a TIM decomposition (C).  Dashed boxes group the bags of the decompositions which are labelled with the same time. The point here is that as the bags of the decompositions decrease in size, the structure of the decomposition graph becomes more unruly. In decomposition (B), there is a bag from which all bags branch out. If the image were to depict a $\leq$- or $\geq$-connected-VIM decomposition instead, the bag would be at either the start or end, respectively.}
    \label{fig:comparison}
\end{figure}

\begin{lemmarep}
\label{lem:tim-bounds-cimw}
    Let $\mathcal{G}$ be a temporal graph such that $\min\{\psi_{\leq}, \psi_{\geq}\}=k$, where $\psi_d$ is the $d$-connected-VIM width. Then, $\mathcal{G}$ has TIM width at most $k$.
\end{lemmarep}
\begin{proof}
    We have two cases to consider: $\psi_{\leq}= k$ and $\psi_{\geq}\geq k$, or $\psi_{\leq}\geq k$ and $\psi_{\geq}= k$. We prove both simultaneously by substituting $\leq$ or $\geq$ for $d$.
    
    If $\min\{\psi_{\leq}, \psi_{\geq}\}=k$, then we know that that for all $t\in[\Lambda]$ and connected components $C$ of $G_{d}(t)$, $|V(C)\cap F_t|\leq k$. Consider a decomposition such that, for all $t\in [\Lambda]$, the bags labelled with time $t$ consist of a bag containing $V(C)\cap F_t$ for each connected component $C$ of $G_{d}(t)$; all vertices in $V(G_{d}(t))\setminus F_t$ are placed in singleton bags; and arcs exist from a bag at time $t$ to a bag at time $t+1$ if their intersection is non-empty. We claim that this is a TIM decomposition. Note that, under this construction, each vertex in $\mathcal{G}$ appears in exactly one bag labelled with each time. Furthermore, all time-edges $(e,t)$ appear in a bag at time $t$. 
    
    What remains to check is that the underlying graph of this decomposition is a tree. Suppose, for a contradiction, that $u$ and $v$ are in the same bag of the decomposition at times $t_1$ and $t_3$, and a different bag at time $t_2$, where $t_1<t_2<t_3$. This implies that $u$ and $v$ are in different connected components of $G_d(t_2)$, but in the same connected component of $G_d(t_1)$ and $G_d(t_3)$. This cannot be possible, since $G_d(t)$ is found by taking the union of all edges which appear either up to and including $t$ or from $t$ onwards. Therefore, there cannot be two vertices $u$ and $v$ such that they are in the same bag of the decomposition at times $t_1$ and $t_3$, and a different bag at time $t_2$, for some times $t_1<t_2<t_3$. Thus, there are no cycles in the underlying (undirected) graph of the decomposition, and this is in fact a TIM decomposition. Therefore, the TIM width of $\mathcal{G}$ is at most the size of the largest bag; that is, $k$.
\end{proof}
\begin{lemmarep}\label{lem:tim-bounds-bi-cimw}
    Let $\mathcal{G}$ be a temporal graph such that $\psi_{\sim}(\mathcal{G})=k$. Then, $\mathcal{G}$ has TIM width at most $k$.
\end{lemmarep}
\begin{proof}
    We have 3 cases to consider. In the first two, the minimum value found when calculating $\psi_{\sim}$ is either $\psi_{\leq}(\mathcal{G})$ or $\psi_{\geq}(\mathcal{G})$. If this is the case, we leverage Lemma~\ref{lem:tim-bounds-cimw}.

    This leaves us with the case where there exists a $t$ such that $1<t<\Lambda$, and $t$ minimises $\max\{\psi_{\leq}(\mathcal{G}_{\leq}(t-1)),\psi_{\geq}(\mathcal{G}_{\geq}(t+1)), F_t\}=k$. We claim that the decomposition $(T,B,\tau)$ such that:
    \begin{itemize}
        \item for all $t'\in [t-1]$, the bags of nodes labelled with time $t'$ by $\tau$ consist of a bag containing $V(C)\cap F_t$ for each connected component $C$ of $G_{\leq}(t')$; all vertices in $V(G_{\leq}(t'))\setminus F_{t'}$ are placed in singleton bags;
        \item the bags assigned time $t$ by $\tau$ consist of a bag containing all vertices in $F_{t}$, and the remaining vertices in singleton bags;
        \item for all $t''\in [t+1,\Lambda]$, the bags labelled with time $t''$ by $\tau$ consist of a bag containing $V(C)\cap F_{t''}$ for each connected component of $G_{\geq}(t'')$; all vertices in $V(G_{\geq}(t'))\setminus F_{t''}$ are each placed in singleton bags; 
        \item and arcs exist from a bag at time $t$ to a bag at time $t+1$ if their intersection is non-empty
    \end{itemize}
    is a TIM decomposition. 
    
    It is clear from the construction that each vertex in $V(\mathcal{G})$ appears exactly one in a bag at each time, and each time-edge $(e,t')$ in $\mathcal{E}(\mathcal{G})$ appears in a bag labelled with time $t'$. What remains is to show that there do not exist vertices $u$ and $v$ and times $t_1<t_2<t_3$ such that $u$ and $v$ are in the same bag of the decomposition at times $t_1$ and $t_3$, and a different bag at time $t_2$. We know that no such times and vertices exist if $t_1>t$ or $t_3<t$ by our proof of Lemma~\ref{lem:tim-bounds-cimw}. This gives us two cases to consider: $t_2=t$, or $t_2\neq t$.

    If $t_2=t$, then $u$ and $v$ would be in different bags at time $t$. This implies that one of $u$ and $v$ is not in its active interval. Thus, it must be in a singleton bag for all $t'>t$ or all $t'<t$. This gives us a contradiction.

    If $t_2 \neq t$, then either $t_2<t$ and $u$ and $v$ are in the same connected component of $G_{\leq}(t_1)$, or $t_2>t$ and $u$ and $v$ are in the same connected component of $G_{\geq}(t_1)$. In the first case, this implies that $u$ and $v$ are in the same connected component of $G_{\leq}(t_2)$. In the second, this implies that $u$ and $v$ are in the same connected component of $G_{\geq}(t_2)$. This gives us a contradiction in both scenarios. Therefore, the decomposition described is a TIM decomposition of width $k$. Thus, $\mathcal{G}$ has TIM width at most $k$.
\end{proof}
Figure~\ref{fig:comparison} illustrates the form of the decompositions associated with VIM width, bidirectional connected-VIM width, and TIM width. 


In fact, the TIM width of a temporal graph can be arbitrarily smaller than its bidirectional connected-vertex-interval-membership width.  We illustrate this in Figure~\ref{fig:timw-cimw} by adapting an infinite family of graphs used by Christodoulou et al.~\cite[Figure~1]{christodoulou_making_2024} to demonstrate that $d$-connected-VIM width (with $d \in \{\leq,\geq\}$) can be arbitrarily smaller than VIM width. The temporal graph in this figure is such that each edge is active exactly once, each connected component at each time consists of at most two vertices, and the underlying graph is a tree. This gives us that the TIM width of the graph is 2. For the bidirectional connected-VIM width, we note that both $\psi_{\leq}(\mathcal{G})$ and $\psi_{\geq}(\mathcal{G})$ are both $2k$ since in the first (respectively, last) timestep the bag of the VIM sequence contains all vertices on the path and half of the leaves and they form a subgraph of a connected component of $\mathcal{G}_{\leq}(\Lambda)$ (respectively, $\mathcal{G}_{\geq}(1)$). Furthermore, for all times $t$ in $(1,\Lambda)$, the non-leaf vertices of $\mathcal{G}$ are in the bag $F_t$ of the VIM sequence of $\mathcal{G}$. Therefore, all such bags have cardinality $k$. For all times $t$ in $(1,\Lambda)$, $\psi_{\leq}(\mathcal{G}_{\leq}(t-1))=\psi_{\geq}(\mathcal{G}_{\geq}(t+1))=2$. Since we take the maximum of the bag of the VIM sequence and these two values, we get that the bidirectional connected-VIM width of this graph is $k$.

\begin{figure}[h!]
    \centering
    \includegraphics[width=0.35\linewidth,page=4]{figures.pdf}
    \caption{An example of a temporal graph $\mathcal{G}$ with TIM width 2 and $\psi_{\sim}(\mathcal{G})\geq k$
    The dashed edge replaces a path consisting of $k-4$ edges labelled with consecutive times.}
    \label{fig:timw-cimw}
\end{figure}

For the remainder of this section, we compare TIM and VIM width of a temporal graph $\mathcal{G}$ to structural parameters of the static expansion of $\mathcal{G}$. The static expansion of a temporal graph can be thought of as a static representation of a temporal graph, and it is also known as the time-expanded graph.
We use the definition of static expansion by Fluschnik et al.~\cite{fluschnik_as_2020}. An example can be seen in Figure~\ref{fig:eg2}.

\begin{defn}[Static expansion~\cite{fluschnik_as_2020}, Definition 2]
The \emph{static expansion} of a temporal graph $\mathcal{G}$ is a directed graph $\vec{H}:=(V',A)$, with vertices $V'=\{v_{t}\, :\, v\in V(\mathcal{G}), t\in [\Lambda]\}$ and arcs $A=A'\cup A_{\text{col}}$, where $A':=\{(u_{t},u'_{t}) \, :\, (uu',t)\in \mathcal{E}(\mathcal{G})\}$, and $A_{\text{col}}:=\{(u_{t},u_{t+1})\, :\, u\in V(\mathcal{G}),t\in [\Lambda-1]\}$.
\end{defn}

\begin{figure}
    \centering
    \includegraphics[width=0.5\linewidth, page=1]{figures.pdf}
    \caption{A path on $n$ vertices where all edges are only active at times 1 and $\Lambda$ and its static expansion. Dashed lines a portion of a (directed) path that is not pictured.}
    \label{fig:eg2}
\end{figure}

The static expansion of a temporal graph is a directed graph. To compare VIM width and TIM width to undirected static parameters, we consider the undirected static expansion. That is, for a temporal graph $\mathcal{G}$ with static expansion $\vec{H}=(V',A)$, let $H:=(V',E')$ be the undirected graph with the same vertex set as $H$ such that an edge $uv$ exists in $E'$ if there is an arc $(u,v)$ or $(v,u)$ in $A$.


The first static parameter we consider is \emph{treewidth}.
\begin{defn}[Tree Decomposition, Treewidth]
    We say a pair $(T,B)$ is a \emph{tree decomposition} of $G$ if $T$ is a tree and $B = \{B(s) : s\in V(T)\}$ is a collection of subsets of $V(G)$, called \emph{bags}, satisfying:
    \begin{enumerate}
        \item $V(G)=\bigcup_{s\in V(T)}{B(s)}$.
        \item $\forall uv \in E(G): \exists s \in V(T): \{u,v\}\subseteq B(s)$. That is, for each edge in the graph, there is at least one bag containing both of its endpoints.
        \item $\forall v\in V(G): T[\{s:v\in B(s)\}]$ is connected; for each vertex, the subgraph obtained by deleting every node not containing $v$ in its bag from $T$ is connected.
    \end{enumerate}
The \emph{width} of a tree decomposition is defined to be $\max\{|B(s)|:s\in V(T)\}-1$. The \emph{treewidth} of a graph $G$ is the minimum $\omega$ such that $G$ has a tree decomposition of with $\omega$. 
\end{defn}

We can now upper bound the treewidth of the undirected static expansion by the TIM width of the corresponding temporal graph. We will later show that we cannot achieve a similar lower bound. 
\begin{theoremrep}\label{thm:tim-tw-tpw}
    Suppose $\mathcal{G}$ is a temporal graph with TIM width $\phi$ and undirected static expansion $H$. Denote by $\mathrm{tw}(H)$ the treewidth of $H$. Then, $2\phi\geq \mathrm{tw}(H)+1$.
\end{theoremrep}

\begin{proof}
    We can see the relationship between TIM width and treewidth of the undirected static expansion directly. We begin by labelling all vertices in a bag $B(s)$ of a TIM decomposition $(T,B,\tau)$ with the time $\tau(s)$. Then, for each pair of adjacent bags in a TIM decomposition, we subdivide the edge between them and add the union of the adjacent bags to the new bag between them. This gives us a decomposition whose bags have at most double the number of elements as the bags in the TIM decomposition. Call this new decomposition $(T',B')$.
    
    We now assert that $(T',B')$ has the desired properties of a tree decomposition. Since each vertex appears exactly once at each time in a TIM decomposition, each vertex in the undirected static expansion must be in at least one bag of $(T',B')$. Furthermore, since all time-edges in $\mathcal{G}$ appear exactly once in a TIM decomposition of $\mathcal{G}$, the edges of the undirected static expansion which are between copies of distinct vertices in $\mathcal{G}$ must be in at least one bag of $(T',B')$. What remains is the edges between consecutive copies of the same vertices in the static expansion. Since we take the union of bags of adjacent nodes in the TIM decomposition to find $(T',B')$, and (by Definition~\ref{def:TIMW} and Observation~\ref{obs:tim-neighbours}) the copy of any vertex in $B(s)$ at the time before or after $\tau(s)$ must be in a bag of a neighbour of $s$. Therefore, all edges in the undirected static expansion are in a bag of $(T',B')$. Finally, we show that the subtree of $T'$ induced by the bags containing any vertex $v_t$ of the undirected static expansion of $\mathcal{G}$ is a connected graph. To see this, recall that every vertex appears exactly once in a bag labelled with each time in $(T,B,\tau)$. Since we construct $(T',B')$ by taking the union of adjacent bags of $(T,B,\tau)$, the subgraph of $T'$ induced by the bags containing any given vertex of the undirected static expansion must be a star; the node $s$ associated to the bag $B(s)$ containing the vertex in $(T,B,\tau)$ and a node for each neighbour $s'$ of $s$ into which we add the union of $B(s)$ and $B(s')$. Thus, $(T',B')$ is a tree decomposition of the undirected static expansion, and the treewidth of the undirected static expansion of $\mathcal{G}$ is at most $2\phi$ where $\phi$ is the TIM width of $\mathcal{G}$. 
\end{proof}

As can be seen in Figure~\ref{fig:comparison}, the bags of a VIM sequence form a path rather than a tree. This allows us to draw a comparison between VIM width and pathwidth -- a width measure which requires decomposition of the graph into a path rather than a tree.

\begin{defn}\label{def:pw}
    A \emph{path decomposition} is a tree decomposition such that the decomposition graph is a path. The \emph{pathwidth} of a graph is the minimum width of a path decomposition of the graph, where the width of a decomposition is the cardinality of the largest bag minus~1.
\end{defn}
Figure~\ref{fig:eg2} gives an example of a temporal graph with unbounded VIM and TIM width, and static expansion with pathwidth (and treewidth) 5.
\begin{toappendix}
    We now discuss how to construct a path decomposition of the undirected static expansion of a temporal graph based on the VIM sequence of the graph. Recall that we denote a vertex in the static expansion by $v_t$, where $v$ is the corresponding vertex in the temporal graph, and $t$ is the time with which it is labelled. 
\end{toappendix}

\begin{theoremrep}\label{thm:vimw-bounded-pw-static}
    Suppose $\mathcal{G}$ is a temporal graph with VIM width $\omega$ and undirected static expansion $H$. Denote by $\mathrm{pw}(H)$ the pathwidth of $H$. Then, $2\omega\geq \mathrm{pw}(H)+1$.
\end{theoremrep}
\begin{proof}
    We begin our construction by creating a sequence of bags such that each bag is associated to a bag of the VIM sequence. Refer to these bags as \emph{path} bags. Let $B$ be a path bag associated to a bag $B$' at time $t$ of the VIM sequence. Then $B$ contains a copy of each vertex in $B'$ at times $t$ and $t-1$ (unless $B'$ is at time 1, when it contains a single copy of the vertices in $B'$ at time 1). 

We now note that the subgraph of the undirected static expansion induced by the vertices which are not in a path bag consists of a set of disjoint paths. Denote by $\mathcal{P}$ the set of paths found by removing all vertices in a path bag from the undirected static expansion. The paths in $\mathcal{P}$ correspond to vertices forgotten or introduced by the bags in the VIM sequence. Observe that for all $P$ in $\mathcal{P}$, the vertices in $P$ are consecutive copies of one vertex $v$ in $\mathcal{G}$ and one endpoint of $P$ is $v_{1}$ or $v_{\Lambda}$ (or both if $v$ is isolated in $\mathcal{G}$). The other endpoint of $P$ must be adjacent to a vertex in a path bag. If $v$ is an isolated vertex, we can add a sequence of bags each containing two consecutive copies of $v$ to the beginning of the path decomposition. We claim that the set $\mathcal{P}$ does not add to the width of the path decomposition of the undirected static expansion. To see this recall that in each path bag, there are two copies of every vertex in the associated bag of the VIM sequence. Therefore, for each path bag $B$ with a set of pendant paths $\mathcal{P}_B$ in $\mathcal{P}$, we add 
\begin{itemize}
    \item for all paths $P\in \mathcal{P}_B$ whose endpoints are $v_t\in B$ and $v_1$, a sequence of bags before $B$ where the bags contain the same vertices apart from the copies of each such $v$ which are decremented by $1$ until we reach a bag containing $v_1$ and $v_2$; or
    \item for all paths $P\in \mathcal{P}_B$ whose endpoints are $v_t\in B$ and $v_{\Lambda}$, a sequence of bags before $B$ where the bags contain the same vertices apart from the copies of each such $v$ which are incremented by $1$ until we reach a bag containing $v_{\Lambda-1}$ and $v_{\Lambda}$.
\end{itemize}
Observe that this decomposition remains a path and that the largest bag in this decomposition contains twice as many vertices as the largest bag of the VIM sequence.

    Given the VIM sequence of $\mathcal{G}$, we initialise the path bags of the path decomposition of the undirected static expansion in $O(k\Lambda)$ time by making $\Lambda$ bags and adding at most $2k$ vertices to each. Between each consecutive pair, we add at most $2k$ sequences corresponding to pendant paths in $\mathcal{P}$.

    Recall that there are at most $2k$ vertices in any bag of the new decomposition if the VIM width of the original graph is $k$. 
    What remains to show is that these bags do in fact form a path decomposition of the undirected static expansion of $\mathcal{G}$. By Definition~\ref{def:pw}, we have two criteria to check: for all edges in the static expansion, is there a bag containing both endpoints? And, for all indices $i\leq j\leq k$, is it true that $X_i\cap X_k\subseteq X_j$? For the latter question, since we only introduce (and forget) each vertex once in the VIM sequence and the labels of the vertices are non-decreasing in the decomposition, all vertices of the undirected static expansion appear in a single interval of the decomposition constructed.
    Thus, if a vertex appears in bag $X_i$ and $X_k$ for $i\leq  k$, then it must also appear in each bag $X_j$ for $i\leq j\leq k$.

    To show that the first property holds, we note that for all time edges $(vu,t)$ in $\mathcal{G}$, the bag $F_t$ of the VIM sequence must contain both $v$ and $u$. Thus, there is a path bag containing the corresponding edge in the undirected static expansion. Furthermore, for every vertex $v\in V(\mathcal{G})$ and time $t\in[\Lambda]$, there is at least one bag containing the pair $v_{t-1},v_t$ in our decomposition. Since we have a bag of our decomposition graph consisting of $\{v_{t-1}, v_t \, : \, v\in F_t\}$, and the edges of the undirected static expansion of $\mathcal{G}$ are of the form $\{v_tv_{t+1} \,:\, v\in V(\mathcal{G})\}\cup \{v_{t}u_{t} \, : \, (vu,t)\in \mathcal{E}(\mathcal{G})\}$ the endpoints of all edges in the undirected static expansion of $\mathcal{G}$ must appear together in at least one bag in our decomposition. Thus, we have constructed a path decomposition of the undirected static expansion of $\mathcal{G}$ of width at most $2k-1$.
\end{proof}

\begin{toappendix}
    See Figure~\ref{fig:eg-both} for an example temporal graph and its VIM sequence. An example path decomposition of the undirected static expansion is given in Figure~\ref{fig:pw-eg}.

\begin{figure}[ht]
    \centering
    \includegraphics[width=0.9\linewidth,page=13]{figures.pdf}
    \caption{A path decomposition of the temporal graph in Figure~\ref{fig:eg-both} constructed using the method given in Theorem~\ref{thm:vimw-bounded-pw-static} and the VIM sequence in Figure~\ref{fig:eg-both}. Path bags are double boxed for emphasis.}
    \label{fig:pw-eg}
\end{figure}

Next we show that there exist temporal graphs whose undirected static expansion has small pathwidth and have unbounded VIM width. Note that this also implies that the treewidth of the undirected static expansion is bounded and that the TIM width is unbounded. For simplicity, consider a temporal graph whose underlying graph is a path and all edges are active only at time $1$. Clearly the VIM and TIM width of this graph are $n$. However, the undirected static expansion of this graph is the same as its underlying graph. This is a path, and so has path- and treewidth $1$.

\subsection{Algorithmic distinctions}
\end{toappendix}

Courcelle's Theorem~\cite{courcelle1997expression} implies that any temporal problem which is expressible using an MSO formula of length $l$ on the undirected static expansion is in FPT with respect to $l$ and treewidth of the undirected static expansion combined; implying inclusion in FPT with respect to $l$ and either TIM or VIM combined. We provide a simpler approach for proving tractability, likely with a faster runtime. Furthermore, we now show that we can distinguish between the algorithmic power of TIM width and the treewidth of the undirected static expansion using a temporal variant of \textsc{Equitable Connected Partition}. This variant looks for a partition of vertices such that the parts are close in size, and for any pair of vertices in the same part there exists a nonstrict temporal path from one to the other and vice versa.
When the lifetime of the input temporal graph is~$1$, we recover the static problem, which is known to be W[1]-hard with respect to the number of partition classes, pathwidth, and feedback vertex number combined~\cite{enciso2009makes}. We show that, given a TIM decomposition, the problem is in FPT parameterised by TIM width and number of parts combined. Consequently, there exist problems for which Courcelle's Theorem is insufficient to show tractability with respect to TIM width. To distinguish between TIM and VIM width algorithmically, we show that a temporal variant of \textsc{Firefighter} remains NP-hard on graphs of bounded TIM width; resolving an open problem posed by Christodoulou et al.~\cite{christodoulou_making_2024}. This problem was shown to be in FPT with respect to VIM width by Hand et al.~\cite{hand_making_2022}. We leave open whether there is a problem which is in FPT with respect to bidirectional connected-VIM width and remains hard on graphs with bounded TIM width. 

\begin{toappendix}
We now explore distinguishing the parameters in terms of their algorithmic power. To do this, we use a temporal analogue of \textsc{Equitable Connected Partition} which asks for a partition of vertices into at most $h$ classes such that the pairwise difference of the cardinalities of the classes is at most one, and each class induces a connected subgraph. To turn this into a temporal problem, we must define \emph{temporal connectivity}. This is a widely studied property of temporal (sub)graphs~\cite{akrida_complexity_2017,axiotis2016size,balev2024temporally,casteigts2020robustness,christiann2024inefficiently,klobas_complexity_2022}. Here we use nonstrict temporal paths and we say that a temporal graph $\mathcal{G}$ is temporally connected if, for all pairs of vertices $u$, $v\in V(\mathcal{G})$, there is a nonstrict temporal path from $u$ to $v$ and a nonstrict temporal path from $v$ to $u$ in $\mathcal{G}$. We say a subgraph $\mathcal{G'}$ of $\mathcal{G}$ is temporally connected if, for all pairs of vertices $u$, $v\in V(\mathcal{G'})$, there is a nonstrict temporal path from $u$ to $v$ and a nonstrict temporal path from $v$ to $u$ in $\mathcal{G}$. We refer to the pair $u$ and $v$ as \emph{mutually reachable}. Recall that a path on a temporal graph is a nonstrict temporal path if the edges in the path appear at non-decreasing times. We define the problem as follows. 

\decisionproblem{Weak Nonstrict Equitable Temporally Connected Partition (Weak NS ETCP)}{A temporal graph $\mathcal{G}$ and an integer $h$.}{Is there a partition of the vertices into $h$ classes $V_1,\ldots,V_h$ such that for all pairs $i$, $j$, $||V_i|-|V_j||\leq 1$ and each class induces a temporally connected subgraph?}

The word ``weak'' here refers to the fact that we allow the temporal path from one vertex in a part to another to traverse vertices not in that part.

Note that, when the lifetime of the input temporal graph is~$1$, we recover the static problem from \textsc{Weak NS ETCP}. Furthermore, when $\Lambda=1$, the undirected static expansion is the same as the underlying graph of the temporal graph. Therefore, since \textsc{Equitable Connected Partition} is W[1]-hard with respect to treewidth of the input graph and the number of partition classes combined~\cite{enciso2009makes}, \textsc{Weak NS ETCP} is W[1]-hard with respect to treewidth of the undirected static expansion and the number of partition classes combined. In contrast, we show that \textsc{Weak NS ETCP} is in FPT with respect to TIM width and the number of partition classes combined.
To do this, we prove some intermediate lemmas.

\begin{lemmarep}\label{lem:EP-same-bag}
    Let $v$ and $u$ be two vertices in a temporal graph $\mathcal{G}$ which are mutually reachable. Then there exists a bag of any TIM decomposition of $\mathcal{G}$ containing both $v$ and $u$.
\end{lemmarep}
\begin{proof}
    We begin by showing that if there is a nonstrict temporal path from a vertex $v$ to a vertex $u$ in $\mathcal{G}$, there is a corresponding directed path in any TIM decomposition $(T,B,\tau)$ of $\mathcal{G}$ from a node of a bag containing $v$ to a node of a bag containing $u$. Recall that if two vertices are in the same connected component at a given time $t$, then they must be in the same bag labelled with $t$. Now note that we can break any nonstrict temporal path into $\Lambda$ (potentially trivial) paths $P_1,\ldots,P_{\Lambda}$. For each $i\in[\Lambda]$, the path $P_i$ consists of the vertices traversed by $P$ at time $i$. Observe that the endpoint of $P_i$ is the starting vertex in $P_{i+1}$ and each $P_i$ is contained in a connected component of the snapshot $G_i$ for all $i\in \Lambda$. Therefore, each $P_i$ is contained in a bag of the TIM decomposition. Since there is an arc in a TIM decomposition between two bags labelled with consecutive times with non-empty intersection, there must be an arc from the bag containing $P_{i-1}$ to the bag containing $P_i$ for all $i\in(1,\Lambda]$. Thus, if there is a nonstrict temporal path in $\mathcal{G}$, there is a corresponding path in the TIM decomposition from a bag labelled with time 1 to a bag labelled with $\Lambda$.

    Now suppose that $u$ and $v$ are mutually reachable and there is no bag in a TIM decomposition $(T,B,\tau)$ containing both vertices. Then, as before, there must be a path in the TIM decomposition from a bag containing $u$ at time $1$ to a bag containing $v$ at time $\Lambda$. Similarly, there must also be a path in the TIM decomposition from a bag containing $v$ at time $1$ to a bag containing $u$ at time $\Lambda$. By definition of a TIM decomposition, the set of bags containing either vertex must form a directed path. If there is no bag on both of 
    these paths, we have a cycle in the underlying graph indexing the TIM 
    decomposition consisting of the path of all bags containing $u$, the path from the bag containing $u$ at time $1$ to the bag containing $v$ at time $\Lambda$, the path of all bags containing $v$, and the path from the bag containing $v$ at time $1$ to the bag containing $u$ at time $\Lambda$. This contradicts the fact that the underlying graph of a TIM decomposition is a tree.
\end{proof}

\begin{lemmarep}\label{lem:EP-part-bag}
    For any set $S$ of vertices which induces a temporally connected subgraph of a temporal graph $\mathcal{G}$, there is a bag of any TIM decomposition of $\mathcal{G}$ containing $S$.
\end{lemmarep}
\begin{proof}
    We prove this lemma by induction on the size of $S$. Lemma~\ref{lem:EP-same-bag} shows the statement to be true for $|S|=2$; our base case. 
    
    Now suppose that, for all sets $S\subseteq V(\mathcal{G})$ of cardinality $k$ such that all vertices in $S$ are pairwise mutually reachable, there is a bag $B$ of every TIM decomposition of $\mathcal{G}$ such that $S\subseteq B$.

    Now consider a set $S$ of vertices of cardinality $k+1$ such that all vertices in $S$ are pairwise mutually reachable. Assume without loss of generality that $|S|\geq 3$. Let $S'=S\setminus \{v\}$ for some vertex $v$. Let $(T,B,\tau)$ be an arbitrary TIM decomposition of $\mathcal{G}$. Note that, by the inductive hypothesis, there exists a bag $B'$ of $(T,B,\tau)$ containing $S'$.

    We aim to show that there exists a bag containing all of $S$. Assume, for a contradiction that no such bag exists. Let $B_1$ be the first bag (in terms of times with which the bags are labelled) containing both $v$ and a vertex in $S'$, and let $B_2$ be the last bag containing a vertex in $S'$ and $v$. We know such bags exist by Lemma~\ref{lem:EP-same-bag}. If $B_1=B_2$, then $B_1$ must contain all of $S$ and we are done.

    By our assumption, there must be a vertex in $S'$ that is not in each of $B_1$ and $B_2$. Call these vertices $x_1$ and $x_2$ respectively. 
    Note that, since $B', B_1$ and $B_2$ have non-empty pairwise intersections, there must be a path in the TIM decomposition between each of the bags. Since the underlying graph of a TIM decomposition must be a tree, this implies that there is one path $P$ containing all three bags. 
    If $B'$ is between $B_1$ and $B_2$ then, by Observation~\ref{obs:tim-directed-path}, $B'$ must contain $v$ and we are done.

    We now have two cases to consider. In the first, $P$ starts at $B'$, traverses $B_1$ and all other bags containing $v$ at times between those with which $B_1$ and $B_2$ are labelled and finishes at $B_2$. In the second, $P$ starts at $B_1$, traverses all other bags containing $v$ at times between those with which $B_1$ and $B_2$ are labelled and finishes at $B'$. We note that these cases work symmetrically, and continue by showing the first case. 
    By definition $B'$ contains both $x_1$ and $x_2$. Recall that $B_1$ is the earliest bag containing $v$ and a vertex in $S'$, and $B_2$ is the latest such bag. Therefore, there must be a bag between $B_1$ and $B_2$ containing both $v$ and $x_1$. Hence, we have a subpath on the TIM decomposition starting at $B'$ and traversing $B_1$ whose endpoints both contain $x_1$. Then, by Observation~\ref{obs:tim-directed-path}, $x_1$ must be in $B_1$; a contradiction.
    
    Hence, there exists a bag containing all vertices in $S$. Since $(T,B,\tau)$ was an arbitrary TIM decomposition of $\mathcal{G}$ we have shown, by induction, for all sets $S$ which induce a temporally connected subgraph of $\mathcal{G}$ of size $n\in \mathbb{N}$, there is a bag of any TIM decomposition of $\mathcal{G}$ containing $S$.
\end{proof}

Lemma~\ref{lem:EP-part-bag} implies that we can bound the number of vertices in any part in a solution of \textsc{Weak NS ETCP} by the TIM width of the input temporal graph. We now prove that \textsc{Weak NS ETCP} is in FPT with respect to TIM width and number of parts. Recall that this contrasts the fact that we know the problem to be W[1]-hard with respect to the treewidth of the underlying static expansion.
\begin{theoremrep}\label{thm:ECP}
    Given a TIM decomposition of the input graph $\mathcal{G}$ of width $\phi$, \textsc{Weak Nonstrict Equitable Temporally Connected Partition} is solvable in $O(h^{\phi h+4}\phi^4\Lambda^3)$ time.
\end{theoremrep}
\begin{proof}
    We begin by comparing the number of vertices in the input to the product of the TIM width of the input graph and the number of parts allowed. 

    By Lemma~\ref{lem:EP-part-bag}, we know that any set of vertices which are pairwise mutually temporally reachable must be contained in a bag of any TIM decomposition. Thus, by the pigeonhole principle, if there exists a weak nonstrict equitable temporally connected partition of at most $h$ parts of a temporal graph $\mathcal{G}$, there must be at most $h\phi$ vertices in $\mathcal{G}$. Hence, the input temporal graph $\mathcal{G}$ has TIM width $\phi$ and more than $h\phi$ vertices, $(\mathcal{G},h)$ is a no-instance of \textsc{Weak NS ETCP}.
    
    The algorithm then finds the set $\mathcal{F}$ of all functions from the set of vertices to integers in $[h]$. This set has cardinality $h^n$. Since $n\leq h\phi$, we get that $|\mathcal{F}|\leq h^{h\phi}$. For each function $f$ in the set $\mathcal{F}$, the algorithm then performs two checks. If there exists a function $f$ in $\mathcal{F}$ which passes the checks, the algorithm returns \textbf{true}. Else, the algorithm returns \textbf{false}.  The first check compares the cardinalities of each pair of parts as prescribed by $f$. This requires $O(\phi h^2)$ time. The second check is that every pair of vertices in each part is mutually temporally reachable by a nonstrict path. Given a pair of vertices, this can be checked in time $O(n^2\Lambda+\Lambda^3)$~\cite{mertzios_temporal_2013}. Therefore, the second check requires $O(n^2(n^2\Lambda+\Lambda^3))=O(n^4\Lambda^3)\leq O(h^4\phi^4\Lambda^3)$ time. Combining these runtimes gives us that the algorithm requires $O(n^4\Lambda^2\phi+h^{\phi h+4}\phi^4\Lambda^3)$ time.

    We now show correctness of the algorithm. Suppose the algorithm returns \textbf{true}. Then there exists a function $f:V(\mathcal{G})\to [h]$ such that, for every pair of vertices $v,u$ with the same image under $f$, $v$ and $u$ are mutually reachable, and there are no two integers in $[h]$ such that the cardinality of their preimages differ by more than~$1$. Thus $f$ describes an equitable partition of the vertices in $\mathcal{G}$ such that each part induces a temporally connected subgraph. Therefore, $(\mathcal{G},h)$ is a yes-instance of \textsc{Weak NS ETCP}.

    Now suppose that the algorithm returns \textbf{false}. 
    Then there does not exist a function $f:V(\mathcal{G})\to [h]$ such that, for every pair of vertices $v,u$ with the same image under $f$, $v$ and $u$ are mutually reachable, and there are no two integers in $[h]$ such that the cardinality of their preimages differ by more than~$1$. Thus there is no equitable partition of the vertices in $\mathcal{G}$ such that each part induces a temporally connected subgraph. Therefore, $(\mathcal{G},h)$ is a no-instance of \textsc{Weak NS ETCP}.
\end{proof}
We now turn to distinguishing the algorithmic power of TIM width to that of VIM width. We show that \textsc{Temporal Firefighter} remains NP-hard on graphs with bounded TIM width and $\geq$-connected-VIM width; resolving an open question posed by Christodoulou et al.~\cite{christodoulou_making_2024} which asks whether \textsc{Temporal Firefighter} is in FPT with respect to $\geq$-connected-VIM width. We show that \textsc{Temporal Firefighter} is NP-Complete even when the TIM width is at most 3 by reduction from the \textsc{Max-2-SAT} variant of the classic \textsc{SAT} problem.

\textsc{Temporal Firefighter} asks how many vertices we can prevent from burning on a graph where, at each time $t$, we can place a defence on a vertex and the fire spreads along all edges active at time $t$ such that one endpoint is burning and the other is not burning or defended. A \emph{strategy} is a sequence of vertices $v_0,\ldots,v_{l}$ such that, at each time $t$, $v_t$ is not burning or defended. The formal definition of the problem is is given as follows.

\begin{nolinenumbers}
\decisionproblem{Temporal Firefighter}{A rooted temporal graph $(\mathcal{G},r)$ and an integer $h$.}{Does there exist a strategy that saves at least $h$ vertices on $\mathcal{G}$ when the fire starts at vertex $r$?}
\end{nolinenumbers}

For an instance $((\mathcal{G},r),h)$ of \textsc{Temporal Firefighter}, we write the instance as $x=(\mathcal{G},\beta)$ where $\beta$ is a string encoding the integer $h$ and which vertex is the root $r$ of the graph.

We use the equivalent problem of \textsc{Temporal Firefighter Reserve} (a temporal analogue of \textsc{Reserve Firefighter} defined by Fomin et al.~\cite{fomin_firefighter_2016}). In \textsc{Temporal Firefighter Reserve}, we are not required to make a defence at each time, rather our budget is incremented by one and we can simultaneously defend at most as many vertices as the value of the budget at each time. This allows us to only consider strategies that defend temporally adjacent to the fire. Furthermore we assume that we are given an instance $(\mathcal{G},\beta)$ of \textsc{Temporal Firefighter Reserve} such that $r$ has an incident edge active on timestep $1$. Note that if we are given an instance where this is not the case, we could take the earliest timestep at which $r$ has an active incident edge to be timestep $1$, and increase the starting budget according to the number of omitted timesteps, as the fire cannot leave $r$ before its first incident edge is active.

Satisfiability problems ask whether there is a truth assignment to the variables of a Boolean formula such that satisfies a certain requirement. Formulas are usually given in conjunctive normal form (abbreviated to CNF), where the formula is a conjunction of clauses: disjunctions of literals.

\textsc{Max-2-Sat} asks us to determine if a given number of clauses can be satisfied in a CNF formula, in which each clause contains two literals. This problem was shown to be \textbf{NP}-Complete by Garey et al.~\cite{garey_simplified_1976}.

\begin{nolinenumbers}
\decisionproblem{Max-2-SAT}{An integer $k$, and a pair $(B, C)$ where $B$ is a set of Boolean variables, and $C$ is a set of clauses over $B$ in CNF, each containing 2 variables.}{Is there a truth assignment to the variables such that at least $h$ clauses in $C$ are satisfied?}
\end{nolinenumbers}

\begin{theorem}[Garey et al. \cite{garey_simplified_1976}]
    \textsc{Max-2-SAT} is \textbf{NP}-Complete.
\end{theorem}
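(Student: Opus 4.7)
The plan is to establish NP-completeness by showing membership in NP and then giving a polynomial-time reduction from 3-SAT, following the classical gadget construction that underlies the result of Garey, Johnson and Stockmeyer. Membership in NP is immediate: given a truth assignment as a certificate, we count the number of clauses it satisfies in polynomial time and compare against $h$.

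For NP-hardness, I would reduce from 3-SAT. Given a 3-CNF formula $\varphi$ with clauses $C_1, \ldots, C_m$ over a variable set $B$, for each clause $C_i = (x_i \vee y_i \vee z_i)$ I introduce a fresh auxiliary variable $w_i$ and emit the following ten 2-clauses: the singletons $(x_i)$, $(y_i)$, $(z_i)$, $(w_i)$; the three negative pairs $(\neg x_i \vee \neg y_i)$, $(\neg y_i \vee \neg z_i)$, $(\neg x_i \vee \neg z_i)$; and the three mixed pairs $(x_i \vee \neg w_i)$, $(y_i \vee \neg w_i)$, $(z_i \vee \neg w_i)$. The constructed instance asks whether at least $h = 7m$ of these $10m$ two-clauses can be simultaneously satisfied. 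The whole construction is linear in the size of $\varphi$, hence polynomial time.

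The substantive step is the gadget analysis: for each bundle of ten clauses coming from $C_i$, I would show that if some literal of $C_i$ is true under a given assignment then exactly seven of the ten clauses can be satisfied by an optimal choice of $w_i$, and never more than seven; whereas if all of $x_i, y_i, z_i$ are false then at most six of the ten can be satisfied, regardless of $w_i$. I would verify this by enumerating the eight assignments to $(x_i, y_i, z_i)$ and, in each case, choosing the $w_i$ that maximises the count. Granted this $7$/$6$ gap, summing over all $m$ gadgets yields a tight correspondence: $\varphi$ is satisfiable if and only if the Max-2-SAT instance admits an assignment satisfying at least $7m$ clauses.

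The main obstacle is simply executing the case analysis that establishes the $7$/$6$ gap; the rest is accounting. Combined with NP membership, this polynomial-time reduction from 3-SAT establishes that Max-2-SAT is NP-complete.
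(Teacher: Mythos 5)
The paper states this theorem as a cited result of Garey, Johnson and Stockmeyer and gives no proof of its own; your proposal reproduces exactly the classical ten-clause gadget from that reference, and the $7$/$6$ gap you assert does check out under the case enumeration (at most $6$ satisfiable when all of $x_i,y_i,z_i$ are false, exactly $7$ achievable and never more otherwise), so the reduction is correct. The only formal wrinkle is that the paper's problem statement requires every clause to contain two literals, so your singleton clauses $(x_i)$, $(y_i)$, $(z_i)$, $(w_i)$ should be written as duplicated-literal clauses such as $(x_i \vee x_i)$, which changes nothing in the counting.
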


We are now ready to give our reduction. Given a CNF formula in which each clause has 2 literals, we produce a temporal graph in which the underlying graph is a tree of depth 2, and each edge is active exactly once, and at most two edges are active on every timestep. The fire begins at a root vertex $r$, and every vertex adjacent to the root corresponds to a literal from the formula. We attach leaves to these literal vertices and assign times to their incident edges such that the firefighters are forced to defend exactly one of each pair of literal vertices corresponding to a variable. Such a defence then corresponds to a truth assignment for the variables in the formula. We construct our instance such that a defence saves the desired number of vertices in \textsc{Temporal Firefighter} if and only if the corresponding truth assignment satisfies the desired number of clauses.

\begin{restatable}{theorem}{firehardthm}\label{thm:firefighter-hard-appendix}
\textsc{Temporal Firefighter} is NP-Complete even when restricted to the class of temporal trees with each edge active exactly once, and at most two edges active per timestep.
\end{restatable}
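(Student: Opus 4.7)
The plan is to prove NP-completeness by reduction from \textsc{Max-2-SAT}, which is classically NP-complete. Membership in NP is routine: a polynomial-size certificate is a defense strategy, and we verify it by simulating the fire spread in polynomial time and counting saved vertices. Given a \textsc{Max-2-SAT} instance $(B,C,h)$ with $n$ variables and $m$ clauses, I would construct in polynomial time a rooted temporal tree $(\mathcal{G},r)$ of depth~$2$, together with an integer $h'$, such that there is an assignment satisfying at least $h$ clauses of $C$ iff there is a defense saving at least $h'$ vertices of $\mathcal{G}$. The root $r$ is the source of the fire. For each variable $x_i\in B$ we attach two \emph{literal vertices} $v_{x_i}$ and $v_{\neg x_i}$ as children of $r$; for each clause $c_j=(l_j^1\vee l_j^2)$ we attach some number of \emph{clause-leaves} to the literal vertices appearing in $c_j$ (the exact number and to which literal vertex each leaf is attached is chosen to make the saved-vertex count track satisfied clauses rather than the sum of true literal occurrences).

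The temporal schedule is the core of the construction. I would process variables in order $1,2,\ldots,n$, assigning each variable $x_i$ a disjoint window of consecutive timesteps starting at a \emph{decision time} $\tau_i$. At time $\tau_i$ I activate exactly the two edges $(r,v_{x_i})$ and $(r,v_{\neg x_i})$; this uses the full per-timestep budget of two edges and forces the firefighter to defend exactly one of the pair, since defending anything else is strictly suboptimal and the other pair member would otherwise burn. Thus the defender's action at $\tau_i$ corresponds to a truth value for $x_i$. The remaining timesteps in the window of variable~$i$ are used to activate the leaf-edges attached to $v_{x_i}$ and $v_{\neg x_i}$, in pairs of at most two per timestep; the timing is arranged so that during these timesteps the defender has no useful action that can rescue a clause-leaf beyond the (now burning) undefended literal, while all clause-leaves beyond the defended literal are automatically saved. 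The windows are made long enough that all leaves of one variable are dealt with before the next decision time.

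For the correctness argument I would show, in the forward direction, that a truth assignment satisfying at least $h$ clauses lifts to a defense: defend $v_l$ at time $\tau_i$ whenever $l$ is the true literal of $x_i$, and count saved leaves to obtain at least $h'$; and in the reverse direction, that any optimal defense can be assumed to defend exactly one of each literal pair (otherwise a trivial exchange argument improves it), yielding a truth assignment from which at least $h$ satisfied clauses can be read off. The main obstacle, and the place where care is needed, is the \textbf{accounting}: because the tree has depth~$2$, every clause-leaf hangs off a single literal vertex and therefore only ``knows'' about one of the two literals of its clause. A naive one-leaf-per-literal-occurrence gadget makes the saved-vertex count equal to the number of true literal occurrences, which is maximised trivially (set each variable to its majority side) and fails to encode \textsc{Max-2-SAT}. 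The key design step will therefore be the clause-leaf gadget and the choice of $h'$: for each clause, the attached leaves must be arranged (together with the edge-activation times restricted to two per timestep) so that a clause contributes the same amount to the saved total whether one or both of its literals are true, collapsing the linear ``double-count'' and making the reduction faithful. Once the gadget is fixed, checking that the underlying graph is a tree of depth $2$, that each edge is active at exactly one time, and that at most two edges activate per timestep is immediate from the construction.
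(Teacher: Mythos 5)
Your high-level plan matches the paper's: a reduction from \textsc{Max-2-SAT} to a depth-$2$ rooted temporal tree in which both root--literal edges of variable $i$ are activated at a dedicated timestep, so that the defence made then encodes a truth value. However, the proposal has a genuine gap, and it is exactly at the point you flag yourself: you correctly diagnose that a naive one-leaf-per-literal-occurrence gadget counts true literal occurrences rather than satisfied clauses, but you then defer the gadget that fixes this (``the exact number and to which literal vertex each leaf is attached is chosen to make the saved-vertex count track satisfied clauses''). That gadget is the heart of the reduction, not a detail. The paper's solution is to give each clause $c_j$ \emph{two} leaves $c_{j,x},c_{j,y}$ attached to the literal vertices that would satisfy $c_j$, with \emph{both} of their edges active at the \emph{same} timestep $v+wv+j$ (plus a mirrored pair $\bar c_{j,x},\bar c_{j,y}$ at timestep $v+wv+w+j$ attached to the complementary literal vertices). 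Because only one defence is available per timestep, a satisfied clause loses no leaf (at most one of $c_{j,x},c_{j,y}$ has a burning parent, and that one is defended), while an unsatisfied clause loses exactly one; the mirrored pair guarantees no negative leaf ever burns. This synchronisation of the two leaves of a clause --- which sit under literal vertices of \emph{different} variables --- is incompatible with your proposed per-variable windows, in which each variable's leaf edges are confined to that variable's own window. So your timing scheme would have to be abandoned, not merely refined.

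A second, smaller gap: your claim that activating both root--literal edges at $\tau_i$ ``forces the firefighter to defend exactly one of the pair, since defending anything else is strictly suboptimal'' is asserted, not proved, and it is false without further structure --- the defender is free to spend that timestep elsewhere if the literal subtrees are cheap to abandon. The paper makes the claim true by attaching $w$ \emph{forcing leaves} to every literal vertex, each with its own activation time, so that skipping a variable pair condemns $2w$ forcing leaves of which the defender can subsequently rescue at most $w$; the resulting loss of at least $w$ vertices dominates any possible gain, and an exchange argument then shows an optimal strategy defends exactly one of $b_{i,0},b_{i,1}$ at timestep $i$. You would need to add these forcing leaves (or an equivalent quantitative mechanism) and carry out the corresponding counting to make the reverse direction of the reduction go through.
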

\begin{proof}
We reduce from \textsc{Max-2-SAT}. Given an instance $((B, C), h)$ of \textsc{Max-2-SAT} we construct an instance $(((G, \lambda), r), h')$ of \textsc{Temporal Firefighter} where $G$ is a tree, each edge is active exactly once, and there are at most two edges active per timestep, such that $(((G, \lambda), r), h')$ is a yes-instance if and only if $((B, C), h)$ is also a yes-instance.

Let $v=|B|$, the number of variables, and $w=|C|$, the number of clauses. Our vertex $V(G)$ set consists of $1+2v+2wv+4w$ vertices:
\begin{itemize}
    \item one root vertex $r$,
    \item $2v$ variable vertices $\{b_{i,x} : i \in [v], x \in \{1,0\}\}$,
    \item $2wv$ forcing leaf vertices $\{d_{i,x,j} : i \in [v], x \in \{1,0\}, j \in [w]\}$,
    \item $4w$ clause leaves, two for each appearance of a literal in a clause, $\{c_{j,i}, \bar{c}_{j,i} : i \in [v], j \in [w], b_i \text{ appears in clause } c_j\}$.
\end{itemize}

Our set of time edges then connects every variable vertex to the root, and every forcing and clause leaf to a variable vertex:
\begin{align*}
\{(e, t) : e \in E(\mathcal{G}), t \in \lambda(e) \}&= \{(\{b_{i,x},r\},i) : i \in [v], x \in \{1,0\}\}\\
    &\cup \{(\{d_{i,x,j}, b_{i,x}\}, v+(i-1)w+j) : i \in [v], x \in \{1,0\}, j \in [w]\}\\
    &\cup \{(\{c_{j,i},b_{i,1}\}, v+wv+j), (\{\bar{c}_{j,i},b_{i,0}\}, v+wv+w+j)\\
    &: i \in [v], j \in [w], b_i \text{ occurs positively in }c_j\}\\
    &\cup \{(\{c_{j,i},b_{i,0}\}, v+wv+j), (\{\bar{c}_{j,i},b_{i,1}\}, v+wv+w+j)\\
    &: i \in [v], j \in [w], b_i \text{ occurs negatively in }c_j\}
\end{align*}

\begin{figure}[ht]
    \centering
    \includegraphics[width=0.8\linewidth,page=5]{figures.pdf}
    \caption{The section of the tree corresponding to the appearances of variable $b_1$ in the  \textsc{Max-2-SAT} instance $(b_1 \lor b_2) \land (\neg b_2 \lor b_3) \land (\neg b_1 \lor \neg b_3)$}
     \label{fig:reduction}
\end{figure}

Now, set $k'=(1+2v+2wv+4w)-(1+v+(w-k))=v+2wv+3w+k$, and this along with the above temporal graph is our instance $(((G, \lambda), r), k')$. As required, $G$ is a tree, each edge is active on exactly one timestep, and there are two edges active on every timestep. For any timestep between $1$ and $v$ inclusive, both of these edges are between the root and a variable vertex. For any timestep between $v+1$ and $v+vw$ inclusive, these edges are between a variable vertex and a forcing leaf. For any timestep between $v+vw+1$ and $v+vw+w$ inclusive these edges are between a variable vertex and a positive clause vertex. And, for any timestep between $v+wv+w+1$ and $v+wv+2w$ inclusive these edges are between a variable vertex and a negative clause vertex. An example construction of such a graph can be seen in \cref{fig:reduction}.

Now assume that $((B, C), k)$ is a yes-instance, that is that there is a truth assignment $\phi : B \to \{T, F\}$ to the variables in $B$ such that at least $k$ of the clauses in $C$ are satisfied. Given this truth assignment we then define a strategy $\sigma$, and show it to save $k'=v(2w+1)+3w+k$ vertices on $((G, \lambda), r)$, thus demonstrating that $(((G, \lambda), r), k')$ is also a yes-instance. This strategy defends as follows:

\begin{defn}[Strategy $\sigma$]
\begin{itemize}
    \item[]
    \item For each timestep $t \in [v]$, $\sigma$, if $\phi(b_t) = \textbf{true}$ then $\sigma$ defends $b_{t,1}$, and if $\phi(b_t) = \textbf{false}$ then $\sigma$ defends $b_{t,0}$,
    \item for each timestep $t \in [v+1, v+vw]$, $\sigma$ defends $d_{\lceil\frac{t-v}{w}\rceil, 0, ((t-v-1) \mod w)+1}$ if $\phi(b_{\lceil\frac{t-v}{w}\rceil})=\textbf{true}$, and $d_{\lceil\frac{t-v}{w}\rceil, 1, ((t-v-1) \mod w)+1}$ if $\phi(b_{\lceil\frac{t-v}{w}\rceil})=\textbf{false}$,
    \item for each timestep $t \in [v+wv+1, v+wv+w]$, $\sigma$ defends any clause leaf in  $\{c_{t-(v+wv),i} : b_i \text{ occurs in }c_{t-(v+wv)}\}$ that has an undefended parent. If neither of these two clause leaves have an undefended parent, then $\sigma$ defends a clause leaf in $\{\bar{c}_{t-(v+wv),i} : b_i \text{ occurs in }c_{t-(v+wv)}\}$,
    \item finally, for each timestep $t \in [v+wv+w+1,v+wv+2w]$, $\sigma$ defends any clause leaf in $\{\bar{c}_{t-(v+wv+w),i} : b_i \text{ occurs in }c_{t-(v+wv+w)}\}$ that has an undefended parent. If neither of these two leaves have an undefended parent, then $\sigma$ defends one of them arbitrarily.
\end{itemize}
\end{defn}

Now consider the number of vertices that burn under $\sigma$. To begin with, the root and half of the variable vertices burn before all of the forcing leaves are saved. Now consider some clause $c_j \in C$ containing variables $b_x$ and $b_y$. If $c_j$ is satisfied, then neither clause leaf $c_{j,x}$ or $c_{j,y}$ burns, as at least one of these leaves will have a defended parent, and if either leaf does not have a defended parent, the leaf will be defended on timestep $v+wv+j$. If $c_j$ is not satisfied, then neither of $c_{j,x}$ and $c_{j,y}$ will have a defended parent, and one of them will burn, and the other will be defended on timestep $v+wv+j$.

Finally consider a pair of negative clause leaves $\bar{c}_{j,x}$ and $\bar{c}_{j,y}$. If the parents of both of these leaves burn, the neither of the parents of the corresponding leaves $c_{j,x}$ and $c_{j,y}$ burn, and one of $\bar{c}_{j,x}$ and $\bar{c}_{j,y}$ will be defended on timestep $v+wv+j$, and the other on timestep $v+wv+w+j$. If one or fewer of the parents burn, then either of the leaves with a burning parent will be defended on timestep $v+wv+w+j$. Therefore no negative clause leaf $\bar{c}_{j,i}$ will burn. Thus in total the root, half of the variable vertices, and one clause leaf per unsatisfied clause burn, that being at most $1+v+(w-k)$ vertices. This means that at least $(1+2v+2wv+4w)-(1+v+(w-k)) = v+2wv+3w+k$ vertices are saved as required.

We now show that if $(((G, \lambda), r), k')$ is a yes-instance, that is that there is some strategy $\sigma$ that saves $v+2wv+3w+k$ vertices, then $((B, C), k)$ is also a yes-instance. We begin by showing that if there exists a strategy that saves $k'$ vertices, then there exists a strategy that on every timestep $t \in [v]$ defends one of the vertices $b_{t,0}$ and $b_{t,1}$, and also saves $k'$ vertices. Given a strategy $\sigma$ with this property, we then define a truth assignment $\phi$, such that $\phi(b_t)=\textbf{true}$ if $\sigma$ defends $b_{t,1}$ on timestep $t$, and $\phi(b_t)=\textbf{false}$ if $\sigma$ defends $b_{t,0}$ on timestep $t$.

First assume that there exists a strategy that saves $k'$ vertices, but no strategy that does so by defending only variable vertices on every timestep $t \in [v]$. Now let $\sigma$ be a strategy that saves $k'$ vertices and is maximal in the number of timesteps $t \in [v]$ on which a variable vertex is defended. Let $t$ be the earliest timestep on which $\sigma$ defends a leaf vertex $l$, and let $\{b_{i,0}, b_{i,1}\}$ be a pair of variable vertices undefended by $\sigma$, noting that such a pair must exist - if $\sigma$ defends at least one vertex from every pair of variable vertices, then it must do so by timestep $v$ at the latest, by which time every variable vertex burns. Furthermore, if it defends at least one vertex from every pair of variable vertices, then this requires at least $v$ defences, and so $\sigma$ would only defend variable vertices on every timestep $t \in [v]$, contradicting its definition. Consider now the strategy $\sigma'$ that defends $b_{i,0}$ on timestep $t$, and makes the same defences as $\sigma$ otherwise. See that $\sigma'$ saves at least all the vertices saved by $\sigma$, with the possible exception of $l$, and also saves $b_{i,0}$, which was not saved by $\sigma$. Therefore $\sigma'$ also saves at least $k'$ vertices and contradicts the maximality of $\sigma$, and so there always exists a strategy that only defends variable vertices for every timestep $t \in [v]$.

We now show by induction on the variable index $i$ that any strategy $\sigma$ that saves $k'$ vertices and defends only variable vertices on the first $v$ timesteps must defend exactly one of every pair of vertices $b_{i,0}$ and $b_{i,1}$ on timestep $i$. When $i = 1$ if neither of $b_{i,0}$ and $b_{i,1}$ are defended both of these vertices will burn. There are then $2w$ forcing leaf vertices adjacent to these variable vertices, and all of these forcing leaf vertices will burn by timestep $v+w$. Our strategy only defends variable vertices for the first $v$ timesteps, so can then only defend at most $w$ of the forcing leaf vertices by timestep $v+w$, meaning at least $w$ forcing vertices burn. Even assuming the remaining vertices in the graph are saved, the root, $v$ of the variable vertices, and $w$ forcing vertices burn, meaning that the number of saved vertices is $(1+2v+2wv+4w)-(1+v+w)=v+2wv+3w < v+2wv+3w+k$, contradicting the assumption that $\sigma$ saves $k'$ vertices. Therefore $\sigma$ must defend exactly one of $b_{1,0}$ and $b_{1,1}$ on timestep $1$. Otherwise, if $i > 1$ then by the inductive assumption $\sigma$ defends one of each pair of vertices $b_{t,0}$ and $b_{t,1}$ on every timestep $t < i$. Assume that $\sigma$ does not defend either of $b_{i,0}$ or $b_{i,1}$ on timestep $i$, so both of these vertices will burn. There are then $iw+w$ forcing leaf vertices adjacent to the burning variable vertices $b_{t,0}, b_{t,1}$ with $t \leq i$, and all of these forcing leaf vertices will burn by timestep $v+iw$. Our strategy only defends variable vertices for the first $v$ timesteps, so can then only defend at most $iw$ of the forcing leaf vertices by timestep $v+iw$, meaning that at least $w$ forcing leaves burn. Even assuming the remaining vertices in the graph are saved, the root $v$ of the variable vertices, and $w$ forcing vertices burn, meaning that the number of saved vertices is $(1+2v+2wv+4w)-(1+v+w)=v+2wv+3w < v+2wv+3w+k$, contradicting the assumption that $\sigma$ saves $k'$ vertices. Therefore $\sigma$ must defend one of $b_{i,0}$ and $b_{i,1}$ on timestep $i$.

Therefore, if there exists a strategy that saves $k'$ vertices, there must exist a strategy that only defends variable vertices during the first $v$ timesteps, and this strategy must defend exactly one of each pair of variable vertices $b_{i,0}$ and $b_{i,1}$. Thus, given such a strategy $\sigma$, we then define a truth assignment $\phi$, such that $\phi(b_i)=\textbf{true}$ if $\sigma$ defends $b_{i,1}$ on timestep $i$, and $\phi(b_i)=\textbf{false}$ if $\sigma$ defends $b_{i,0}$ on timestep $i$.

Now assume that there exists a strategy that defends one of the vertices $b_{i,0}$ or $b_{i,1}$ on each timestep $i \leq v$, and saves at least $k'$ vertices, but no strategy that saves at least $k'$ vertices, defends either $b_{i,0}$ or $b_{i,1}$ on each timestep $i \leq v$, and saves every forcing leaf. Let $\sigma$ be a strategy that saves at least $k'$ vertices, defends one of the vertices $b_{i,0}$ or $b_{i,1}$ on each timestep $i \leq v$, and is maximal in the integer $\ell$ such that every defence made by $\sigma$ on a timestep $v < t \leq v + \ell$ is made at a forcing leaf with a burning parent and an active incident edge active on timestep $t$. Consider the strategy $\sigma'$ which defends as $\sigma$ but on timestep $v+\ell+1$ defends a forcing leaf with an incident edge active on timestep $v + \ell$. Note that such a leaf must exist, as for every timestep $v < t \leq v+vw$, there are two forcing leaves with incident edges active on $t$, and one of these leaves is the child of a variable vertex $b_{i,0}$, and the other the child of $b_{i,1}$, only one of which will be defended by $\sigma$. Also see that such a leaf cannot have burnt at the start of timestep $v + \ell$, as its only incident edge is only active on this timestep. See then that any leaf that does not burn and is not defended when $\sigma$ is played also does not burn when $\sigma'$ is played, as $\sigma'$ defends the same non-leaf vertices as $\sigma$. Furthermore, the number of leaves that are defended is the same when $\sigma$ is played is the same as the number of leaves that are defended when $\sigma'$ is played, and therefore $\sigma'$ saves the same number of vertices as $\sigma$. This contradicts the maximality of $\sigma$, and therefore if there exists a strategy that saves $k'$ vertices, there exists a strategy that saves $k'$ vertices, defends one of the vertices $b_{i,0}$ or $b_{i,1}$ on each timestep $i \leq v$, and saves every forcing leaf.

When such a strategy $\sigma$ is played, $vw$ forcing leaves will have burning parents, and each of these leaves will burn by timestep $v+vw$ if undefended, meaning that on every timestep $v < t \leq v+vw$, $\sigma$ will defend a forcing leaf. Furthermore, $\sigma$ saves $v$ variable vertices, $2vw$ forcing leaves, and must therefore save at least $3w+k$ clause leaves, as $\sigma$ saves $k'=v+2wv+3w+k$ vertices. There are $2w$ negative clause leaves, and $2w$ positive clause leaves, meaning that at least $w+k$ positive clause leaves must be saved by $\sigma$. Every undefended positive clause leaf with a burning parent will burn by timestep $v+vw+w$, and $\sigma$ can only defend clause leaves from timestep $v+vw+1$ onwards. Therefore at most $w$ positive clause leaves can be saved by being defended. The remaining required $k$ positive clause leaves must therefore have parents defended by $\sigma$. If the parent $b_{i,x}$ of any positive clause leaf $c_{j,i}$ is defended, then clause $c_j$ must be satisfied by our truth assignment. Therefore at least $k$ clauses are satisfied by the truth assignment corresponding to $\sigma$ as required.
\end{proof}

Let $\mathcal{G}'$ be the temporal graph constructed from $\mathcal{G}$ by adding the time-edge $(uv,t)$ for all vertices $u,v$ and all times $t$ such that there exist times $t_1<t<t_2$ such that $(uv,t_1)$ and $(uv,t_2)$ are in $\mathcal{E}(\mathcal{G})$.
\begin{restatable}{theorem}{orderedtreetim}\label{thm:ordered-tree-low-timw}
    Let $\mathcal{G}$ be a temporal tree. Suppose we can choose a root $r$ of $\mathcal{G}$ such that, for every vertex $v\in V(\mathcal{G})$, the edges incident to $v$ are active strictly before all other edges in the subtree rooted at $v$. Then there exists a TIM decomposition of $\mathcal{G}$ with width $\max_t(\max_v\deg_{G'_t}(v))+1$. 
\end{restatable}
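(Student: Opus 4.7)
The plan is to construct an explicit TIM decomposition whose bags are the connected components of $G'_t$ at each time $t$, and to show that these components are stars of size at most $\max_t\max_v\deg_{G'_t}(v)+1$. First I would establish a structural lemma directly from the hypothesis: for every non-root vertex $v$, the parent-edge interval $[\min\lambda(p(v),v),\max\lambda(p(v),v)]$ lies strictly before every child-edge interval $[\min\lambda(v,c),\max\lambda(v,c)]$, since from the subtree rooted at $p(v)$ the edge $(p(v),v)$ is incident to $p(v)$ while $(v,c)$ is not. It follows that at any fixed time $t$ a vertex $v$ lies in at most one of these intervals, so in $G'_t$ it is adjacent either to its parent or to some of its children, but never both. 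Consequently every component of $G'_t$ is a star with centre $w$ and leaf set $\{c\in\mathrm{children}(w) : t\in[\min\lambda(w,c),\max\lambda(w,c)]\}$, of size $1+\deg_{G'_t}(w)$, which matches the claimed width bound.

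I then define $(T,B,\tau)$ with one bag per component of each $G'_t$, labelled by $t$, and arcs between consecutive-time bags sharing a vertex. Properties~1--3 of Definition~\ref{def:TIMW} are immediate: components partition $V(\mathcal{G})$ at every time, every time-edge $(uv,t)\in\mathcal{E}(\mathcal{G})$ has its endpoints in the same component of $G'_t$, and arcs are added by construction. The remaining task is to verify that $T$ is a tree. The linchpin is the following \emph{interval property}: two vertices $u\neq v$ lie in the same component of $G'_t$ only if they are tree-adjacent or share a common parent, and in either case the set of such times is an interval (an edge interval, or the intersection of two edge intervals). Connectedness of $T$ would then follow by induction on tree-distance in $\mathcal{G}_\downarrow$, in the spirit of the connectivity argument inside the proof of Lemma~\ref{lem:TIM-runtime}: adjacent vertices share a bag during their edge interval, so their bag-chains overlap.

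For acyclicity, suppose a simple cycle exists and let $t^\star$ be its minimum time. Some bag $B$ at $t^\star$ then has two distinct cycle-successors $B',B''$ at $t^\star+1$; as $B',B''$ are disjoint components of $G'_{t^\star+1}$, the shared witnesses $u\in B\cap B'$ and $v\in B\cap B''$ must be distinct. The interval property then forces $t^\star$ to be the right endpoint of the $u$-$v$ co-occurrence interval, so $u$ and $v$ cannot rejoin in any component at a time later than $t^\star$. The hard part will be closing the contradiction in full generality: I would propagate the same interval reasoning along both branches of the cycle, showing that at each subsequent split or merge bag encountered while traversing the cycle, a newly-chosen witness pair separates permanently, so the two branches cannot reconverge and the cycle cannot close. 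Combining this with the structural lemma yields a valid TIM decomposition of width at most $\max_t\max_v\deg_{G'_t}(v)+1$, as required.
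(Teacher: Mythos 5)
Your construction is exactly the paper's: bags are the connected components of the snapshots $G'_t$ of the filled-in graph $\mathcal{G}'$, the hypothesis (applied to $p(v)$) forces each such component to be a star centred at a vertex $w$ of size $1+\deg_{G'_t}(w)$, properties 1--3 of Definition~\ref{def:TIMW} are immediate, and connectedness of the indexing graph follows by the induction on tree-distance already carried out inside Lemma~\ref{lem:TIM-runtime}. All of that is correct and matches the paper. The genuine gap is acyclicity, which you yourself flag as ``the hard part'' and then only sketch. The specific flaw in your closing plan is the inference ``a newly-chosen witness pair separates permanently, so the two branches cannot reconverge'': a cycle does not need any fixed witness pair to rejoin. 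After $u$ and $v$ split at the minimum-time bag, branch one can hand off from $u$ to some $u'$ (a child or sibling of $u$) during their co-occurrence interval, branch two from $v$ to some $v'$, and the cycle could in principle close when $u'$ meets $v'$ --- the interval property for the pair $(u,v)$ says nothing about the pair $(u',v')$. Propagating ``the same interval reasoning'' along both branches is not a well-defined induction as stated, because the quantity that is supposed to decrease or the invariant that is supposed to be preserved is never identified.

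To actually close the argument you need to exploit the tree structure globally, not pairwise. Two workable routes: (i) observe that two vertices co-occur only if one is the parent of the other or they are siblings, so a putative cycle of traces forces either a cycle in $\mathcal{G}_\downarrow$ or three pairwise-intersecting but pairwise-``separated'' parent-edge intervals of siblings, which Helly's theorem for intervals on the line forbids; or (ii) track the centre of the star around the cycle: the centres form a closed walk in the tree moving by at most one edge per step, each tree edge $\{x,p(x)\}$ can be crossed by this walk only at the two specific time transitions $\min\lambda(p(x)x)-1\to\min\lambda(p(x)x)$ and $\max\lambda(p(x)x)\to\max\lambda(p(x)x)+1$, and examining a deepest tree edge crossed yields a contradiction because the sub-walk between its two crossings would have to pass through times at which $x$'s bag is centred at $p(x)$ rather than at $x$. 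Either of these completes the proof; your interval property is a correct and useful ingredient, but on its own it does not rule out reconvergence. (For context, the paper's own acyclicity justification is a one-sentence appeal to ``an edge active twice with a gap, or $\mathcal{G}$ is not a tree,'' so the step you got stuck on is precisely the step the paper also elides; nevertheless, as submitted your proposal does not establish the statement.)
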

\begin{proof}
    We claim that the TIM decomposition of $\mathcal{G}$ is the same as the TIM decomposition of $\mathcal{G}'$. The TIM decomposition of $\mathcal{G}'$ can be found by simply putting each connected component of each snapshot $G'_t$ of $\mathcal{G}'$ into a separate bag of $(T,B,\tau)$. By definition, arcs are added from bags at time $t$ to bags at time $t+1$ with non-empty intersection for all times $t\in [1,\Lambda)$. By construction of $\mathcal{G}$ and $\mathcal{G}'$, this cannot contain any cycles. 
    
    To see this note that, since edges incident to a vertex $v$ are active at times strictly before the edges in the subtree rooted at $v$ are active, the connected component containing $v$ in any snapshot of $\mathcal{G}'$ must be a star with $v$ in the centre. Suppose there are two paths from a bag $B(i)$ to another bag $B(j)$ in the TIM decomposition at this point in the construction (i.e. a cycle). This implies that either an edge is active multiple times with a gap between these appearances, or that $\mathcal{G}$ is not a tree. Both give us a contradiction. The first cannot be true by construction of $\mathcal{G}'$, and the second contradicts the structure of $\mathcal{G}$ (and $\mathcal{G}'$). Therefore, there are no cycles at in this decomposition, and it is a minimum TIM decomposition of $\mathcal{G}'$. 
    
    Note that, when constructing a decomposition of $\mathcal{G}$ in the same way, there are only cycles caused by edges active multiple times. Removing these cycles results in both endpoints of the edge being in the same bags at all times between the first and last time the edge is active. This edge appears at all such intermediate times of $\mathcal{G}'$, therefore both temporal graphs have the same TIM decomposition.
    
    Observe that since the underlying graph of $\mathcal{G}$ is a tree, and the edges closer to the root occur earlier, each vertex is in a singleton bag before and after the edges it is incident to are active. Further, every bag $B(i)$ of the TIM decomposition of $\mathcal{G}$ consists only of a vertex $v$ and those of its children which neighbour $v$ in the snapshot of $\mathcal{G}'$ at time $t$.
    Since the bags of the TIM decomposition of $\mathcal{G}$ consist of connected components in $\mathcal{G}'$, the decomposition has width $\max_t(\max_v\deg_{G_t'}(v))+1$.
\end{proof}
In fact, we can prove a stronger claim: that the $\geq$- (and, thus bidirectional) connected-VIM width is bounded in such temporal graphs. Recall that $\mathcal{G}'$ is the temporal graph constructed from $\mathcal{G}$ by adding the time-edge $(uv,t)$ for all vertices $u,v$ and all times $t$ such that there exist times $t_1<t<t_2$ such that $(uv,t_1)$ and $(uv,t_2)$ are in $\mathcal{E}(\mathcal{G})$.

\begin{restatable}{theorem}{orderedtreecvim}\label{thm:ordered-tree-low-cvim}
    Let $\mathcal{G}$ be a temporal tree. Suppose we can choose a root $r$ of $\mathcal{G}$ such that, for every vertex $v\in V(\mathcal{G})$, the edges incident to $v$ are active strictly before all other edges in the subtree rooted at $v$. Then the $\geq$-connected-VIM width of $\mathcal{G}$ is $\max_t(\max_v\deg_{G'_t}(v))+1$. 
\end{restatable}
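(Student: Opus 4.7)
The plan is to establish both inequalities. Introduce $\alpha(u) = \min_{e \ni u} \min \lambda(e)$ and $\beta(u) = \max_{e \ni u} \max \lambda(e)$, so that $u \in F_t$ exactly when $\alpha(u) \leq t \leq \beta(u)$. The root-ordering hypothesis gives, for any child $c$ of a vertex $x$ and any edge $e$ in the subtree rooted at $c$ but not incident to $c$, the inequality $\max \lambda(xc) < \min \lambda(e)$; in particular, if $u$ has parent $p$ then $\alpha(u) = \min \lambda(up)$, and $\alpha$ strictly increases along any root-to-leaf path. These are the only consequences of the ordering that I will use.

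For the upper bound, I fix a time $t$ and a connected component $C$ of $G_{\geq}(t)$. If $C$ is a singleton $\{u\}$, then $u$ has no incident edge active at time $\geq t$, so $\beta(u) < t$ and $V(C) \cap F_t = \emptyset$. Otherwise let $v$ be the vertex of $C$ closest to the root of $\mathcal{G}$; since the underlying graph is a tree, every other vertex of $C$ is a descendant of $v$ and the tree path from $v$ to any $w \in V(C)$ consists of edges in $G_{\geq}(t)$. The key step is the claim $V(C) \cap F_t \subseteq \{v\} \cup \mathrm{children}(v)$. Suppose for contradiction that some $w \in V(C) \cap F_t$ lies at depth $k \geq 2$ below $v$ via the tree path $v x_1 x_2 \ldots x_k = w$. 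Iterating the ordering along $x_1, x_2, \ldots, x_k$ yields $\max \lambda(v x_1) < \min \lambda(x_1 x_2) = \alpha(x_2) \leq \cdots \leq \alpha(x_k) = \alpha(w) \leq t$, whence $\max \lambda(v x_1) < t$. But $v x_1$ must lie in $G_{\geq}(t)$, so $\max \lambda(v x_1) \geq t$, a contradiction.

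It follows that every vertex of $V(C) \cap F_t$ distinct from $v$ is a child $c$ of $v$, and for any such $c$ we have $\min \lambda(v c) = \alpha(c) \leq t$ and $\max \lambda(v c) \geq t$ (as $vc$ is the only tree path from $v$ to $c$ inside $G_{\geq}(t)$), giving $v c \in E(G'_t)$. Therefore $|V(C) \cap F_t| \leq 1 + \deg_{G'_t}(v)$, and taking the maximum over $t$ and $C$ yields $\psi_{\geq}(\mathcal{G}) \leq 1 + \max_t \max_v \deg_{G'_t}(v)$.

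For the matching lower bound, let $t^{*}, v^{*}$ attain $\max_t \max_v \deg_{G'_t}(v)$ and set $K = \deg_{G'_{t^{*}}}(v^{*})$. Each of the $K$ neighbours $u$ of $v^{*}$ in $G'_{t^{*}}$ satisfies $\min \lambda(v^{*} u) \leq t^{*} \leq \max \lambda(v^{*} u)$, so $u \in F_{t^{*}}$ and the edge $v^{*} u$ carries a time $\geq t^{*}$, placing it in $G_{\geq}(t^{*})$; likewise $v^{*} \in F_{t^{*}}$. All $K+1$ of these vertices therefore belong to a single connected component of $G_{\geq}(t^{*})$, forcing $\psi_{\geq}(\mathcal{G}) \geq K + 1$. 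The main obstacle is the contradiction step in the second paragraph, where the full strength of the ordering hypothesis is used to chain inequalities along $v, x_1, \ldots, x_k$; the remaining reasoning is routine bookkeeping.
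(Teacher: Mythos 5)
Your proof is correct and follows essentially the same route as the paper's: both arguments confine $V(C)\cap F_t$ to the closed neighbourhood of the topmost vertex of the component $C$ of $G_{\geq}(t)$ by exploiting the fact that edge times strictly increase down root-to-leaf paths, and then identify the relevant neighbours as exactly the neighbours in $G'_t$. Your write-up is somewhat more streamlined (a single monotone chain of $\alpha$-values along the path from the component's top vertex, rather than the paper's two-step ``pairwise distance at most two, and distance-two pairs are siblings'' argument) and, unlike the paper, makes the matching lower bound explicit.
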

\begin{proof}
    We begin by recalling the definition of $\geq$-connected-VIM width (Definition~\ref{def:CVIM}). Recall that $G_{\geq}(t)$ is the underlying graph of the temporal graph $\mathcal{G}_{\geq}(t)$ consisting of the vertices in $\mathcal{G}$ and the time-edges that appear in $\mathcal{G}$ at time at least $t$. The $\geq$-connected-VIM width of a temporal graph $\mathcal{G}$ is the maximum cardinality of a bag, where the bags are found by taking, for each time $t$ and connected component $C$ of $G_{\geq}(t)$, the intersection of $F_t$ and $V(C)$, where $F_t$ is the bag at time $t$ of the VIM sequence of $\mathcal{G}$.

    Note that, for any $t$, the connected components of $G_{\geq}(t)$ are the subtrees rooted at the set of vertices $v$ such that the edge from $v$ to its parent (if it exists) is active only at times $t'<t$ and there exists a time-edge from $v$ to a child (if there is a child) of $v$ active at $t''\geq t$. Denote by $R^t$ the set of the roots of such subtrees at time $t$.
    
    By definition of the VIM sequence, for every vertex $u$, the bags of the VIM sequence containing $u$ are those at time $t$ such that there is a time-edge incident to $u$ at time $t'\leq t$ and a time-edge incident to $u$ at time $t''\geq t$. By construction of $\mathcal{G}$, the bags of the VIM sequence containing $u$ must be those with times at or before the latest time the edge from $u$ to its parent and at or after the earliest time-edge from $u$ to a child of $u$. 
    
    We claim that, for any time $t$, the intersection of $F_t$ and any connected component of $G_{\geq}(t)$ must be a subset of $N_{G_{\geq}}[r']$ for some $r'\in R^t$. Note that the closed neighbourhood of a vertex $r'$ in a tree must be a star; a graph consisting of one central vertex and leaves adjacent to it. Further observe that any vertices in a star are at distance at most~2 from each other.
    Suppose for a contradiction that there is a connected component $C$ of $G_{\geq}(t)$ such that there are two vertices $v$ and $u$ in $C\cap F_t$ that are distance greater than two from each other. Since $u$ and $v$ are both in $F_t$, they must both be incident to at least one time-edge active at or before $t$ and at least one time-edge active at or after $t$. Let $w$ be the vertex traversed on the path from $u$ to $v$ closest to the root $r'$ of the subtree. By our assumption that the distance from $u$ to $v$ is at least~3, $w$ must be distance at least~2 from one of $u$ and $v$. Assume without loss of generality that $u$ is distance at least~2 from $w$, and recall that $w$ is an ancestor of both $v$ and $u$. By construction of $\mathcal{G}$, all edges incident to $w$ must be active strictly before any edges incident to $u$. Therefore, $w$ cannot be incident to any time-edges at or after $t$ and $w$ is isolated in $G_{\geq}(t)$; a contradiction. Therefore, all vertices in $C\cap F_t$ are at distance at most~2 from one another. 
    
    By similar reasoning, we see that if two vertices are in $C\cap F_t$ and they are distance two from each other, they must be siblings. Suppose for a contradiction that $v$, $u$ are in $C\cap F_t$ and $uv\in E(G_{\geq}(t))$. That is, one of $v$ and $u$ is a grandparent of the other. Assume without loss of generality that $v$ is the grandparent of $u$. By construction of $\mathcal{G}$, the edge between $v$ and its child occurs strictly before any edges incident to $u$. Therefore, $u$ and $v$ cannot be in the same bag of the VIM sequence at any time.
    
    Thus, there must exist a vertex $x$ for which all vertices in $C\cap F_t$ are in the closed neighbourhood $N_{G_{\geq}}[x]$. Since all other vertices in $C\cap F_t$ must be siblings of each other, $x$ must be such that the edge from $x$ to its parent is active only at times $t'<t$ and there exists a time-edge from $v$ to a child of $v$ active at $t''\geq t$. Hence, $x\in R^t$.

    We note that, for a vertex $x$ such that $C\cap F_t\subseteq N_{G_{\geq}}[x]$, the vertices neighbouring $x$ that are in $C\cap F_t$ must be children $x_c$ of $x$ such that the edge $xx_c$ is active at times $t_1$ and $t_2$ such that $t_1\leq t\leq t_2$. These are precisely the neighbours of $x$ in the snapshot of $\mathcal{G}'$ at time $t$. Thus, the $\geq$-connected-VIM width of $\mathcal{G}$ is $\max_t(\max_v\deg_{G'_t}(v))+1$, and the result holds.
\end{proof}
In the graph in the construction used in the reduction for Theorem~\ref{thm:firefighter-hard-appendix}, each edge is active exactly once and the edges incident to a vertex occur strictly before any edges in the subtree rooted at that vertex. Thus, the TIM width of this graph is the maximum of the maximum degrees of the snapshots of $\mathcal{G}$ plus 1. This gives the following result; in particular, it resolves the open question posed by Christodoulou et al.~\cite{christodoulou_making_2024} which asks whether \textsc{Temporal Firefighter} is in FPT with respect to $\geq$-connected-VIM width.

\begin{restatable}{theorem}{FirefighterHard}\label{thm:firefighter-hard}
         \textsc{Temporal Firefighter} remains NP-complete even on temporal graphs whose TIM width and $\geq$-connected VIM width are at most 3.
 \end{restatable}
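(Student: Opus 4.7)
The plan is to show that the instance of \textsc{Temporal Firefighter} constructed in the reduction of Theorem~\ref{thm:firefighter-hard-appendix} not only establishes \textbf{NP}-hardness but also produces a temporal tree satisfying the hypotheses of Theorems~\ref{thm:ordered-tree-low-timw} and~\ref{thm:ordered-tree-low-cvim}, with small enough snapshot degrees to conclude that both TIM width and $\geq$-connected-VIM width are bounded by~$3$.

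First, I would recall that Theorem~\ref{thm:firefighter-hard-appendix} reduces \textsc{Max-2-SAT} to \textsc{Temporal Firefighter} via a rooted temporal tree $(G,\lambda)$ with root $r$ in which each edge is active on exactly one timestep and at most two edges are active per timestep. The tree has depth~$2$: the variable vertices $b_{i,x}$ are children of $r$, and the forcing and clause leaves are children of the variable vertices. Crucially, every edge $\{b_{i,x}, r\}$ incident to $r$ is active at some time in $[1,v]$, whereas every edge from a variable vertex $b_{i,x}$ to one of its leaf children is active at some time strictly greater than $v$. Thus, when the tree is rooted at $r$, the edges incident to any vertex are active strictly before all edges in the subtree rooted at that vertex.

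Next, I would apply Theorems~\ref{thm:ordered-tree-low-timw} and~\ref{thm:ordered-tree-low-cvim} to this graph, both of which give a width bound of $\max_t(\max_v \deg_{G'_t}(v)) + 1$. Since each edge of $\mathcal{G}$ is active exactly once in the construction, no edge has two active times with a gap in between, and therefore $\mathcal{G}' = \mathcal{G}$, so it suffices to bound the maximum degree in the snapshots $G_t$. A direct inspection of the four phases of the edge-activation schedule shows that the only pair of edges sharing an endpoint at any timestep is the pair $\{b_{i,0}, r\}, \{b_{i,1}, r\}$ active together at time~$i$ for $i \in [v]$; at all other timesteps the two active edges are vertex-disjoint. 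Hence $\max_t(\max_v \deg_{G_t}(v)) = 2$, giving a width bound of~$3$ for both parameters.

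Combining these observations, the reduction from Theorem~\ref{thm:firefighter-hard-appendix} already produces instances with TIM width and $\geq$-connected-VIM width at most~$3$, so \textsc{Temporal Firefighter} is \textbf{NP}-complete on this restricted class. Membership in \textbf{NP} is immediate since a strategy is a polynomial-sized certificate. I do not expect any real obstacle: the work is in verifying that the construction satisfies the ordering hypothesis of Theorem~\ref{thm:ordered-tree-low-timw}/\ref{thm:ordered-tree-low-cvim} and tracking the per-snapshot degree, both of which follow by inspection of the edge-time assignment defined in the proof of Theorem~\ref{thm:firefighter-hard-appendix}.
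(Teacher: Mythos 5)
Your proposal is correct and matches the paper's own argument exactly: the paper likewise observes that the reduction of Theorem~\ref{thm:firefighter-hard-appendix} yields a temporal tree in which each edge is active once and edges incident to a vertex precede those in its subtree, then invokes Theorems~\ref{thm:ordered-tree-low-timw} and~\ref{thm:ordered-tree-low-cvim} with maximum snapshot degree $2$ to get width $3$. No gaps; your write-up is just a more explicit version of the same reasoning.
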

\end{toappendix}
\section{Meta-algorithms}\label{sec:meta}
Here we introduce two meta-algorithms for temporal problems. These algorithms rely on a few efficient checks to ensure both soundness of the states in the dynamic program, and that we can transition through consecutive states.  While the existence of an fpt-algorithm for a given problem parameterised by TIM width implies the existence of an fpt-algorithm parameterised by VIM width, there are two advantages to providing meta-algorithms for both parameters: firstly, if we are only interested in parameterising by VIM width, the specialised VIM width meta-algorithm will typically give a better running-time bound; secondly, there exist problems (e.g.~\textsc{Temporal Firefighter}) to which we can apply our VIM width meta-algorithm but which are intractable with respect to TIM width. Throughout this section, we will illustrate the definitions given by discussing their application to \textsc{Temporal Hamiltonian Path}, a temporal analogue of the classic \textsc{Hamiltonian Path} problem.

\decisionproblem{Temporal Hamiltonian Path}{A temporal graph $\mathcal{G}$.}{Does there exist a strict temporal path containing every vertex in $\mathcal{G}$?}
\subsection{Meta-algorithm parameterised by VIM width}\label{sec:Sam-chapter}
We begin with the more intuitive of the two algorithms, the meta-algorithm parameterised by VIM width. Since the majority of algorithms parameterised by VIM width take the form of dynamic programs over the VIM sequence, we generalise this method and use it to define a large family of problems which we can solve in this manner. 
%
%
Informally, we require that
\begin{itemize}
    \item we can model the problem with vertex labels and counters,
    \item that we can check whether sets of labels at consecutive timesteps are compatible,
    \item that vertices can only change labels when incident at an active edge, and 
    \item that we can efficiently generate or identify starting and ending sets of labels that give a yes-instance.
\end{itemize}   
We will call any problem that satisfies these conditions \emph{locally temporally uniform}. We prove that any locally temporally uniform problem for which all required subroutines run in time fpt with respect to VIM width is in fact in FPT parameterised by VIM width; it turns out that the converse is also true, so we completely characterise the temporal graph problems belonging to FPT parameterised by VIM width.

The temporal graph problems we consider can be expressed in terms of labellings on the vertices of the input graph that change over time. We define a state of a vertex set as a labelling of the vertices with labels from a set $X$ and a $k$-length vector of integer counters.

\begin{defn}[$(X,k)$-State]
    A $(X,k)$-state on a vertex set $V$ is a pair $(l, c)$, where $l : V \to X$ is a labelling of the vertices in $V$ using the labels from set $X$, and $c$ is a vector containing $k$ integers each of size at most a polynomial of $|V|$.
\end{defn}

In our \textsc{Temporal Hamiltonian Path} example, we use $(X,1)$-states, where the label set $X=\{\visited,\unvisited,\current\}$, and the counter vector contains a single integer $h$, which counts the total number of visited vertices.

In order to obtain tractability with respect to VIM width, we consider problems that can be expressed in terms of sequences of states over the VIM sequence.
We first define temporally uniform problems, for which there exists a transition routine that, when given two states, returns true if the second state can follow the first. We then provide a definition for \textit{locally} temporally uniform problems that further requires that one state can follow from another only if
the labels on vertices outside their active interval do not change. Let $A_t$ denote the set of vertices with incident edges active on timestep $t$, and note that $A_t \subseteq F_t$ (recall that $F_t$ is a bag in the VIM sequence). 
For technical convenience, we let $F_0 = F_1$, and $A_0=A_1$. Otherwise, the transition routine would never be applied to edges in the first snapshot of the input temporal graph.  Throughout this section we define an instance $x$ of a temporal problem as a pair $(\mathcal{G},\beta)$, where $\mathcal{G}$ is a temporal graph and $\beta$ is a string encoding the remaining input of a problem instance. For many problems, $\beta$ will simply encode a set of integers. For \textsc{Temporal Hamiltonian Path}, $\beta$ is an empty string.

\begin{defn}[$(X,k,f_1,f_2)$-Temporally Uniform Problem]
    We say that a decision problem $P$ which takes an input instance $x=(\mathcal{G},\beta)$ is $(X,k,f_1,f_2)$-temporally uniform if and only if there exist:
    \begin{enumerate}
        \item a transition algorithm \textbf{Tr} that takes a static graph $G$, two $(X,k)$-states for the vertices of this graph, and the string $\beta$, runs in time at most $f_1(G,\beta)$, and returns \textbf{true} or \textbf{false},
        \item an accepting algorithm \textbf{Ac} that takes a $(X,k)$-state and $x$, runs in time $f_2(x)$, and returns \textbf{true} or \textbf{false}, and
        \item a starting algorithm \textbf{St} that takes an instance $x$, runs in time $f_2(x)$, and returns a set of initial $(X,k)$-states $S_{0,x}$ for the instance,
    \end{enumerate}

    \noindent such that $x$ is a yes instance of $P$ if and only if there exists a sequence $s_0, ..., s_\Lambda$ of $(X,k)$-states with $s_0 \in S_{0,x}$, $\textbf{Tr}(s_{t-1}, s_t, G_t,\beta)=\textbf{true}$ for all timesteps $1 \leq t \leq \Lambda$, and $\textbf{Ac}(s_\Lambda, x)=\textbf{true}$.
\end{defn}

Continuing with our example of \textsc{Temporal Hamiltonian Path}, our starting routine returns the set containing one state $s_0$ where all vertices are marked $\unvisited$ and $h=0$. We only start counting vertices traversed by a path once there is at least one time-edge in the path. That is, for pairs of states $s_t\neq s_{t+1}$ where $s_t=s_0$, the transition routine returns true if $s_{t+1}$ labels one vertex as $\current$, one as $\visited$ and has $h=2$. For other pairs of consecutive states $s_t,s_{t+1}$, our transition routine ensures that either the current location (and the state) stays the same between timesteps ($s_t=s_{t+1}$), or that there is an active edge between the vertices labelled $\current$ by $s_t$ and $s_{t+1}$, that the vertex labelled $\current$ in $s_t$ is labelled as $\visited$ in $s_{t+1}$, that the vertex labelled $\current$ in the $s_{t+1}$ is labelled as $\unvisited$ in $s_t$, and the counter is incremented. Pseudocode of the transition routine is given in Algorithm~\ref{alg:temp-ham}. The combination of the starting and transition routines ensure that there is exactly one current location at any given time. Finally, the acceptance routine returns true if and only if the counter is equal to $n$.

\begin{algorithm}[ht]\caption{\textsc{Temporal Hamiltonian Path Transition}}\label{alg:temp-ham}
    \begin{algorithmic}[1]
    \Require A static graph $G$ and states $(l_1, (h_1))$ and $(l_2, (h_2))$ for $V(\mathcal{G})$.
    \Ensure Returns true when $(l_2, (h_2))$ corresponds to a path that consists of a single time-edge, or traverses zero or one further vertices than the path corresponding to $(l_1, (h_1))$ and false otherwise.
    \State{Let $U_1$ and $U_2$ be the set of vertices labelled $\unvisited$ by $l_1$ and $l_2$ respectively, and equivalently for $V_1$, $V_2$, and $C_1$, $C_2$.}
    \If{$C_2 \setminus C_1$ contains a single vertex $c_2$}
        \If{$C_1\setminus C_2=\emptyset$}
            \If{$U_1\setminus U_2=\{u,c_2\}$, $V_1\cup \{u\}=V_2$, $\{v_2,c_2\}\in E(G)$, $h_1=0$ and $h_2=2$}
            \State{\Return{True}}
            \EndIf
        \ElsIf{$C_1 \setminus C_2$ contains a single vertex $c_1$}
            \If{$\{c_1, c_2\} \in E(G)$, $c_2 \in U_1$, $h_2=h_1+1$, and $V_1 \cup \{c_1\} = V_2$}
                \State{\Return{True}}
            \EndIf
        \EndIf
    \EndIf
    \If{$(l_1, (h_1)) = (l_2, (h_2))$}
        \State{\Return{True}}
    \EndIf
    \State{\Return{False}}
    \end{algorithmic}
\end{algorithm}

We say that two states $s = (l, c)$ and $s' = (l', c')$ for vertex set $V$ \emph{agree} on a vertex set $W$ if and only if $W \subseteq V$, $l$ and $l'$ give the same label to every vertex in $W$, and $c = c'$.  We can now define locally temporally uniform problems. Locality refers to the fact that we restrict temporally uniform problems such that vertices outside of their active interval cannot change label. Recall that the set $A_0$ is the set of vertices in their active interval at time~1.

\begin{defn}[$(X,k,f_1,f_2)$-Locally Temporally Uniform Problem]\label{def:loctempunf}
    We say that a $(X,k,f_1,f_2)$-temporally uniform problem $P$ is $(X,k,f_1,f_2)$-locally temporally uniform if and only if for any temporal graph $\mathcal{G}$, and instance of the problem $x=(\mathcal{G},\beta)$:
    \begin{enumerate}
        \item There exists a label $U$ such that in every initial state $s_0 \in S_{0,x}$, all vertices not in $A_0$ are labelled $U$.
        \item For any pair of states $s$ and $s'$, if $\textbf{Tr}(s, s', G,\beta)=\textbf{true}$ then $s$ and $s'$ give the same label to every isolated vertex in $G$.
        \item For every quadruple of states $r$, $r'$, $s$, and $s'$, if $r$ and $s$ agree on the non-isolated vertices of $G$, the pairs of states $r,r'$ and $s,s'$ give the same label to every isolated vertex of $G$, and $r'$ and $s'$ agree on the non-isolated vertices of $G$, then $\textbf{Tr}(r, r', G,\beta) = \textbf{Tr}(s, s', G,\beta)$.
        \item For every pair of states $s_\Lambda$ and $s'_\Lambda$ that agree on the vertices in $A_\Lambda$, $\textbf{Ac}(s_\Lambda, x)=\textbf{true}$ if and only if $\textbf{Ac}(s'_\Lambda, x)=\textbf{true}$.
    \end{enumerate}
\end{defn}

Returning to \textsc{Temporal Hamiltonian Path}, being locally temporally uniform enforces that no vertices are labelled $\current$ or $\visited$ before they are incident to an active edge; vertices which are not in their active interval cannot change labels between timesteps; and that, in general, acceptance of any final states is not dependent on the labels of vertices which are not in their active interval. Since we only check the counter for acceptance of \textsc{Temporal Hamiltonian Path}, this final condition trivially holds.

\begin{restatable}{theorem}{vimhamiltonthm}
    \textsc{Temporal Hamiltonian Path} is $(X,1,f_1,f_2)$-locally temporally uniform, where $X=\{\visited,\unvisited,\current\}$, and $f_1(G,\beta)=f_2(x)=n$ for any snapshot $G$ of a temporal graph with $n$ vertices.
\end{restatable}

\begin{toappendix}
We now give a meta-algorithm that solves any locally temporally uniform problem using the starting routine, transition routine and the accepting routine, when given the input $x = (\mathcal{G}, \beta)$. This algorithm uses locality to avoid having to consider every possible state on each timestep. Instead, on each timestep $t$ the algorithm considers only one state for each possible state for $F_t$ (the bag of the VIM sequence at time $t$), the number of which we bound in terms of the VIM width and the labels and vectors we require to express our states. These states for $F_t$ are extended to states for the vertex set of the input graph by giving every other vertex label $U$. We first obtain a lemma that shows that states extended in this manner will agree with any state in a sequence following from an initial state, on all vertices not yet in their active interval.

\begin{restatable}{lem}{vimlabelulater}\label{lem:labulater}
    Consider any instance $(\mathcal{G}, \beta)$ of a $(X,k,f_1,f_2)$-locally temporally uniform problem, such that $S_0$ is the set of initial states generated by \textbf{St}, \textbf{Tr} is the transition algorithm and $[F_t]_{t\leq\Lambda}$ is the VIM sequence of $\mathcal{G}$.  If $s_0, ..., s_t$ is a sequence of states such that $s_0 \in S_0$, and $\textbf{Tr}(s_{i-1}, s_i, G_i,\beta)=\textbf{true}$ for $1 \leq i \leq t$, then $s_t$ gives label $U$ to every vertex $v \in \bigcup_{t' > t}{F_{t'}} \setminus F_t$.
\end{restatable}
\begin{proof}
    We proceed by induction on the timestep $t$. If $t=0$, then as the problem is locally temporally uniform, $s_0$ gives label $U$ to every vertex not in $F_0$.
    
    Then, assume by induction that $s_{t-1}$ gives label $U$ to every vertex $v \in \bigcup_{t' > t-1}{F_{t'}} \setminus F_{t-1}$. If $\textbf{Tr}(s_{t-1}, s_t, G_t,\beta)=\textbf{true}$, then $s_{t-1}$ and $s_t$ give the same label to any isolated vertex in $G_t$, and therefore give label $U$ to every vertex $v \in \bigcup_{t' > t}{F_{t'}} \setminus F_{t} \subseteq \bigcup_{t' > t-1}{F_{t'}} \setminus F_{t-1}$ as any vertex $v \in \bigcup_{t' > t}{F_{t'}} \setminus F_t$ is not in $F_t$, and therefore cannot have any active incident edges active on timestep $t$, so is isolated in $G_t$. 
\end{proof}

\begin{algorithm}[H]\caption{\textsc{Locally Temporally Uniform Algorithm}}\label{alg:loctempunf}
\begin{algorithmic}[1]
\Require A problem input $x = (\mathcal{G}, \beta)$ with the starting routine \textbf{St}, transition routine \textbf{Tr} and acceptance routine \textbf{Ac}.
\Ensure Whether $x$ is a yes-instance.
\State{Let $S_0$ be the set of $(X,k)$-states output by $\textbf{St}(x)$.}
\State{Fix a label $U \in X$}
\For{$t = 1, \dots, \Lambda$}
    \State{$S_t$ $\leftarrow$ $\{\}$}
    \ForAll{Possible pairs $(l_{F_t},\textbf{v}_{F_t})$ of labellings $l_{F_t}:V(F_t)\to X$ and vectors $\textbf{v}_{F_t}$ with $k$ entries of maximum magnitude $b$}
        \State{$s_t$ $\leftarrow$ the state agreeing with $(l_{F_t},\textbf{v}_{F_t})$ such that all vertices not in $F_t$ are given label $U$}
        \ForAll{$s_{t-1} \in S_{t-1}$}
            \State{$r_{t-1}$ $\leftarrow$ the state agreeing with $s_{t-1}$ on $F_t$ where all other vertices get label $U$}
            \If{\Call{Tr}{$r_{t-1}$, $s_t$, $G_t$, $\beta$}}
            \State{$S_t$ $\leftarrow$ $S_t \cup \{s_t\}$}
            \EndIf
        \EndFor
    \EndFor
    \EndFor
    \ForAll{$s_{\Lambda(\mathcal{G})} \in S_{\Lambda(\mathcal{G})}$}
        \If{\Call{Ac}{$s_{\Lambda(\mathcal{G})}$, $x$}}
            \State{\Return{True}}
        \EndIf
    \EndFor
    \State{\Return{False}}
    \end{algorithmic}
\end{algorithm}


We claim that \cref{alg:loctempunf} solves any temporally locally uniform problem.  On a timestep $t$, the algorithm considers every possible state for $F_t$, and then extends these states to the entire graph by giving every other vertex some fixed label. We first argue that in doing so, the algorithm does not omit any required states, and for any sequence $S$ of states of VIM sequence starting with an initial state such that the transition routine returns \textbf{true} for all consecutive pairs the algorithm will produce a sequence of states of each snapshot which agrees with $S$.

\begin{restatable}{lem}{vimcorrectran}\label{lem:correctran} 
Let $x=(\mathcal{G},\beta)$ be an instance of a $(X,k,f_1,f_2)$-locally temporally uniform problem $P$ with transition routine $\textbf{Tr}$, acceptance routine $\textbf{Ac}$, and associated set $S_0$ of initial $(X,k)$-states output by starting routine $\textbf{St}$. Fix any $t$ with $0\leq t\leq \Lambda$. Then, there exists a sequence of states $s_0, \ldots, s_t$ with $s_0 \in S_0$ and $\textbf{Tr}(s_{i-1}, s_i, G_i,\beta)=\textbf{true}$ for all timesteps $1 \leq i \leq t$, if and only if there exists a state $s'_t$ in the set $S_t$ produced by \Cref{alg:loctempunf} that agrees with $s_t$ on $F_t$, where $F_t$ is the bag for timestep $t$ in the VIM sequence of $\mathcal{G}$.
\end{restatable}

\begin{proof}
    We in fact prove a stronger result, that not only does there exist a state $s'_t$ in $S_t$ that agrees with $s_t$ on $F_t$, but that this state gives label $U$ to every vertex not in $F_t$. We proceed by induction on the length of the sequence $t$. The base case when $t=0$ is trivial, as $S_0$ output by the starting routine, and by \cref{def:loctempunf} every state $s_0 \in S_0$ gives label $U$ to every vertex not in $F_0$.

    Assume by induction that there exists a sequence of states $s_0, ... s_{t-1}$ with $s_0 \in S_0$ and $\textbf{Tr}(s_{i-1}, s_i, G_i,\beta)=\textbf{true}$ for all timesteps $1 \leq i \leq t-1$, if and only if there exists a state $s'_{t-1}$ in the set $S_{t-1}$ produced by \Cref{alg:loctempunf} that agrees with $s_{t-1}$ on $F_{t-1}$, and that gives label $U$ to every vertex not in $F_{t-1}$.

    Now, consider any state $s_t$ such that there exists a sequence of states $s_0, \ldots, s_t$ with $s_0 \in S_0$ and $\textbf{Tr}(s_{i-1}, s_i, G_i,\beta)=\textbf{true}$ for all timesteps $1 \leq i \leq t$.  By induction, there exists a state $s'_{t-1} \in S_{t-1}$ that agrees with $s_{t-1}$ on $F_{t-1}$, and gives all vertices not in $F_{t-1}$ label $U$. This state is used by \Cref{alg:loctempunf} to produce a state $r_{t-1}$ which agrees with $s'_{t-1}$ on $F_t$. By \Cref{lem:labulater}, any vertices in $F_t \setminus F_{t-1}$ are given label $U$ by $s_{t-1}$, and $r_{t-1}$ also gives these vertices label $U$, and therefore $r_{t-1}$ agrees with $s_{t-1}$ on $F_t$, and gives label $U$ to every vertex not in $F_t$. \Cref{alg:loctempunf} will consider a state $s'_t$ that agrees with $s_t$ on $F_t$, such that every vertex not in $F_t$ is given label $U$, as it considers every possible state for $F_t$, extending these states by labelling the remaining vertices with $U$. Because $\textbf{Tr}(s_{t-1}, s_t, G_t,\beta)=\textbf{true}$ and $A_t$ is exactly the set of non-isolated vertices of $G_t$ we have that $s_{t-1}$ and $s_t$ give the same label to every vertex not in $A_t$ by \cref{def:loctempunf}. Now, consider any vertex $v \notin A_t$. If $v \in F_t$ then $s_t$ gives the same label to $v$ as $s'_t$, as $s'_t$ and $s_t$ agree on $F_t$. Then $s_{t-1}$ also gives the same label to $v$ as $v \notin A_t$ and $s_{t-1}$ and $s_t$ give the same label to any vertex not in $A_t$. Finally, $r_{t-1}$ gives the same vertex to $v$ as it agrees with $s_{t-1}$ on $F_t$. Otherwise, if $v \notin F_t$, both $s'_{t-1}$ and $r_{t-1}$ give label $U$ to $v$. Therefore $r_{t-1}$ and $s'_{t-1}$ give the same label to every vertex not in $A_t$. Furthermore, $s'_t$ and $s_t$ agree on $A_t$, as $A_t \subseteq F_t$, as do $r_{t-1}$ and $s_{t-1}$. Then by \Cref{def:loctempunf}, $\textbf{Tr}(r_{t-1}, s'_t, G_t,\beta)=\textbf{true}$, and line 10 of \Cref{alg:loctempunf} will place the state $s'_t$ in $S_t$.

    Conversely, consider any state $s'_t \in S_t$, and see that there must exist some state $s'_{t-1} \in S_{t-1}$ such that if $r_{t-1}$ is a state that agrees with $s'_{t-1}$ on $F_t$ and gives label $U$ to all vertices not in $F_t$, then $\textbf{Tr}(r_{t-1}, s'_t, G_t,\beta)=\textbf{true}$. By induction $s'_{t-1}$ agrees on $F_{t-1}$ with some state $s_{t-1}$ where there exists a sequence of states $s_0, ... s_{t-1}$ with $s_0 \in S_0$ and $\textbf{Tr}(s_{i-1}, s_i, G_i,\beta)=\textbf{true}$ for all timesteps $1 \leq i \leq t-1$. Furthermore, by definition, $s'_{t-1}$ gives label $U$ to every vertex not in $F_{t-1}$, and so $r_{t-1}$ gives label $U$ to every vertex in $F_t \setminus F_{t-1}$. By \Cref{lem:labulater} $s_{t-1}$ also gives label $U$ to every vertex in $F_t \setminus F_{t-1}$, and $r_{t-1}$ agrees with $s_{t-1}$ on $F_t \cap F_{t-1}$, so therefore $r_{t-1}$ agrees with $s_{t-1}$ on $F_t$. As $\textbf{Tr}(r_{t-1}, s'_t, G_t,\beta)=\textbf{true}$ we have that $r_{t-1}$ and $s'_t$ give the same label to every isolated vertex in $G_t$. Furthermore $r_{t-1}$ and $s_{t-1}$ agree on the non-isolated vertices of $G_t$, as these are $A_t \subseteq F_t$. Consider now the state $s_t$ that agrees with $s'_t$ on the non-isolated vertices of $G_t$, and gives the same label as $s_{t-1}$ to every isolated vertex in $G_t$. By \cref{def:loctempunf} we have that $\textbf{Tr}(s_{t-1}, s_t, G_t,\beta)=\textbf{true}$ because $\textbf{Tr}(r_{t-1}, s'_t, G_t,\beta)=\textbf{true}$. Finally see that for any vertex $v \in F_t \setminus A_t$, that is for any isolated vertex of $G_t$ in $F_t$, $s_t$ and $s_{t-1}$ give the same label to $v$. Now, as $r_{t-1}$ and $s_{t-1}$ agree on $F_t$, $r_{t-1}$ also gives the same label to vertex $v$. Finally $s'_t$ gives the same label to vertex $v$, as it gives the same label as $r_{t-1}$ to every isolated vertex of $\mathcal{G}$. Therefore $s'_t$ and $s_t$ agree on $F_t$ as required.
\end{proof}

We can use these tools to show that our algorithm solves any $(X,k,f_1,f_2)$-locally temporally uniform problem, and that the running time can be bounded in terms of the running times of the subroutines.
\end{toappendix}
\begin{theoremrep}\label{thm:loctempunf}
    Let $x=(\mathcal{G},\beta)$ be an instance of a $(X,k,f_1,f_2)$-locally temporally uniform problem $P$ where $\mathcal{G}$ has $n$ vertices, lifetime $\Lambda$ and VIM width $\omega$.  
    We can determine if $x$ is a yes-instance of $P$ in time $O(\Lambda (\max_{t\in[\Lambda]}f_1(G_t,\beta))f_2(x)b^{2k}|X|^{2\omega})$, where 
    $b \in \mathbb{N}$ is an upper bound on the absolute value of any entry of a vector in a
    $(X,k)$-state.
\end{theoremrep}
\begin{proof}
    We show that \Cref{alg:loctempunf} returns true if and only if it is given a yes-instance as input, and furthermore that \Cref{alg:loctempunf} runs in the required time.

    If \Cref{alg:loctempunf} returns true, then there exists a state $s_\Lambda \in S_\Lambda$ such that $\textbf{Ac}(s_\Lambda, x) = \textbf{true}$. Furthermore, as \Cref{alg:loctempunf} only places a state $s_t$ in $S_t$ if there exists a state $s_{t-1} \in S_{t-1}$ with $\textbf{Tr}(s_{t-1}, s_t, G_t,\beta)=\textbf{true}$, there exists a sequence $s_0, ..., s_\Lambda$ of states, such that $s_t \in S_t$ for every timestep $t$, and $\textbf{Tr}(s_{t-1}, s_t, G_t,\beta)=\textbf{true}$ for all timesteps $t \geq 1$. Therefore, by \Cref{def:loctempunf}, $x$ is a yes-instance.

    If $x$ is a yes-instance, then there exists a sequence $s_0, ..., s_\Lambda$ of $(X,k)$-states with $s_0 \in S_0$, and $\textbf{Tr}(s_{t-1}, s_t, G_t)=\textbf{true}$ for all timesteps $t \geq 1$, and $\textbf{Ac}(s_\Lambda, x)=\textbf{true}$. Then, by \Cref{lem:correctran} there exists a state $s'_\Lambda \in S_\Lambda$ that agrees with $s_\Lambda$ on $A_\Lambda$. Then, by \Cref{def:loctempunf} $\textbf{Ac}(s'_\Lambda, x)=\textbf{true}$, as $\textbf{Ac}(s_\Lambda, x)=\textbf{true}$.

    For every timestep $t$, there is at most one entry in $S_t$ for every possible state of $F_t$, of which there at most $b^k|X|^\omega$. Now for each timestep $t \in [\Lambda]$ \Cref{alg:loctempunf} runs the transition routine for every pair of a possible state in $S_t$, and a state in $S_{t-1}$. Since there are at most $\omega$ vertices in a bag at any given time, this can be achieved in time $O(f_1(G,\beta)b^{2k}|X|^{2\omega}\Lambda)$ over all timesteps. Finally, \Cref{alg:loctempunf} runs the acceptance routine for every state in $S_\Lambda$, which can be achieved in time $O(f_2(n, \Lambda,\beta)b^k|X|^\omega)$, giving an overall runtime of $O(\Lambda f_1(G,\beta)f_2(n,\Lambda,\beta)b^{2k}|X|^{2\omega}+f(n, \Lambda,\beta)b^k|X|^\omega)=O(\Lambda f_1(G,\beta)f_2(n, \Lambda,\beta)b^{2k}|X|^{2\omega})$.
\end{proof}


In fact, our meta-algorithm gives an exact characterisation of the problems in FPT with respect to VIM width.


\begin{theoremrep}\label{thm:VIM-iff}
    Let P be a problem that takes $x=(\mathcal{G},\beta)$ as input, where $\mathcal{G}$ is a temporal graph with VIM width $\omega$ and $\beta$ is a string. $P$ is in FPT with respect to $\omega$ if and only if $P$ is a $(X,k,f_1,f_2)$-locally temporally uniform problem such that 
      \begin{itemize}
         \item $k$ is a constant, 
          \item $|X|$ is upper bounded by a function of $\omega$ alone, and
          \item for a computable function $g$ and any snapshot $G$ of $\mathcal{G}$, $f_1(G,\beta)$, $f_2(\mathcal{G},\beta)$ and $b$ are all bounded above by $g(\omega)(|G| + |\beta|)^{O(1)}$,
     \end{itemize}
    where $b$ is the maximum absolute value of any entry of a vector in a $(X,k)$-state of $\mathcal{G}$.
\end{theoremrep}
\begin{proof}
    Using Theorem~\ref{thm:loctempunf}, we get that if $P$ is $(X,k,f_1,f_2)$-locally temporally uniform where $k$ is a constant, $|X|$ is upper bounded by a function of $\omega$, and for a computable function $g$ and any snapshot $G$ of $\mathcal{G}$, $f_1(G,\beta)$, $f_2(\mathcal{G},\beta)$ and $b$ are all bounded above by $g(\omega)(|G| + |\beta|)^{O(1)}$, then it is in FPT with respect to $\omega$.

    We now show the reverse direction. Suppose there exists an fpt-algorithm $A$ for $P$ with respect to $\omega$. Let the runtime of $A$ be $a(\omega)\text{poly}(n,\Lambda,|\beta|)$ for some computable function $a$. Then, we argue that $P$ must be $(X,1,n,a)$-locally temporally uniform where $X$ is a set consisting of a single label. To show this we construct the states required and prove that an instance $x=(\mathcal{G},\beta)$ is a yes-instance if and only if there exists a sequence of states $s_0,\ldots,s_{\Lambda}$ such that $s_0$ is an initial state, $\textbf{Tr}(s_{i-1},s_{i},G_{i},\beta)$ returns true for all $1\leq i \leq \Lambda$, and $\textbf{Ac}(s_{\Lambda}, x)$ returns true. Our set of labels consists of a single label, call it $U$. The counter vector consists of a vector with one entry, let that entry be $1$. Since this predetermines all states in the sequence, the only possible initial states are those such that all vertices are labelled $U$ and the counter is equal to $1$. Our transition routine returns true if and only if the two states are equal, which must always be the case given our description of the states. This leaves the acceptance routine. This is the algorithm $A$ with input $x$. It is clear that the acceptance routine returns true if and only if $x$ is a yes-instance. Note that, since all states are the same, output of the routines cannot depend on vertices not in their active interval.  Therefore, $P$ is $(X,1,n,a)$-locally temporally uniform and the statement holds.
\end{proof}
\subsection{Meta-algorithm parameterised by TIM width}\label{sec:TIM-alg}

We now move on to our second meta-algorithm. This builds on the techniques used in our first meta-algorithm to function on a more general decomposition. 
Previously, we needed only to be able to generate a set of acceptable starting states, to determine whether we could transition from one state to the next efficiently, and to determine whether we have an acceptable finishing state efficiently. In this meta-algorithm we require efficient subroutines to check starting states, finishing states, and the validity of states that are neither at the beginning or end of the temporal graph, all of which can be applied to connected components independently. We also require a transition routine which determines whether we can transition between consecutive labellings of vertices in a connected component of a snapshot. Finally, since all the other checks relate to connected components, we also introduce a vector bound on the sum of vectors associated with all components at all times, to allow us to enforce certain global constraints over the whole graph. Since the algorithm is parameterised by TIM width, it is most useful for problems where we do not need to consider all vertices in their active interval simultaneously, but can consider the vertices in different connected components at each time independently.

To be able to consider the connected components of each snapshot independently, we generalise the notion of a $(X,k)$-state. This generalisation allows us to have a different vector of counters for each connected component, which means we can count what happens in each connected component separately. That is, we label the vertices with elements of $X$ and, for each connected component, we may have a different vector with $k$ entries. 
\begin{defn}[$(X,k)$-Component State]
    A $(X,k)$-component state on a static graph $G=(V,E)$ with $c$ connected components is a tuple $(l, \textbf{\emph{v}}_1,\ldots,\textbf{\emph{v}}_{c}, \nu)$, where, for any subset $V'$ of $V(G)$, $l : V' \to X$ is a labelling of the vertices in $V'$ using the labels from set $X$, $\textbf{\emph{v}}_1,\ldots,\textbf{\emph{v}}_{c}$ are vectors of $k$ integers, each of which is of magnitude $|G|^{O(1)}$, and $\nu$ is a bijective map from connected components of $G$ to the vectors $\textbf{\emph{v}}_1,\ldots, \textbf{\emph{v}}_c$.
\end{defn}

Applying this definition to the Hamiltonian path example, we reuse the set of 
labels $X=\{\visited,\unvisited,\current\}$, and the vectors in the state now contain one integer $p$, which counts the number of current locations in each snapshot.


Unlike the previous meta-algorithm, this algorithm functions by allowing the connected components of each snapshot to be considered separately. As a result, we allow a different vector in the state for each connected component in the snapshot. Define the \emph{restriction} of a $(X,k)$-component state $s=(l,\textbf{v}_1,\ldots,\textbf{v}_c, \nu)$ to a connected component $C$ to be the state $s|_C=(l|_C,\nu(C))$ where $l|_C$ is the restriction of the labelling $l$ to the vertices in $C$. We denote by $\mathcal{C}_t$ the set of connected components in the snapshot $G_t$ of the temporal graph $\mathcal{G}$ at time $t$.  Note that, when we restrict a $(X,k)$-component state to the subgraph induced by a connected component of a static graph, we have a $(X,k)$-state of that subgraph as defined earlier. Specifically, this holds when restricted to a component of a snapshot.

\begin{defn}[$(X,k,f)$-Component-Exchangeable Temporally Uniform Problem]\label{def:component-temp-unif}
    We say that a decision problem $P$ with input $x=(\mathcal{G},\beta)$ such that $\mathcal{G}$ has lifetime $\Lambda$ is $(X,k,f)$-component-exchangeable temporally uniform if and only if there exist:
    \begin{enumerate}
        \item a transition algorithm \textbf{Tr} that takes two labellings for the vertices of a connected, static graph $C$ with labels from the set $X$, the graph $C$, and the problem instance, runs in time at most $f(|C|,x)$, and returns \textbf{true} or \textbf{false},
        \item a starting algorithm \textbf{St}, a validity algorithm \textbf{Val}, and a finishing algorithm \textbf{Fin} that all take a $(X,k)$-state of a connected static graph $C$ and the problem instance, run in time at most $f(|C|,x)$, and return \textbf{true} or \textbf{false},
        \item a vector $\textbf{\emph{v}}_{\text{upper}}$ of $k$ integers,
    \end{enumerate}
    \noindent such that $x=(\mathcal{G}, \beta)$ is a yes instance of $P$ if and only if there exists a sequence $s_0, ..., s_\Lambda$ of $(X,k)$-component states of each snapshot of $\mathcal{G}$ where
    \begin{enumerate}[i]
        \item for each connected component $C_1$ of $G_1$, $\textbf{St}(s_0|_{C_1},C_1,x)=\textbf{true}$;
        \item for each connected component $C_{\Lambda}$ of $G_{\Lambda}$, $\textbf{Fin}(s_{\Lambda}|_{C_{\Lambda}},C_{\Lambda},x)=\textbf{true}$;
        \item $\textbf{Tr}(l_{t-1}|_{C_t}, l_t|_{C_t}, C_t,x)=\textbf{true}$ where $l_i$ is the labelling of vertices of state $s_i$, for all times $1 \leq t \leq \Lambda$ and connected components $C_t$ of $G_t$;
        \item $\textbf{Val}(s_t|_{C_t}, C_t,x)=\textbf{true}$ for all times $1 \leq t < \Lambda$ and connected components $C_t$ of $G_t$; and
        \item the sum of vectors satisfies $\sum_{0\leq t\leq \Lambda} \sum_{C \in\mathcal{C}_t}\nu_{s_t} (C)\leq \textbf{\emph{v}}_{\text{upper}}$, where $\nu_{s_t}$ is the function $\nu$ in the $(X,k)$-component state $s_t$ and $\leq$ denotes element-wise vector inequality.
    \end{enumerate}    
\end{defn}
Note that we only require a vector to upper bound the counters as any problem where we require some or all of these to be lower bounded can be encoded using negative entries.

When applying this meta-algorithm to \textsc{Temporal Hamiltonian Path}, we make our upper bound on the sum of the values of the vectors, $\textbf{v}_{\text{upper}}=(\Lambda)$. With some effort, we can show that the combination of our subroutines and enforcing that the sum of counters is at most $\Lambda$ gives us that there is at most one ``current location'' at any time, if we construct the states in such a way that there is at least one current location in each snapshot.
The starting routine checks that there is at most one vertex labelled $\current$ in each connected component of the first snapshot, that the counter for the component matches the number of $\current$ vertices, and that every other vertex is labelled $\unvisited$. The finishing and validity routines similarly check that there is at most one vertex labelled $\current$ in the connected component, and the finishing routine additionally checks that every other vertex is labelled $\visited$. Algorithm~\ref{alg:temp-ham-timw} gives the transition algorithm.

\begin{algorithm}\caption{\textsc{Temporal Hamiltonian Path Component Exchangeable Transition Routine}}\label{alg:temp-ham-timw}
    \begin{algorithmic}[1]
    \Require A connected component $C$ and labellings $l_1$ and $l_2$ for $V(C)$.
    \Ensure Returns true when there exists a path in $\mathcal{G}$ visiting the vertices labelled $\visited$ ending at a vertex labelled $\current$ by $l_1$ if and only if there is a path which visits the vertices labelled $\visited$ by $l_2$ which ends at a vertex labelled $\current$ by $l_2$
    \State{Let $\text{Unvisited}_1$ and $\text{Unvisited}_2$ be the set of vertices labelled $\unvisited$ by $l_1$ and $l_2$ respectively, and equivalently for $\text{Visited}_1$, $\text{Visited}_2$, and $\text{Current}_1$, $\text{Current}_2$.}
    \If{$\text{Current}_1 \setminus \text{Current}_2$ contains a single vertex $c_1$, and $\text{Current}_2 \setminus \text{Current}_1$ contains a single vertex $c_2$}
        \If{$\{c_1, c_2\} \in E(C)$, $c_2 \in \text{Unvisited}_1$, and $\text{Visited}_1 \cup \{c_1\} = \text{Visited}_2$}
            \State{\Return{True}}
        \EndIf
        \ElsIf{$l_1= l_2$}
        \State{\Return{True}}
    \EndIf
    \State{\Return{False}}
    \end{algorithmic}
\end{algorithm}

For ease, we add an additional set of nodes to the TIM decomposition at time 0 by duplicating those at time 1 and adding an arc from the copy at time 0 to the copy at time 1. This has the same function as the additional bag at time 0 added to the VIM sequence in Section~\ref{sec:Sam-chapter} -- it allows us to run the transition routine on the first snapshot of the graph. This does not increase the width of the TIM decomposition, and only adds leaves to the decomposition. 

The meta-algorithm we describe is on a rooted TIM decomposition. The root can be arbitrarily chosen since its main purpose is to provide an orientation for the nodes of the decomposition. Note that this is unrelated to the direction of the edges in the decomposition tree. Here, the parent of a node is the unique vertex in its neighbourhood which is on the undirected path from the bag to the root in the underlying graph of the decomposition. The children of a node $s$ are all the nodes for which $s$ is their parent. Figure~\ref{fig:eg-tim-root} depicts a rooted TIM decomposition of the example graph in Figure~\ref{fig:eg-both} with the additional bags at time 0. We call a connected component of a snapshot of a temporal graph a \emph{timed} connected component.

\begin{figure}[ht]
    \centering
    \includegraphics[width=0.55\linewidth,page=9]{figures.pdf}
    \caption{The TIM decomposition of the temporal graph in Figure~\ref{fig:eg-both} rooted at the bag containing vertices $a$ and $b$ at time 4 and with added bags at time 0. The time with which a node is labelled is depicted at the top of the box.}
    \label{fig:eg-tim-root}
\end{figure}

We now give some intuition for how our meta-algorithm functions. Given a $(X,k,f)$-component-exchangeable temporally uniform problem and a rooted TIM decomposition of the input temporal graph, our algorithm dynamically programs from leaves to root of an auxiliary decomposition. 
As part of the dynamic program, we keep a \textbf{total} vector of the vectors that appear in the subtree rooted at the node in question. This allows us to enforce that the sum of all vectors of states is bounded above by $\textbf{v}_{\text{upper}}$ when we reach the root.

Note that, in a TIM decomposition a node may have both a child and a parent such that the parent bag and child bag have the same label. For example, if $C$ is a connected component of $G_{t-1}$ and we need to know the label of all vertices from $C$ at time $t$ to determine whether a transition is possible, we may require multi-generational comparisons of states (since vertices of $C$ may belong to different components of $G_{t}$). To counteract this, we define an auxiliary decomposition graph which is a variation of the TIM decomposition. We find this decomposition by adding duplicate copies of each vertex in each bag to the parent bag. This allows us to only consider the connected components in a bag and its children to determine whether a state of the bag could correspond to a possible solution.

\begin{restatable}{theorem}{temphamtimthm}\label{thm:temphamtimthm}
         \textsc{Temporal Hamiltonian Path} is $(X,1,f)$-component-exchangeable temporally uniform, where $X=\{\visited,\unvisited,\current\}$, and $f(|C|,x)=\phi$ for every timed connected component $C$ of an input temporal graph with TIM width $\phi$.
\end{restatable}

\begin{toappendix}
\begin{defn}[2-Step TIM decomposition]
    Given a temporal graph $\mathcal{G}$ with a TIM decomposition $(T,B,\tau)$, the corresponding $\twostep$ \emph{TIM decomposition} $(T,B^2)$ of $\mathcal{G}$ is indexed by the same tree $T$ and the bag $B^2(s)$ of a node $s$ is the set of pairs
    $\{(v,\tau(s')\, :\, s'\in S, v\in B(s')\}$, where $S$ is the union of $s$ and its children in $T$.
    The width of this decomposition is the maximum cardinality of the bags in the decomposition. We refer to $(T,B,\tau)$ as a TIM decomposition \emph{associated to} $(T,B^2)$.
\end{defn}
An example of a $\twostep$ TIM decomposition can be found in Figure~\ref{fig:eg-tim-2step}.

\begin{figure}[ht]
    \centering
    \includegraphics[width=0.8\linewidth,page=10]{figures.pdf}
    \caption{The $\twostep$ TIM decomposition corresponding to the rooted TIM decomposition in Figure~\ref{fig:eg-tim-root} of the temporal graph in Figure~\ref{fig:eg-both}.}
    \label{fig:eg-tim-2step}
\end{figure}

We refer to a bag $B(s)$ of a TIM decomposition as the bag \emph{corresponding} to a bag $B^2(s)$ in a $\twostep$ TIM decomposition (and vice versa) if $B^2(s)$ is the union of $B(s)$ and its children. Note that, for all bags in a $\twostep$ TIM decomposition, the corresponding bag of a TIM decomposition is indexed by the same node in the tree. 

\begin{observation}
    Given a TIM decomposition $(T,B,\tau)$ of width $\phi$ of a temporal graph $\mathcal{G}$ with $n$ vertices, we can construct the corresponding $\twostep$ TIM decomposition in $O(n\phi^2)$ time.
\end{observation}
We note, using Observation~\ref{obs:tim-neighbours}, that there are at most 2 copies of each vertex in a bag of a TIM decomposition in the bags of its children -- one copy in a bag of a node labelled with the time before and another at the time after. This implies that there are at most 3 times which appear in the vertex-time pairs in a bag of the $\twostep$ decomposition. These times are the time assigned to the corresponding bag of the TIM decomposition, and the times directly before and after that time. This combined with Observation~\ref{obs:tim-no-children} (that each bag of a TIM decomposition has at most $2\phi$ neighbours) leads us to the following observation.
\begin{observation}\label{obs:two-step-width}
    Given a temporal graph $\mathcal{G}$ with TIM decomposition $(T,B,\tau)$, then a $\twostep$ TIM decomposition of $\mathcal{G}$ has width at most $3\phi^2$. 
\end{observation}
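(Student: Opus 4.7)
The plan is to directly bound the cardinality of an arbitrary bag $B^2(s)$ in the 2-step decomposition by separating the contributions from $B(s)$ itself and from its children, and then invoking the earlier structural observations on TIM decompositions to bound each contribution.

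First I would fix a node $s \in V(T)$ and decompose
\[
|B^2(s)| \;=\; |\{(v,\tau(B(s))) : v \in B(s)\}| \;+\; \sum_{s_c \text{ child of } s} |\{(v,\tau(B(s_c))) : v \in B(s_c)\}|.
\]
The first term is at most $\phi$, directly from the definition of the width of the associated TIM decomposition. For the second term, I would note that the pairs contributed by distinct child bags are genuinely distinct: by property~1 of Definition~\ref{def:TIMW}, a vertex $v$ appears in exactly one bag at each time, so two different children $s_{c_1}, s_{c_2}$ with $\tau(B(s_{c_1})) = \tau(B(s_{c_2}))$ cannot both contain the same $v$; if instead $\tau(B(s_{c_1})) \neq \tau(B(s_{c_2}))$, the time-components of the pairs already differ. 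Hence the sum simply counts vertices in child bags without overcounting.

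Next I would bound the number of children of $s$. By Observation~\ref{obs:tim-no-children}, $s$ has at most $2\phi$ neighbours in $T$, and since the children form a subset of the neighbours, there are at most $2\phi$ child bags. Each such bag has size at most $\phi$, so the second term is at most $2\phi \cdot \phi = 2\phi^2$.

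Combining the two estimates gives
\[
|B^2(s)| \;\leq\; \phi + 2\phi^2 \;\leq\; 3\phi^2,
\]
where the last inequality holds for $\phi \geq 1$ (which we may assume, since otherwise the decomposition is trivial). Since $s$ was arbitrary, the width of the 2-step TIM decomposition is at most $3\phi^2$. I don't anticipate a real obstacle here: the entire argument is a short counting exercise that rests on the earlier observations (in particular Observation~\ref{obs:tim-no-children} for the branching bound and property~1 of Definition~\ref{def:TIMW} to avoid double-counting), and the only subtlety is verifying that distinct children cannot contribute the same vertex-time pair.
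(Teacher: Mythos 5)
Your proof is correct and follows essentially the same route as the paper: the paper also obtains the bound by combining the width bound $|B(s_c)|\leq\phi$ for each bag with Observation~\ref{obs:tim-no-children} (at most $2\phi$ children, via Observation~\ref{obs:tim-neighbours}), giving $\phi + 2\phi\cdot\phi \leq 3\phi^2$. Your extra check that distinct children contribute distinct vertex-time pairs is not needed for an upper bound but is a harmless (and correct) refinement.
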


By noting that, for each node $s$ with parent $s_p$ of a TIM decomposition $(T,B,\tau)$, each vertex in $B(s)$ appears in $B^2(s)$ and $B^2(s_p)$ of the corresponding $\twostep$ TIM decomposition, we get the following observation.

\begin{observation}
    Every vertex-time pair appears exactly twice in the $\twostep$ TIM decomposition, and the bags they appear in are adjacent. 
\end{observation}

\begin{observation}\label{obs:2-step-leaves}
    For every leaf node $l$ in a TIM decomposition $(T,B,\tau)$ with $\twostep$ TIM decomposition $(T,B^2)$, $B^2(l)=\{(u,\tau(l)) : u\in B(l)\}$.
\end{observation}
\begin{observation}\label{obs:two-step-component-in-children}
    Let $C$ be a connected component of a snapshot $G_t$ of a temporal graph $\mathcal{G}$. Then, let $S$ be the set of vertex-time pairs $\{(v,t) \,:\, v\in C\}$. All bags of any $\twostep$ TIM decomposition contain either all pairs in $S$ or none of them.
    Furthermore, there is at most one child of $B^2(s)$ in $(T,B^2)$ which contains elements of $S$.
\end{observation}

We can think of a bag in a $\twostep$ TIM decomposition as a collection of connected components of snapshots of temporal graphs. To that end, let $\mathcal{C}^{s}$ denote the set of timed connected components (connected component-time pairs) in $B^2(s)$. Let $\mathcal{C}^s_t$ be the set of connected components in $B^2(s)$ at time $t$. We say that two states \emph{agree} on a connected component $C_i$ of $G_t$ if the restriction of the states to $C_i$ is the same.  Furthermore, denote by $\mathcal{G}[B^2(s)]$ the temporal subgraph with vertex set consisting of all vertices $v$ such that there exists a time $t$ where $(v,t)\in B^2(s)$ and time-edges $\{(uv,t):(u,t),(v,t)\in B^2(s) \text{ and } (uv,t)\in\mathcal{E}(\mathcal{G})\}$. 

\begin{defn}[$\twostep$ $(X,k)$ profile]
    For a static graph $G$ and set of vertices $S$, we denote by $G[S]$ the subgraph of $G$ induced by $S$. We define a $\twostep$ $(X,k)$ \emph{profile} of a bag $B^2(s)$ of a $\twostep$ TIM decomposition as a tuple consisting of
\begin{itemize}
    \item for each time $t$ such that there exist pairs $(v,t)$ in $B^2(s)$ (recall there are at most 3 such times), a labelling $l^t$ of the pairs with time $t$ to elements of $X$,
    \item for each connected component $C$ of each snapshot $G_t[B^2(s)]$ of $\mathcal{G}[B^2(s)]$, a vector $\textbf{v}_C$ with at most $k$ entries,
    \item a vector with at most $k$ entries denoted \textbf{total}.
\end{itemize}
\end{defn}

As with the $(X,k)$-component states, we define a $\twostep$ $(X,k)$ profile's restriction to a connected component $C$ of a snapshot at time $t$ in the a bag to be a pair consisting of $l^t|_C$, the restriction of the labelling at time $t$ to the vertices in $C$ and the vector $\textbf{v}_C$ associated to $C$. Note that, as with the restriction of $(X,k)$-component states, the restriction of a $\twostep$ $(X,k)$ profile to a connected component $C$ of a snapshot $G_t$ of a temporal graph gives us a $(X,k)$-state of the static graph induced by $C$. 

Let $\mathcal{G}^s$ be the temporal graph consisting of the vertices and time-edges which appear in bags in the subtree rooted at a node $s$, and $\mathcal{C}(G_t^s)$ denote the set of connected components in the snapshot $G^s_t$ of $\mathcal{G}^s$. The set $\mathcal{C}(\mathcal{G}^s)$ is defined as the set of component-time pairs $\bigcup_{0\leq t\leq \Lambda(\mathcal{G}^s)}\{(C,t) \,: \,C \in \mathcal{C}(G_t^s)\}$. Let $\mathfrak{C}^s$ be the set of all timed connected components $(C,t)\in\mathcal{C}(\mathcal{G}^s)$ such that, for all vertices $v$ in $C$, the pair $(v,t-1)$ is in a bag in the subtree of $T$ rooted at $s$.

\begin{defn}\label{def:realisable}
    A $\twostep$ $(X,k)$ profile $\sigma=(l^{t-1}, l^t, l^{t+1}, \textbf{v}_{1},\ldots,\textbf{v}_{|\mathcal{C}^s|}, \textbf{total})$ of a bag $B^2(s)$ is \emph{realisable} if and only if there exists a configuration $\varsigma$ which maps every timed connected component $(C,t)\in \mathfrak{C}^s$ to a $(X,k)$-state $\varsigma(C,t)=(l_{\varsigma}(C,t),\textbf{v}_{\varsigma}(C,t))$
    such that
    \begin{itemize}
        \item for every connected component-time pair $(C,t)$ in $B^2(s)$, $\sigma|_C=\varsigma(C,t)$,
        \item for all connected components $C$ of $G_1^s$, $\textbf{St}(\varsigma(C,0),C,x)=\textbf{true}$,
        \item for all connected components $C$ of $G^s_{\Lambda(\mathcal{G})}$, $\textbf{Fin}(\varsigma(C,\Lambda(\mathcal{G})),C,x)=\textbf{true}$,
        \item for all times $t\in(0,\Lambda(\mathcal{G}))$, every connected component $C$ of $G_{t}^s$, $\textbf{Val}(\varsigma(C,t),C,x)=\textbf{true}$,
        \item for all pairs $(C,t)$ in $\mathfrak{C}^s$, $\textbf{Tr}(l^{\varsigma}_{t-1}|_C,l_{\varsigma}(C,t),C,x)=\textbf{true}$ where $l^{\varsigma}_{t-1}$ is the labelling of all vertices in $G^s_{t-1}$ such that its restriction to any connected component $C\in G_{t-1}^s$ is $l_{\varsigma}(C,t-1)$, and
        \item $\textbf{total}= \sum_{0\leq t\leq \Lambda(\mathcal{G}^s)} \sum _{C \in \mathcal{C}(G^s_t)} \textbf{v}_{\varsigma}(C,t)$.
    \end{itemize}
    Here we say that $\varsigma$ \emph{realises} $\sigma$.
\end{defn}

\begin{restatable}{lem}{realisablestatesleaves}
    \label{lem:realisable-states-leaves}
    Given an instance $x=(\mathcal{G},\beta)$ of a temporal problem, a $\twostep$ $(X,k)$ profile $\sigma=(l, \textbf{v}_{1},\ldots, \textbf{v}_{|\mathcal{C}^l|}, \textbf{total})$ of a bag $B^2(l)$ of a leaf node $l$ in a $\twostep$ TIM decomposition $(T,B^2)$ is \emph{realisable} if and only if 
   \begin{itemize}
       \item all vertex-time pairs in $B^2(l)$ are at time $0$ and $\textbf{St}(\sigma|_{C}, C,x)$ returns true for all connected components $C$ in $B^2(l)$ and $\textbf{total}=\sum_{1\leq i\leq |\mathcal{C}^l|} \textbf{v}_{i}$, or 
       \item all vertex-time pairs in $B^2(l)$ are at time $\Lambda$ and $\textbf{Fin}(\sigma|_{C}, C,x)$ returns true for all connected components $C$ in $B^2(s)$ and $\textbf{total}=\sum_{1\leq i\leq |\mathcal{C}^l|} \textbf{v}_{i}$.
        \end{itemize}
\end{restatable}
\begin{proof}
    We begin by showing that, if the criteria of the lemma hold, $\sigma$ is realisable. We do this by construction of a configuration $\varsigma$ of the subgraph induced by $B^2(l)$ which realises $\sigma$. For a connected component $C_i$ at time $t$ in $B^2(l)$, let $\varsigma(C_i,t)$ be the pair consisting of the labelling $l_{\varsigma}(C_i,t)=l|_{C_i}$ and vector $\textbf{v}_{\varsigma}(C_i,t)=\textbf{v}_i$. Then, by construction, $\sigma|_C=\varsigma(C,t)$ for all sets $C$ of vertices $v$ such that $(v,t)$ in $B^2(l)$ and $C$ is a connected component in $G^s_t$. 

    Furthermore, if all vertex-time pairs in $B^2(l)$ are at time $0$, we assume that $\textbf{St}(\sigma|_{C},C,x)$ returns true for all connected components $C$ in $B^2(l)$ and $\textbf{total}=\sum_{1\leq i\leq |\mathcal{C}^l|} \textbf{v}_{i}$. Therefore, $\textbf{St}(\varsigma(C,0), C,x)=\textbf{true}$ for all connected components $C\in B^2(l)$ and $\textbf{total}=\sum_{C\in\mathcal{C}^l_0}\textbf{v}_{\varsigma}(C,0)$.

    Similarly, if all vertex-time pairs in $B^2(l)$ are at time $\Lambda$, we assume that $\textbf{Fin}(\sigma|_{C},C,x)$ returns true for all connected components $C$ in $B^2(l)$ and $\textbf{total}=\sum_{1\leq i\leq |\mathcal{C}^l|} \textbf{v}_{i}$. Therefore, $\textbf{Fin}(\varsigma(C,\Lambda), C,x)=\textbf{true}$ for all connected components $C\in B^2(l)$ and $\textbf{total}=\sum_{C\in\mathcal{C}^l_{\Lambda}}\textbf{v}_{\varsigma}(C,\Lambda)$. Therefore, $\sigma$ is realised by $\varsigma$.

    Now suppose that $\sigma$ is realised by a configuration $\varsigma$ of $\mathcal{G}^l$. Then, for every connected component $C$ in $B^2(l)$, $\sigma|_C=\varsigma(C,t)$ where $t$ is the time in the vertex-time pairs in $B^2(l)$. By definition of $\varsigma$ realising $\sigma$, we also have that $\sum_{C\in \mathcal{C}(\mathcal{G}^l)}\textbf{v}_{\varsigma}(C,t)= \sum_{1\leq i\leq |\mathcal{C}^l|}\textbf{v}_i=\textbf{total}$. What remains to show is that, for all connected components $C$ in $B^2(l)$, if all times in $B^2(l)$ are 0, $\textbf{St}(\sigma|_{C}, C,x)$ returns true; and, if all vertex-time pairs in $B^2(l)$ are at time $\Lambda$, $\textbf{Fin}(\sigma|_{C}, C,x)=\textbf{true}$. Both of these conditions hold since $\sigma|_C=\varsigma(C,t)$ and the starting (respectively finishing) routine is true for all connected components in the bag if the times in the bag are 0 (resp.~$\Lambda$). Hence, the conditions of the lemma hold if and only if $\sigma$ is realisable.
\end{proof}

Note that, in the following lemma, we extend a labelling of the vertex-time pairs in a bag of a $\twostep$ TIM decomposition with labellings given by profiles of the children of that node. We require that any vertices which appear in both the bag and a bag of its child are labelled the same way. With this in mind, recall that this extension must be well defined as no vertex-time pair can appear in a bag and more than one of the bags of its children. Furthermore, note that we only count vectors in the $\textbf{total}$ sum if the connected component to which the vector is associated does not appear in a bag of any children of the current node in question; this prevents double counting.

\begin{restatable}{lem}{realisablestates}\label{lem:realisable-states}
     Given an instance $x=(\mathcal{G},\beta)$ of a temporal problem, a $\twostep$ $(X,k)$ profile $\sigma=(l^{t-1}, l^t, l^{t+1}, \textbf{v}_{1},\ldots, \textbf{v}_{|\mathcal{C}^s|}, \textbf{total})$ of a non-leaf bag $B^2(s)$ of a $\twostep$ TIM decomposition $(T,B^2)$ is realisable if and only if 
   \begin{itemize}
       \item for all times $t$ which appear in the bag $B^2(s)$ and all connected components $C$ of $G_{t}[B^2(s)]$, $\textbf{Val}(\sigma|_{C},C,x)$ returns \textbf{true}, and 
       \item for each child $s_c$ of $s$, there is a realisable $\twostep$ $(X,k)$ profile $\sigma_{s_c}$ such that, for all times $t$ which appear in the bag $B^2(s)$ and all connected components $C$ of $G_{t}[B^2(s)]$:
        \begin{itemize}
            \item if the vertices in $C$ are in a pair at time $t$ in a child bag $B^2(s_c)$, $\sigma|_C=\sigma_{s_c}|_C$,
            \item for all connected components $C$ in $\mathcal{C}^s_t \cap \mathfrak{C}^s$, $\textbf{Tr}(l^{t-1}_{\sigma}|_C,l^t|_C,C,x)$ returns \textbf{true}, where the labelling $l^{t-1}_{\sigma}$ is the labelling that extends the labelling in $\sigma$ at time $t-1$ with the labellings at time $t-1$ in each $\sigma_{s_c}$, and
            \item $\textbf{total}$ is the sum of $\textbf{total}_c$ in each $\sigma_{s_c}$ and, for each $(C,t)$ in $B^2(s)$ that is not in $B^2(s_c)$ for any child $s_c$ of $s$, the vector $\textbf{v}_C$ in $\sigma_C$.
        \end{itemize}
        \end{itemize}
\end{restatable}
\begin{proof}
    We begin by showing that, if the criteria of the lemma hold, the profile $\sigma$ as described is realised by a configuration $\varsigma$ of the temporal subgraph $\mathcal{G}^s$ induced by the bags in the subtree rooted at $s$. By our assumption, there is a realisable profile for each child of $s$. Denote by $\sigma_c$ the profile of the bag of child $c$ and by  $\varsigma_c$ the configuration which realises the profile. We construct $\varsigma$ by combining the configurations which realise each $\sigma_c$ to the connected components in $B^2(s)$. Note that, by Observation~\ref{obs:two-step-component-in-children}, there exists a well-defined extension of all of the $\varsigma_c$ to $\mathcal{G}^s$ because no connected component-time pair is in a bag of a node in more than one of the subtrees rooted at the children of $s$. Furthermore, we assume from the conditions of the lemma that any timed connected component $C$ which appears in both $B^2(s)$ and a child bag $B^2(c)$ of $B^2(s)$ must have the property that $\sigma|_C=\sigma_c|_C$. 
    Therefore, for all component-time pairs $(C,t)$ in $\mathcal{C}(\mathcal{G}^s)$ the configuration 
    \[\varsigma(C,t) =
    \begin{cases} 
      \varsigma_c(C,t) & \text{if there is a child }c\text{ where } (C,t)\in\mathcal{C}(\mathcal{G}^c) \\
      \sigma|_C & \text{otherwise}
   \end{cases}\]
    is well-defined. Let $\varsigma(C,t) = (l_{\varsigma}(C,t),\textbf{v}_{\varsigma}(C,t))$. It is clear that, for all times $t$ and connected components of $\mathcal{G}^s_t[B^2(s)]$, $\sigma|_C=\varsigma(C,t)$. Because $\textbf{total}$ is the sum of the vectors corresponding to timed connected components which do not appear in any child bags and $\textbf{total}_c$ for each $\sigma_c$, $\textbf{total}_c$ for each profile $\sigma_c$ is the sum of $\textbf{v}_{\varsigma}(C,t)$ for all timed connected components $(C,t)$ in $\mathcal{G}^c$, and $\sigma|_C=\varsigma(C,t)$ for all components $C$ in $B^2(s)$, then $\textbf{total}$ is the sum of $\textbf{v}_{\varsigma}(C,t)$ for all timed connected components in $\mathcal{C}(\mathcal{G}^s)$.
    Since $\varsigma$ is an extension of the configurations which realise the profiles of the children of $s$, we need only check that the remaining criteria of $\varsigma$ realising $\sigma$ hold for the connected components in $B^2(s)$. All that is left is to check that for all timed connected components $(C,t)$ in $B^2(s)$, $\textbf{Val}(\varsigma(C,t),C,x)$; and, for all timed connected components in $\mathfrak{C}^s\cap B^2(s)$, $\textbf{Tr}(l^{\varsigma}_{t-1}|_C,l_{\varsigma}(C,t),C,x)=\textbf{true}$, where $l_{\varsigma}(C,t)$ is the labelling in $\varsigma(C,t)$ and $l^{\varsigma}_{t-1}$ is the labelling which extends all labellings at time $t-1$ in $\varsigma$ to $\mathcal{G}^s$. Since $\varsigma(C,t)=\sigma|_C$, and extensions of the labellings in each $\varsigma_c$ to the vertex-time pairs in $B^2(s)$ are the same labellings as $l^{t-1},l^t,l^{t+1}$ in $\sigma$, both statements are true by our earlier assumptions. Therefore, if the criteria of the lemma hold, $\sigma$ is realised by $\varsigma$ as constructed.

    Now suppose there is a realisable $\twostep$ $(X,k)$ profile $\sigma=(l^{t-1}, l^t, l^{t+1}, \textbf{v}_{1},\ldots, \textbf{v}_{|\mathcal{C}^s|}, \textbf{total})$ of $B^2(s)$. Let $\varsigma$ be the configuration which realises $\sigma$. Then, by definition $\sigma|_C=\varsigma(C,t)$ for all connected components $C$ of $G^s_t[B^2(s)]$. Therefore, for all connected components $C$ of $G^s_t[B^2(s)]$, $\textbf{Val}(\sigma|_C,C,x)=\textbf{true}$. For each child $c$ of $s$, let $\varsigma|_c$ be the restriction of $\varsigma$ to the temporal subgraph $\mathcal{G}^c$ induced by the bags in the subtree rooted at $c$. Then, let $\sigma_c$ be the profile of $B^2(c)$ realised by $\varsigma_c$. 
     It is clear from construction that, if a connected component in $B^2(s)$ appears entirely in a bag of a child $c$ of $s$, $\sigma|_C$ must be equal to $\sigma_c|_C$. Furthermore, since $\varsigma$ realises $\sigma$, for all connected component pairs $(C,t)\in\mathfrak{C}^s$, $\textbf{Tr}(l^{\varsigma}_{t-1}|_C, l_{\varsigma}(C,t),C,x)=\textbf{true}$. If we let $l^{t-1}_{\sigma}$ be the extension of all labellings at time $l^{t-1}$ in $\sigma$ and all $\sigma_c$, then $l^{\varsigma}_{t-1}|_C=l^{t-1}_{\sigma}|_C$. In addition, $l^t|_C=l_{\varsigma}(C,t)$. Hence, $\textbf{Tr}(l^{t-1}_{\sigma}|_C, l^t|_C,C,x)=\textbf{true}$ for all connected component pairs $(C,t)\in\mathfrak{C}^s$. Finally, since \textbf{total} is the sum of each vector assigned to the connected components in $\mathcal{C}(\mathcal{G}^s)$ which do not appear in any child bag $s_c$, and the vector associated to each connected component in $B^2(s)$ under $\sigma$ is equal to the vector assigned by $\varsigma$, $\textbf{total}$ must be the sum of $\textbf{total}_c$ in $\sigma_c$ for each child $c$ of $s$ and the vectors in $\sigma$ associated to timed connected components which appear in $B^2(s)$ and no child bags. Note that this enforces that, for each timed connected component in $\mathcal{G}^s$, the associated vector is counted exactly once in $\textbf{total}$.
    Thus, a $\twostep$ $(X,k)$ profile $\sigma$ is realisable if and only if the conditions of the lemma hold.
\end{proof}

\begin{restatable}{lemma}{TIMrootrealisable}\label{lem:TIM-root-realisable}
    Let $P$ be a $(X,k,f)$-component-exchangeable temporally uniform problem. Then an instance $x=(\mathcal{G},\beta)$ is a yes-instance of $P$ if and only if there is a realisable $\twostep$ $(X,k)$ profile $\sigma=(l^{t-1},l^t,l^{t+1},\textbf{v}_1,\ldots\textbf{v}_{|\mathcal{C}^r|}, \textbf{total})$ of the root bag $B^2(r)$ of the $\twostep$ TIM decomposition of $\mathcal{G}$ such that $\textbf{total}\leq \textbf{v}_{\text{upper}}$.
\end{restatable}
\begin{proof}
    We begin by supposing that $x=(\mathcal{G},\beta)$ is a yes-instance of $P$. Then, by definition of being $(X,k,f)$-component-exchangeable temporally uniform, there exists a sequence of $(X,k)$-component states $s_0,\ldots,s_{\Lambda}$ of the form $s_t=(l_t,\textbf{w}^t_1,\ldots,\textbf{w}^t_c, \nu_t)$ of each snapshot of $\mathcal{G}$ such that 
    \begin{itemize}
        \item for each connected component $C_1$ of $G_1$, $\textbf{St}(s_0|_{C_1},C_1,x)=\textbf{true}$;
        \item for each connected component $C_{\Lambda}$ of $G_{\Lambda}$, $\textbf{Fin}(s_{\Lambda}|_{C_{\Lambda}},C_{\Lambda},x)=\textbf{true}$;
        \item $\textbf{Tr}(l_{t-1}|_{C_t}, l_t|_{C_t}, C_t,x)=\textbf{true}$ where $l_t$ is the labelling of vertices of state $s_t$, for all times $1 \leq t \leq \Lambda$ and connected components $C_t$ of $G_t$;
        \item $\textbf{Val}(s_t|_{C_t}, C_t,x)=\textbf{true}$ for all times $1 < t < \Lambda$ and connected components $C_t$ of $G_t$; and
        \item the sum $\sum_{0\leq t\leq \Lambda} \sum_{C \in\mathcal{C}_t}\nu_{t} (C)\leq \textbf{v}_{\text{upper}}$.
    \end{itemize}
    From this, using Definition~\ref{def:realisable}, we can construct a configuration $\varsigma$ of $\mathcal{G}$ by letting $\varsigma(C,t)=s_t|_C$ for all connected components $C$ of all snapshots $G_t$ of $\mathcal{G}$. Then $\varsigma$ realises the profile $\sigma$ of $B^2(r)$ consisting of the restrictions of $l_t$ to the set of vertices $\{v \,:\,(v,t)\in B^2(s)\}$ for each time in the bag, the vectors assigned to each connected component in $B^2(r)$ by $\varsigma$, and the vector \textbf{total} which is given by the sum $\sum_{t\in [0,\Lambda]}\sum_{C\in\mathcal{C}(G_t)}\textbf{v}_{\varsigma}(C,t)$, where $\textbf{v}_{\varsigma}(C,t)$ is the vector given by $\varsigma(C,t)$.

    Now, suppose that there exists a $\twostep$ $(X,k)$ profile $\sigma=(l^{t-1},l^t,l^{t+1},\textbf{v}_1,\ldots\textbf{v}_{|\mathcal{C}^r|}, \textbf{total})$ of $B^2(r)$ for an instance $x$ of $P$ such that $\textbf{total}\leq \textbf{v}_{\text{upper}}$ and $\sigma$ is realisable. Let $\varsigma$ be the configuration of $\mathcal{G}$ which realises $\sigma$. Then we construct a sequence $s_0,\ldots, s_{\Lambda}$ of $(X,k)$-component states from $\varsigma$ by letting $l_{t}$ be the labelling that extends $l_{\varsigma}(C,t)$ for all connected components $C$ in $G_t$, and $\nu_t(C)=\textbf{v}_{\varsigma}(C,t)$, where $\textbf{v}_{\varsigma}(C,t)$ is the vector given by $\varsigma(C,t)$. Since the vector $\textbf{total}$ in $\sigma$ is the sum of all vectors of all timed connected components in $\mathcal{G}$ under $\varsigma$, $\textbf{total}$ must also be $\sum_{t\in[0,\Lambda]}\sum_{C\in\mathcal{C}_t}\nu_t(C)\leq \textbf{v}_{\text{upper}}$. Note that, for the root bag $B^2(r)$, $\mathfrak{C}^r$ is the set of all connected components in the graph ($\mathfrak{C}^r=\mathcal{C}(\mathcal{G}))$. Therefore, by definition, we must have that 
    \begin{itemize}
        \item for each connected component $C_1$ of $G_1$, $\textbf{St}(s_0|_{C_1},C_1,x)=\textbf{true}$;
        \item for each connected component $C_{\Lambda}$ of $G_{\Lambda}$, $\textbf{Fin}(s_{\Lambda}|_{C_{\Lambda}},C_{\Lambda},x)=\textbf{true}$;
        \item $\textbf{Tr}(l_{t-1}|_{C_t}, l_t|_{C_t}, C_t,x)=\textbf{true}$ where $l_t$ is the labelling of vertices of state $s_t$, for all times $1 \leq t \leq \Lambda$ and connected components $C_t$ of $G_t$; and
        \item $\textbf{Val}(s_t|_{C_t}, C_t,x)=\textbf{true}$ for all times $1 < t < \Lambda$ and connected components $C_t$ of $G_t$.
    \end{itemize}
    Therefore, $x$ must be a yes-instance of $P$. In conclusion, an instance $x$ is a yes-instance of a $(X,k,f)$-component-exchangeable temporally uniform problem if and only if there exists a realisable $\twostep$ $(X,k)$ profile of the root bag $B^2(r)$ of the $\twostep$ TIM decomposition of $\mathcal{G}$ such that $\textbf{total}\leq \textbf{v}_{\text{upper}}$.
\end{proof}

We are now ready to prove correctness of our second meta-algorithm by showing that it solves any $(X,k,f)$-component-exchangeable temporally uniform problem, and does so in a time that can bounded in terms of $f$, and the TIM width.

\end{toappendix}
\begin{theoremrep}\label{thm:component-temp-unif}
    Let $x=(\mathcal{G},\beta)$ be an instance of a $(X,k,f)$-component-exchangeable temporally uniform problem $P$ where $\mathcal{G}$ has $n$ vertices and lifetime $\Lambda$.
    Given a TIM decomposition of $\mathcal{G}$ with width $\phi$, we
    can determine whether $x$ is a yes-instance of $P$ in time 
    $O\left(n \Lambda |X|^{12 \phi^3} (3b)^{12k \phi^3} (3\Lambda n )^{4k}\phi^9 k^2 f(\phi,x)\right)$, 
    where $\phi$ is the TIM width of $\mathcal{G}$ and $b \in \mathbb{N}$ is an upper bound on the absolute value of any entry of a vector in a $(X,k)$-component state of $\mathcal{G}$.
\end{theoremrep}
\begin{proof}
    By Lemma~\ref{lem:TIM-root-realisable}, an instance $x$ of $P$ is a yes-instance if and only if there exists a realisable $\twostep$ $(X,k)$ profile $\sigma$ of the root $r$ of the $\twostep$ TIM decomposition such that the vector \textbf{total} is bounded above by $\textbf{v}_{\text{upper}}$. To determine whether such a profile exists, we work from leaves of the decomposition to the root, finding realisable profiles such that the validity and transition routines return true for all connected components of each parent bag until we reach the root. To determine the runtime of finding such a profile, we first bound the number of profiles we must consider. Recall that by Observation~\ref{obs:tim-node-bound}, there are at most $\Lambda n$ nodes in the TIM decomposition of a temporal graph $\mathcal{G}$ with lifetime $\Lambda$ and $n$ vertices. Since the $\twostep$ TIM decomposition is indexed by the same tree $T$, this must also bound the number of nodes in the $\twostep$ TIM decomposition. As noted in the construction of the $\twostep$ TIM decomposition, if $\phi$ is the TIM width of $\mathcal{G}$, the width of the corresponding $\twostep$ TIM decomposition is at most $3\phi^2$.  Note also that the number of vertices in any timed component remains bounded by $\phi$.

    Given a bag $B^2(s)$ of the $\twostep$ TIM decomposition, a $\twostep$ $(X,k)$ profile consists of a labelling of the vertex-time pairs, a vector for each connected component of each snapshot in $B^2(s)$, and a vector \textbf{total}. There are at most $|X|^{3\phi^2}$ possible labellings of the vertex-time pairs in the bag. The number of elements in the bag also upper bounds the number of timed connected components in the bag. Thus, there are at most $3\phi^2$ vectors in the profile associated with timed components; each such vector has at most $k$ entries, and each entry has absolute value at most $b$.  Thus there are at most $(2b+1)^{3k\phi^2}$ possible combinations of these vectors.  The single vector \textbf{total} has at most $k$ entries, and each entry has absolute value at most $\Lambda n b$ (since there is a contribution of magnitude at most $b$ from every timed component in the relevant subtree), so the number of possible values for the vector \textbf{total} is $(2 \Lambda n b + 1)^k$.  Thus there are at most $|X|^{3 \phi^2} (2b+1)^{3k\phi^2} (2 \Lambda n b + 1)^k\leq |X|^{3 \phi^2}(3b)^{3k \phi^2}(3 \Lambda n b)^k = |X|^{3 \phi^2} (3b)^{4 k \phi^2} (\Lambda n)^k$ possible profiles of any bag $B^2(s)$.



    In the first step towards determining whether a profile is realisable, we must check exactly one of the validity, starting and finishing routines for each timed connected component in $B^2(s)$.  As stated earlier, there are at most $3\phi^2$ connected components in each bag of the $\twostep$ TIM decomposition. 
    Since each timed component contains at most $\phi$ vertices, checking any of these three properties for a single component can be achieved in $f(\phi,x)$ time. Thus, the total time spent performing the starting, validity and finishing routines over all timed components in the bag is at $O(\phi^2 f(\phi,x))$.
    
    For the second step, when the node is not a leaf, we must determine whether there exists a set $S$ of realisable profiles for all the child bags with the following properties:
    \begin{enumerate}
        \item the restriction of the profiles to any timed connected component that appears in the bag and a child bag must must be the same,
        \item the transition routine returns true for the labellings when restricted to each relevant timed connected component in the bag, and
        \item the \textbf{total} vector of the profile of the bag is the sum of the vectors in that profile associated to connected components which do not appear in any child bags and the $\textbf{total}$ vectors in the set $S$.
    \end{enumerate}   
    It is too costly to consider simultaneously every possible value of the \textbf{total} vector for all of the children (of which there might be $2\phi$), so instead we identify all sets $S'$ of partial profiles -- in which we omit the value of the \textbf{total} vector -- for each child which meet conditions 1 and 2, then for each such set $S'$ we determine (using a simple dynamic program) whether there is a set of realisable profiles for all the children which is consistent with $S'$ and satisfies condition 3.  Recall from Observation~\ref{obs:tim-no-children} that there are at most $2 \phi$ children.  Thus, the number of possible combinations of partial profiles for all children is $|X|^{6 \phi^3} (3b)^{6k\phi^3}$.

    For each candidate set $S'$ of partial profiles, we need to check conditions 1 and 2.  We now proceed to bound the time required to do this for a fixed set $S'$.  First, consider the time required to check that the restrictions of profiles to timed connected components are consistent between the parent and children.  For a timed component with $i$ vertices, this can be done in time $O(i\phi k)$, since we must consider each vertex in the component and each component of the associated vector, and there are $O(\phi)$ children against which we need to make the comparison (by Observation~\ref{obs:tim-no-children}).  Recall from Observation~\ref{obs:two-step-width} that the sum of cardinalities of all timed components in a bag is $O(\phi^2)$; thus, summing over all components, we see that this check can be performed for the whole bag in time $O(\phi^3k)$.  Next, we consider the time required to perform the transition routine on all the required timed connected components. To determine which of the connected components we must apply the transition routine to, we must check that all vertices in a timed connected component appear in a pair with the previous time in either the current bag or one of its children. This can be done in $O(i\phi^3)$ time for a single component with $i$ vertices since there are at most $\phi$ vertices in a single connected component, at most $2\phi$ children to check, and at most $3\phi^2$ elements in a bag with which to compare the vertices. Summing over all timed connected components in the bag, we identify those to which we must apply the transition routine in time $O(\phi^5)$.  Having identified the relevant timed connected components, the transition routine takes time $O(f(|C|,x))\in O(f(\phi,x))$ for each component, and so time $O(\phi^2 f(\phi,x))$ in total.  Overall, therefore, computation related to the transition routine takes time $O(\phi^5 + \phi^2 f(\phi,x))$.  Combining these bounds, we see that we can identify all sets $S$ of partial profiles for the children that satisfy conditions 1 and 2 in time $O(|X|^{6 \phi^3}(3b)^{6k\phi^3} \phi^3k(\phi^5 + \phi^2 f(\phi,x)) = O(|X|^{6 \phi^3} (3b)^{6k \phi^3} \phi^8 k f(\phi,x)$.

    Now we bound the time required to check, given a fixed set $S'$ of partial profiles, whether there exists a consistent set of realisable profiles for the child bags which satisfies condition 3.  Specifically, we want to determine whether there exists a set of realisable profiles for all the child bags which is consistent with $S'$ and moreover has the property that the sum of the \textbf{total} vectors for each profile gives the desired value (which is the guessed value of \textbf{total} for the parent, minus the sum of guessed component vectors in the parent bag). Note that, to avoid double counting, we only include the vectors associated to connected components which do not appear in any child bags. Finding this set of connected components requires $O(\phi^3)$ time, since we must check $O(\phi^2)$ components against $O(\phi)$ children. Computing the target value requires $O(\phi^2k)$ time as we have $O(\phi^2)$ components in a bag with an associated vector with $k$ entries. We check that the sum of the vectors gives the required result using a simple dynamic program.  We fix an arbitrary ordering of the $c$ child bags, and each entry of the table is indexed by some $i \in [c]$ and a $k$-element vector in which each entry has absolute value at most $n \Lambda b$; note that there are at most $2 \phi (2 \Lambda n b + 1)^k$ entries in the table.  We will set the entry indexed by $(i,\mathbf{r})$ to true if and only if there exist realisable profiles for the first $i$ children that are consistent with $S$ and such that the sum of their \textbf{total} vectors is equal to $\mathbf{r}$.  It is straightforward to see that we can initialise all entries of the form $(1,\mathbf{r})$ in time $O(|X|^{3 \phi^2} (3b)^{3k\phi^2}(3\Lambda n )^k k)$, as it suffices to examine the set of realisable profiles for the first child, and compare the $k$ entries of the \textbf{total} vector against $\mathbf{r}$.  Moreover, given all entries of the form $(i-1,\mathbf{r})$, we can compute any entry of the form $(i,\mathbf{r}')$ in time $O(|X|^{3 \phi^2} (3b)^{3k\phi^2}(3\Lambda n )^{2k} k)$: we iterate over all realisable profiles for the $i^{th}$ child that are consistent with $S'$, and if we find such a profile such that the \textbf{total} vector takes value $\mathbf{r}_i$ and there is a true table entry $(i-1,\mathbf{r})$ such that $\mathbf{r} + \mathbf{r}_i = \mathbf{r}'$, we set the entry $(i,\mathbf{r}')$ to true.  Thus, we can complete the entire table in time $O(\phi k |X|^{3 \phi^2} (3b)^{3k\phi^2}(3\Lambda n)^{3k})$.  Having completed the table, we conclude that the desired set of realisable child bag profiles exists if and only if the entry $(c,\mathbf{r})$ is true, where $\mathbf{r}$ is the desired value for the sum of target vectors.

    Therefore, to perform the checks required to determine there are profiles of the child bags with the desired properties, we iterate over all candidate sets $S'$, and for each we find the set of realisable states for each child that are consistent with the choice of $S'$, then we perform the dynamic program to check whether any combination of these gives the correct $\textbf{total}$ vector. It therefore follows that we can determine whether there exists a set of realisable profiles for all the child bags that satisfies the three conditions in time 
    \[O( |X|^{9 \phi^3} (3b)^{9k \phi^3} (3\Lambda n)^{3k}\phi^9 k^2 f(\phi,x).\]  
    Thus, we can determine whether a given profile is realisable in time 
    \[O(|X|^{9 \phi^3} (3b)^{9k \phi^3} (3\Lambda n )^{3k}\phi^9 k^2 f(\phi,x) + \phi^2 f(\phi,x)) \]
    \[= O(|X|^{9 \phi^3} (3b)^{9k \phi^3} (3\Lambda n  )^{3k}\phi^9 k^2 f(\phi,x),\] 
    and summing over all possible profiles for the bag we see that we can compute the set of realisable profiles for a bag in time
    \[ O\left(|X|^{12 \phi^3} (3b)^{12k \phi^3} (3\Lambda n )^{4k}\phi^9 k^2 f(\phi,x)\right). \]
    Finally, summing over all nodes, we see that we can compute the set of realisable profiles for the root and hence solve the problem in time
    \[
        O\left(n \Lambda |X|^{12 \phi^3} (3b)^{12k \phi^3} (3\Lambda n )^{4k}\phi^9 k^2 f(\phi,x)\right). 
    \]
  This gives the result.
\end{proof}

We now give a characterisation for problems in FPT with respect to TIM width.



\begin{theoremrep}\label{thm:TIM-iff}
    Let P be a problem that takes $x=(\mathcal{G},\beta)$ as input, where $\mathcal{G}$ is a temporal graph with a given TIM decomposition of width $\phi$ and $\beta$ is a string. $P$ is in FPT with respect to $\phi$ if and only if it is a $(X,k,f)$-component-exchangeable temporally uniform problem and there exists a computable function $g$ such that for each connected component $C$ of a snapshot of $\mathcal{G}$, $f(C,x)\leq g(\phi)|x|^{O(1)}$ where $k$ is a constant, and $|X|$ and the maximum value $b$ of any variable in a $(X,k)$-component state are bounded by a function of $\phi$ alone. 
\end{theoremrep}
\begin{proof}
    Using Theorem~\ref{thm:component-temp-unif}, we get that if $P$ is $(X,k,f)$-locally temporally uniform where 
    $k$ is a constant, $b$ and $|X|$ are a function of the TIM width alone, and, for every connected component $C$ of a snapshot of $\mathcal{G}$, the time required for each subroutine is bounded by $f(|C|,x)=g(\phi)|x|^{O(1)}$ for a computable function $g$, then it is in FPT with respect to $\phi$.

    We now show the reverse direction. Suppose there exists an fpt-algorithm $A$ for $P$ with respect to $\phi$. Let the runtime of $A$ be $a(\phi)\text{poly}(n,\Lambda,|\beta|)$ for some computable function $a$. Then, we argue that $P$ must be $(X,1,a)$-component-exchangeable temporally uniform where $X$ is a set consisting of a single label. To show this we construct the states required and prove that an instance $x=(\mathcal{G},\beta)$ is a yes-instance if and only if 
    there exists a sequence of states $s_0,\ldots,s_{\Lambda}$ such that 
    \begin{enumerate}
        \item for each connected component $C_1$ of $G_1$, $\textbf{St}(s_0|_{C_1},C_1,x)=\textbf{true}$;
        \item for each connected component $C_{\Lambda}$ of $G_{\Lambda}$, $\textbf{Fin}(s_{\Lambda}|_{C_{\Lambda}},C_{\Lambda},x)=\textbf{true}$;
        \item $\textbf{Tr}(l_{t-1}|_{C_t}, l_t|_{C_t}, C_t,x)=\textbf{true}$ where $l_t$ is the labelling of vertices of state $s_t$, for all times $1 \leq t \leq \Lambda$ and connected components $C_t$ of $G_t$;
        \item $\textbf{Val}(s_t|_{C_t}, C_t,x)=\textbf{true}$ for all times $1 \leq t < \Lambda$ and connected components $C_t$ of $G_t$; and
        \item the sum of vectors satisfies $\sum_{0\leq t\leq \Lambda} \sum_{C \in\mathcal{C}_t}\nu_{s_t} (C)\leq \textbf{\emph{v}}_{\text{upper}}$.
    \end{enumerate} 
    Our set of labels consists of a single label, call it $U$. The counter vector for every connected component consists of a vector with one entry, let that entry be $0$. Also, let $\textbf{\emph{v}}_{\text{upper}}$ be a zero vector. 
    The validity and starting routines are the same. They return true if and only if the vertices of the connected component have label $U$, and the vector associated to the connected component is a zero vector with one entry. 
    Our transition routine returns true if and only if the labelling of the vertices by the two states are the same, which must always be the case given our description of the states. This leaves the finishing routine. This is the algorithm $A$ with input $x$. It is clear that the finishing routine returns true if and only if $x$ is a yes-instance. It follows from the description of the states that $k$, $|X|$, and $b$ are all constants. Therefore, the statement holds.
\end{proof}


\section{Applications of VIM meta-algorithm}\label{sec:applications-vim}

Using the machinery we have defined, it is possible to find tractable algorithms for a range of problems. In order to show that a given problem admits a tractable algorithm when parameterised by VIM width, we show that it is locally temporally uniform by providing efficient transition and accepting routines. We then use \Cref{thm:loctempunf} to obtain a tractable algorithm for our problem, thus negating the need to construct a dynamic programming algorithm from scratch. Similarly, to show that a problem is tractable with respect to TIM width, we need only prove it is component-exchangeable temporally uniform by providing starting, finishing, validity and transition routines, and a vector to bound the vectors in the states. This means that we can apply the meta-algorithm without needing to use the TIM decomposition directly, and thus we only need to think about the information that must be stored for every connected component. This gives a much more intuitive way of showing tractability with respect to TIM width than direct proof. 
%
%
We apply our VIM meta-algorithm to two problems: temporal analogues of \textsc{Hamiltonian Path} and \textsc{Dominating Set}. 

\begin{restatable}{thm}{vimham}\label{thm:vim-ham}
    \textsc{Temporal Hamiltonian Path} can be solved in time $
    O(\Lambda nnn^23^{2\omega})=
    O(\Lambda n^43^{2\omega})$, where $\Lambda$ is the lifetime of the input temporal graph, $n$ the number of vertices, and $\omega$ the VIM width.
\end{restatable}

\begin{restatable}{thm}{vimdom}\label{thm:vim-dom}
    \textsc{Temporal Dominating Set} can be solved in time $O(\Lambda n^2 2^{\omega}(n\Lambda)^42^{2\omega})=O(\Lambda^5 n^62^{3\omega})$, where $\Lambda$ is the lifetime of the input temporal graph, $n$ the number of vertices, and $\omega$ the VIM width.
\end{restatable}

\begin{toappendix}

\subsection{Temporal Hamiltonian Path}\label{sec:ham-path-vim}
We begin by applying our VIM meta-algorithm to \textsc{Temporal Hamiltonian Path}. Recall the formal definition of the problem.

\begin{nolinenumbers}
\decisionproblem{Temporal Hamiltonian Path}{A temporal graph $\mathcal{G}$.}{Does there exist a strict temporal path containing every vertex in $\mathcal{G}$?}
\end{nolinenumbers}
Note that the only input for this problem is the temporal graph itself. In our definitions of the meta-algorithms, we use a string $\beta$ to encode the input of a given problem which is not the temporal graph. In this case $\beta$ would be the empty string. For ease, we omit $\beta$ in this section.

Throughout this section, we use the notion of an \emph{empty path}. This is a path consisting of no vertices or edges. We use this for the case where the temporal Hamiltonian path starts on a vertex whose active interval begins at a time later than~1. For ease, we assume without loss of generality that any non-empty path consists of at least one edge, and that there are at least two vertices in the input temporal graph. Let the \emph{arrival time} of a temporal path be the time of the final time-edge in the path. We use the convention that empty paths have arrival time~0.

We use $(X,1)$-states, where the label set $X=\{\visited,\unvisited,\current\}$, and the counter vector contains a single integer $h$, which counts the total number of visited vertices. We will define our transition routine and initial states such that each state produced by repeated applications of the transition routine corresponds to the existence of a temporal path that traverses $h$ vertices, such that if there is a vertex given label $\current$, the path is currently at that vertex, and any vertices labelled $\visited$ are traversed by the path. The accepting routine then returns true if $h=|V(\mathcal{G})|$. We now show that \textsc{Temporal Hamiltonian Path} is locally temporally uniform, by giving the transition and acceptance routines, and the set of initial states.


Given a state $(l, (h))$ and an instance of \textsc{Temporal Hamiltonian Path}, that is a temporal graph $\mathcal{G}$, the acceptance routine runs in constant time, returning true if and only if $h = |V(\mathcal{G})|$.
We use one initial state where each vertex is labelled with $\unvisited$ and $h=0$. The algorithm which returns the set of initial states therefore runs in linear time in $n$.

We now show that there exists a correspondence between the temporal paths on $\mathcal{G}$ and sequences of states beginning with initial states and related by the transition routine. We say that a sequence $s_0, ..., s_t$ of states \textit{corresponds} to a (potentially empty) temporal path if and only if:
\begin{enumerate}
    \item $s_0$ is an initial state,
    \item $\textbf{Tr}(s_{i-1}, s_i, G_i)=\textbf{true}$ for every $1 \leq i \leq t$,
    \item $s_t$ gives at most one vertex label $\current$, and this is the final vertex on the path, and the path has arrival time $t' \leq t$, and
    \item the vertices traversed by the path (not including the final vertex) are exactly the vertices given label $\visited$ by $s_t$,
    \item the value of $h$ given by $s_t$ is equal to the number of vertices traversed by the path. 
\end{enumerate}
\begin{restatable}{lemma}{vimhamiltoniff}
    For any timestep $t$, there exists a (potentially empty) temporal path arriving on a timestep $t' \leq t$ if and only if there exists a corresponding sequence of states $s_0, \ldots, s_t$.
\end{restatable}
\begin{proof}
    We proceed by induction on the timestep $t$, with base case $t=0$. Since there are no edges active at time~0, any temporal path corresponding to an arriving by time~0 must be empty. Furthermore, the only initial state $s_0$ labels all vertices $\unvisited$ and sets $h=0$. Therefore, there is a path corresponding to $s_0$ if and only if $s_0$ is an initial state. Thus the base case holds.

    For induction, assume that for $t''<t$ there exists a temporal path arriving at time $t''$ at the latest if and only if there exists a corresponding sequence of states $s_0,\ldots,s_{t''}$. Now suppose that there is a (potentially empty) temporal path $P$ arriving on a timestep $t' \leq t$ at a vertex $v$, and traversing $\ell$ vertices. We now show that there exists a corresponding sequence of states. If $t' < t$, then, by induction, there exists a sequence of states $s_0, ... s_{t-1}$ corresponding to $P$. Now take $s_t = s_{t-1}$, and see that $\textbf{Tr}(s_{t-1}, s_t, G_t[F_t])=\textbf{true}$ as line 10 of \Cref{alg:temp-ham} will return true.

    Now let $t'=t$. There are two cases to consider: $P$ consists of a single time-edge at time $t$, and $P$ consists of more than one time-edge. We first consider the former. By induction, there is a sequence of states $s_0,\ldots, s_{t-1}$ corresponding to the empty path found by removing the only time-edge from $P$. Let $s_t$ be the state where the final vertex on the path is labelled $\current$, the other vertex traversed by $P$ is labelled $\visited$, all other vertices are labelled $\unvisited$ and $h=2$. Since $s_0,\ldots, s_{t-1}$ is a sequence corresponding to an empty path, all vertices must be labelled $\unvisited$ by $s_{t-1}$ and their counters must all be~0. Therefore, Algorithm~\ref{alg:temp-ham} must return \textbf{true} in line 5 with inputs $G_t$, $s_{t-1}$, and $s_t$.
    
    If $P$ has length at least~2 and $t'=t$, there must exist some non-empty temporal path arriving on a timestep $t' \leq t-1$ at a vertex $u$ adjacent to $v$ on timestep $t$, and hence by induction there exists a sequence of states $s_0, \ldots, s_{t-1}$ corresponding to this path. 
    Let $s_t$ be the state that has a value of $h$ one greater than $s_{t-1}$, gives label $\current$ to $v$, label $\visited$ to $u$, and labels all other vertices as they are by $s_{t-1}$. See that $\{u,v\} \in E(G_t[F_t])$, as $P$ traverses $\{u,v\}$ on timestep $t$, and therefore $t \in \lambda(\{u,v\})$. Therefore $\textbf{Tr}(s_{t-1}, s_t, G_t[F_t])=\textbf{true}$, as line 8 of \Cref{alg:temp-ham} will return true.
    
    Now assume there is a sequence of states $s_0, \ldots, s_t$ such that $s_0$ is an initial state, and $\textbf{Tr}(s_{i-1}, s_i, G_t[F_t])=\textbf{true}$ for every $1 \leq i \leq t$, and if such a vertex exists let $v$ be the vertex given label $\current$ by $s_t$ (if one exists). Now consider the sequence $s_0, ..., s_{t-1}$. By induction, there must exist a temporal path $P$ corresponding to this sequence. If $P$ is non-empty, let $u$ be the final vertex on this path. See that $\textbf{Tr}(s_{t-1}, s_t, G_t[F_t])=\textbf{true}$. If this is the case because line 5 of \Cref{alg:temp-ham} returns true, then $P$ must be empty and $h=2$. Let $u$ be the vertex labelled $\visited$ by $s_t$. Then, the edge $\{u,v\}$ must be in $G_t$, and the temporal path consisting of the time-edge $(u,v,t)$ must traverse $2$ vertices, visit $u$ and end at $v$. Therefore, there is a temporal path corresponding to the sequence of states.
    If the transition routine returns true because  of line 8 of \Cref{alg:temp-ham} returns true then $\{u,v\} \in G_t$, and therefore $t \in \lambda(\{u,v\})$. There is then a temporal path that traverses the same edges as $P$, before traversing $\{u,v\}$ on timestep $t$. This path will traverse all the vertices traversed by $P$, along with the additional vertex $u$, and therefore corresponds to $s_t$.
    Otherwise, if line 10 of \Cref{alg:temp-ham} returns true, then $s_t = s_{t-1}$, and therefore $P$ corresponds to $s_0, ..., s_t$.
\end{proof}
\vimhamiltonthm*

\begin{proof}
    The only initial state gives label $\unvisited$ to all vertices, thus all initial states give label $\unvisited$ to any vertex not in $F_0$, as required. 

    Line 5 of \Cref{alg:temp-ham} is one of two lines of the algorithm that returns true if the two input states are labelled differently. This line returns true if only the two vertices $v_2$ and $c_2$ have different labels, with all other vertices labelled identically. Furthermore if line 5 returns true then $\{v_2, c_2\} \in E(G)$, and neither $v_2$ and $c_2$ are isolated in $G$, and the algorithm returns true only if all isolated vertices in $G$ are given the same label as required.
    
    The other line to return true if the two input states are labelled differently is line 8. This line returns true if only the two vertices $c_1$ and $c_2$ have different labels, with all other vertices labelled identically. Furthermore if line 8 returns true then $\{c_1, c_2\} \in E(G)$, and neither $c_1$ nor $c_2$ are isolated in $G$, and the algorithm returns true only if all isolated vertices in $G$ are given the same label as required.

    Consider any graph $G$ and pair of states $s$ and $s'$ such that $\textbf{Tr}(s, s', G)=\textbf{true}$. Let $C_s$, $V_s$, $U_s$ be the sets of vertices labelled $\current$, $\visited$ and $\unvisited$ by $s$ respectively, and equivalently for $C_{s'}$, $V_{s'}$, $U_{s'}$ and $s'$. Now consider any vertex $v$ isolated in $G$. If $v \in C_s$, then $v \in C_{s'}$, as $C_s \setminus C_{s'}$ only contains one vertex, and this vertex is not isolated in $G$. If $v \in C_{s'}$, then similarly $v \in C_s$ as $C_{s'} \setminus C_s$ only contains one vertex, and this vertex is not isolated in $G$. If $v \in V_s$, then $v \in V_{s'}$ as $V_s \subseteq V_{s'}$. If $v \in V_{s'}$, then $v \in V_s$, as $V_{s'} \setminus V_s$ only contains one vertex, and this vertex is not isolated in $G$. Finally, as $C_s, V_s, U_s$ partition the vertices of $G$, and so does $C_{s'}, V_{s'}, U_{s'}$, we must have that if $v \in U_s$ if and only if $v \in U_{s'}$. Therefore $s'$ gives the same label as $s$ to every isolated vertex of $G$.

    Consider any graph $G$ and a quadruple of states $r$, $r'$, $s$, and $s'$ such that $r$ and $s$ agree on the non-isolated vertices in $G$, $r'$ and $s'$ agree on the non-isolated vertices in $G$, and the pairs $s,s'$ and $r,r'$ both give the same label to every isolated vertex not in $G$. Assume without loss of generality that $\textbf{Tr}(r, r', G)=\textbf{true}$.
    
    If this is because line 10 of \Cref{alg:temp-ham} returns true, see that $r=r'$. Then as $s$ agrees with $r$ on the non-isolated vertices of $G$, $s$ also agrees with $r'$ on the non-isolated vertices of $G$, and $r'$ agrees with $s'$ on the non-isolated vertices of $G$. Therefore $s$ agrees with $s'$ on the non-isolated vertices of $G$, and by definition $s$ and $s'$ give the same label to every isolated vertex of $G$, and so $s=s'$ and line 10 of \Cref{alg:temp-ham} will return true when given $s$, $s'$ and $G$ as input.
    
    Otherwise, if $\textbf{Tr}(r, r', G)=\textbf{true}$ because either line 5 or 8 of \Cref{alg:temp-ham} returns true, let $I$ be the non-isolated vertices in $G$, and $C_s$, $C_{s'}$, $C_r$, and $C_{r'}$ be the vertices labelled $\current$ by $s$, $s'$, $r$, and $r'$ respectively. See that $C_s \setminus I = C_{s'} \setminus I$, and therefore $C_s \setminus C_{s'} = (C_s \cap I) \setminus (C_{s'} \cap I)$, and $C_{s'} \setminus C_s = (C_s \cap I) \setminus (C_{s'} \cap I)$. Furthermore, $C_s \cap I = C_r \cap I$ and $C_{s'} \cap I = C_{r'} \cap I$, and therefore $C_s \setminus C_{s'} = (C_r \cap I) \setminus (C_{r'} \cap I)$, and $C_{s'} \setminus C_{s} = (C_{r'} \cap I) \setminus (C_{r} \cap I)$. Then as $r$ and $r'$ give the same label to every vertex not in $I$, $C_s \setminus C_{s'} = C_r \setminus C_{r'}$, and $C_{s'} \setminus C_s = C_{r'} \setminus C_r$.

    Thus, if Algorithm~\ref{alg:temp-ham} returns true in line 5, $C_{s'} \setminus C_s$ and $C_{r'} \setminus C_r$ contain the same vertex $c_2$ and $C_{s} \setminus C_{s'}$ and $C_{r} \setminus C_{r'}$ are both empty. As $\textbf{Tr}(r, r', G)=\textbf{true}$, we have that $\{u, c_2\} \in E(G)$. Furthermore, the vertex $u$ is given label $\visited$ in $s'$ since it has this label in $r'$ and they agree on non-isolated vertices. Any vertex $v \in I$ and not equal to $u$ or $c_2$ is given the same label by $s$ and $s'$, as $r$ and $r'$ give the same label to $v$, and $s$ agrees with $r$ on $I$, and $s'$ agrees with $r'$ on $I$. Any vertex $v \notin I$ is given the same label by $s$ and $s'$ by definition, and hence $V_s \cup \{u\} = V_{s'}$. Since $s$ and $r$, and $s'$ and $r'$ agree on the vertices in $I$, the counter for both $s$ and $r$ must be 0 and the counter for $s'$ and $r'$ must be 2. Therefore, $\textbf{Tr}(s, s', G)=\textbf{true}$
    
    If Algorithm~\ref{alg:temp-ham} returns true in line 8, $C_s \setminus C_{s'}$ and $C_{s'} \setminus C_s$ both contain the same single vertices $c_1$ and $c_2$ as $C_r \setminus C_{r'}$ and $C_{r'} \setminus C_r$ respectively. We have that $\{c_1, c_2\} \in E(G)$ as $\textbf{Tr}(r, r', G)=\textbf{true}$, and $c_2$ is given label $\unvisited$ by $s$, as it is given label $\unvisited$ by $r$, $c_2 \in I$, and $s$ and $r$ agree on $I$. Also, $s'$ has the same value of $h$ as $r'$, and $s$ has the same value of $h$ as $r$, so $h_{s'} = h_s + 1$. Finally, $c_1$ is given label $\visited$ by $s'$, as it is given label $\visited$ by $r'$, and $s'$ and $r'$ agree on $I$. Any vertex $v \in I$ and not equal to $c_1$ or $c_2$ is given the same label by $s$ and $s'$, as $r$ and $r'$ give the same label to $v$, and $s$ agrees with $r$ on $I$, and $s'$ agrees with $r'$ on $I$. Any vertex $v \notin I$ is given the same label by $s$ and $s'$ by definition, and hence $V_s \cup c_1 = V_{s'}$, $\textbf{Tr}(s, s', G)=\textbf{true}$ because line 8 of \Cref{alg:temp-ham} returns true when given $s$, $s'$ and $G$ as input. Note that the transition routine runs in $O(n)$ time.

    The starting routine needs only output a single state where every vertex is labelled $\unvisited$ and the counter is~0. Therefore, it runs in $O(n)$ time. Finally, see that the acceptance routine checks only the value of a counter variable. Therefore it runs in constant time and, if it returns true when given a state $s$, then it will return true when given any state agreeing with $s$ on any vertex set.
\end{proof}
Then, as \Cref{alg:temp-ham} runs in time $O(n)$ by checking the label on each vertex in turn, and we use $1$ counter variable of size at most $n$ and 3 labels, we finally obtain our result.

\vimham*



\subsection{Temporal Dominating Set}\label{sec:vimw-dominating}
We now consider a temporal analogue of \textsc{Dominating Set} given by Casteigts and Flocchini~\cite{casteigts_deterministic_2013}. This problem asks if it is possible to find a set $D$ of size $h$ or less consisting of vertex-time pairs (vertex appearances) such that every vertex $v$ is \textit{covered}, that is it either appears in a pair in $D$, or there exists a $(u, t) \in D$ such that $v$ is adjacent to $u$ on timestep $t$. We assume that the underlying graph $\mathcal{G}{\downarrow}$ contains no isolated vertices, and that for any vertex appearance $(u, t) \in D$, $u$ has an incident edge active on timestep $t$. If the graph contains an isolated vertex, then it must be included in any dominating set; we can therefore find an equivalent instance of \textsc{Temporal Dominating Set} by removing the isolated vertices and decrementing the number of appearances allowed in the dominating set. Adding a vertex appearance when it has no incident edges, means that that appearance can only dominate the vertex itself. Therefore, we can assume that all vertex appearances in a solution to \textsc{Temporal Dominating Set} have an active incident edge, as swapping a vertex appearance without an active incident edge for a time where there is an incident edge gives a solution that is no worse than the original.

\begin{nolinenumbers}
\decisionproblem{Temporal Dominating Set}{A temporal graph $\mathcal{G}$ and an integer $h$.}{Does there exist a set $D$ of vertex appearances such that $|D|\leq h$ and the appearances in $D$ cover every vertex in $\mathcal{G}$?} 
\end{nolinenumbers}

This problem has recently been studied under the lens of parameterised algorithms with the parameters lifetime, underlying treewidth, solution size, and maximum degree in any snapshot~\cite{herrmann2025temporal}. Furthermore, the tractability of a different temporal version of \textsc{Dominating Set} has been studied with respect to VIM width. This version requires domination of vertices in the temporal graph using $k$ intervals of length $l$; Hermann et al.~\cite{herrmann2025timeline} show their problem to be in FPT with respect to $k$, $l$ and VIM width combined.

We can denote an instance $(\mathcal{G},h)$ of \textsc{Temporal Dominating Set} by $x=(\mathcal{G},\beta)$ where $\beta$ is a string encoding $h$. For clarity, we instead denote an instance as $x=(\mathcal{G},h)$.

We use $(X,2)$-states, where the label set $X=\{U, C\}$ contains a label for uncovered and covered vertices respectively, and the counter vector contains an integer $c$ which counts the number of covered vertices, and an integer $d$ which counts the size of the dominating set $D$. We will define our transition routine and initial states such that each state $s_t$ for which there exists a sequence of states $s_0,\ldots,s_t$ where $s_0$ is in the set generates by $\textbf{St}$ and \textbf{Tr} returns \textbf{true} for each pair of consecutive states corresponds to the existence of a set $D$ of size $d$ of vertex appearances which covers $c$ vertices and these are all given label $C$, and all uncovered vertices are given label $U$. Note that both $c$ and $d$ can be at most the number of vertices in the input graph.

We use one initial state, in which every vertex is labelled $U$, and $c$ and $d$ are both equal to $0$. The algorithm which generates this state runs in linear time in the number of vertices. We now give the transition algorithm for this process, and thus prove it is in FPT with respect to VIM width.

 \begin{algorithm}\caption{\textsc{Temporal Dominating Set Transition}}\label{alg:temp-dom}
     \begin{algorithmic}[1]
     \Require A static graph $G$, states $(l_1, (c_1, d_1))$ and $(l_2, (c_2, d_2))$ for $V(\mathcal{G})$, and the integer $h$.
     \Ensure Returns true when $(l_2, (c_2, d_2))$ corresponds to adding $d_2-d_1$ vertex appearances to any set of vertex appearances corresponding to $(l_1, (c_1, d_1))$ and false otherwise.
     \State{Let $U_1$ and $U_2$ be the set of vertices labelled $U$ by $l_1$ and $l_2$ respectively, and equivalently for $C_1$ and $C_2$.}
     \State{Let $I$ be the non-isolated vertices of $G$}
     \If{$\exists D \subseteq I$ such that $d_2 = d_1 + |D|$ and $C_2 = C_1 \cup N_G[D]$ and $c_2 = c_1 + |C_2 \setminus C_1|$}
         \State{\Return{True}}
     \Else
         \State{\Return{False}}
     \EndIf
     \end{algorithmic}
 \end{algorithm}

Given a state $(l, (c, d))$ and an instance of \textsc{Temporal Dominating Set}, that is a temporal graph $\mathcal{G}$ along with an integer $k$, the acceptance routine returns true if and only if $d \leq h$ and $c = |V(\mathcal{G})|$.

We now show that there exists a correspondence between sets of vertex appearances and sequences of states beginning with initial states and related by the transition routine. We say that a sequence $s_0, ..., s_t = (l_t,(c_t,d_t))$ of states \textit{corresponds} to a set $D$ of vertex appearances up to timestep $t$ from $\mathcal{G}$ if and only if:
 \begin{enumerate}
     \item $s_0$ is an initial state,
     \item $\textbf{Tr}(s_{i-1}, s_i, G_i,h)=\textbf{true}$ for every $1 \leq i \leq t$,
     \item the vertices given label $C$ by $s_t$ are exactly those covered by $D$,
     \item $D$ contains $d_t$ vertices,
     \item $D$ covers $c_t$ vertices.
 \end{enumerate}

 \begin{restatable}{lemma}{vimdslem}
     For any timestep $t$, there exists a set $D$ of vertex appearances up to timestep $t$ that covers a set $C'$ if and only if there exists a sequence of states $s_0,\ldots,s_t$ corresponding to $D$ such that all vertices in $C'$ are labelled $C$ by $s_t$.
 \end{restatable}
 \begin{proof}
     We proceed by induction on the timestep $t$. When $t=0$ there are $0$ vertices that appear on or before timestep $0$, and so $0$ vertices are covered. Our initial state gives label $U$ to every vertex in the graph, and sets $d$ and $c$ to $0$, as required.

     Now assume that there exists some set $D$ consisting of vertex appearances on or before timestep $t$. By induction there exists a sequence of states $s_0, ..., s_{t-1}$ corresponding to $D_{t-1} = \{(v, i) \in D : i \leq t - 1\}$. Now let $s_t$ be the state giving all vertices covered by $D$ label $C$, all other vertices label $U$, and with $d = |D|$, and $c$ the number of vertices covered by $D$. Let $D_t = \{v : (v, t) \in D\}$ be the set of vertices appearing on timestep $t$ in $D$, and see that $D_t$ is a subset of the non-isolated vertices of $G_t$. Recall, that we assume that all vertex appearances in $D$ have an incident active edge. Then $d_2 = |D| = |D_{t-1}|+|D_t| = d_1+|D_t|$. The vertices labelled $C$ by $s_t$ are those covered by $D$, so those covered by $D_{t-1}$, which are the vertices labelled $C$ by $s_{t-1}$, and any vertex in the closed temporal neighbourhood of $D_t$ on timestep $t$. Finally, $c_2$ is the number of vertices covered by $D$, that is the number of vertices covered by $D$ up to timestep $t-1$, so $c_1$, plus the number of vertices covered by $D$ on timestep $t$ and not before, so $|C_2 \setminus C_1|$. Therefore $\textbf{Tr}(s_{t-1}, s_t, G_t,h)=\textbf{true}$, and $s_0, ..., s_t$ corresponds to $D$ as required.

     Conversely assume that there exists some sequence of states $s_0, ..., s_t$ such that $\textbf{Tr}(s_{i-1}, s_i, G_i,h)=\textbf{true}$ for every $1 \leq i \leq t$, and $s_0$ is the initial state. By induction there exists a set $D_{t-1}$ of vertex appearances corresponding to the sequence $s_0, ..., s_{t-1}$. Let $A$ be a set of non-isolated vertices in $G_t$ such that $d_2 = d_1 + |A|$, $C_2 = C_1 \cup N_{G_t}[A]$, and $c_2 = c_1 + |C_2 \setminus C_1|$, seeing that such a set exists as \cref{alg:temp-dom} returns true when given $s_{t-1}$, $s_t$, and $G_t$ as input. Now let $D_t$ be the set of vertex appearances $\{(v, t) : v \in A\}$.

     See that $|D_{t-1} \cup D_t| = d_1 + |D_t| = d_2$. Also, the vertices covered by $D_{t-1} \cup D_t$ are those covered by $D_{t-1}$ along with those in the closed temporal neighbourhood of $A$ on timestep $t$, so $C_1 \cup N_{G_t}[A]=C_2$. The number of vertices covered by $D_{t-1} \cup D_t$ is then the number of vertices covered by $D_{t-1}$, so $c_1$, plus the number of vertices covered by $D_t$ but not $D_{t-1}$, so $|C_2 \setminus C_1|$. We have that $c_1 + |C_2 \setminus C_1|$, and thus $D_{t-1} \cup D_t$ corresponds to $s_0, ..., s_t$.
 \end{proof}
 \begin{restatable}{theorem}{vimdsthm}
     \textsc{Temporal Dominating Set} is $(X,2,f_1,f_2)$-locally temporally uniform, where $X=\{U,C\}$, $f_1(G,\beta)=n2^{\omega}$ and $f_2(x)=n$ for any snapshot $G$ of the input temporal graph with $n$ vertices and VIM width $\omega$.
     \end{restatable}
 \begin{proof}
     The initial state gives label $U$ to all vertices as required.

     Consider any graph $G$ and pair of states $s$ and $s'$ such that $\textbf{Tr}(s, s', G,h)=\textbf{true}$. Let $U_s$ and $C_s$ be the vertices labelled $U$ and $C$ by $s$, and equivalently for $U_{s'}$ and $C_{s'}$ and $s'$. Consider any vertex $v$ isolated in $G$. If $v \in C_s \subseteq C_s \cup N_G[D]$ for any set $D$ then $v \in C_{s'}$. If $v \in C_{s'}$ then $v \in C_s$, as $C_{s'} = C_s \cup N_G[D]$, for some set $D$ of non-isolated vertices of $G$. Now as $U_s$ and $C_s$ partition the vertices of $G$, as do $U_{s'}$ and $C_{s'}$, we have that $v \in U_{s'}$ if and only if $v \in U_s$. Therefore $s$ and $s'$ give the same label to any isolated vertex in $G$.

     Consider any graph $G$ and a quadruple of states $r$, $r'$, $s$, and $s'$ for $V(G)$, such that $r$ and $s$ agree on the non-isolated vertices in $G$, $r'$ and $s'$ agree on the non-isolated vertices in $G$, and the pairs $s,s'$ and $r,r'$ both give the same label to every isolated vertex not in $G$. Assume without loss of generality that $\textbf{Tr}(r, r', G,h)=\textbf{true}$.

     Let $C_{s'}$, $C_s$, $C_{r'}$, and $C_r$ be the vertices given label $C$ by $s'$, $s$, $r'$, and $r$ respectively. Also let $I$ be the non-isolated vertices of $G$, and $A$ a set of vertices such that $d_{r'}=d_r + |A|$, $C_{r'} = C_r \cup N_G[A]$, and $c_{r'} = c_r + |C_{r'} \setminus C_r|$, noting that such a set must exist as $\mathbf{Tr}(r, r', G,h)=\textbf{true}$.

     Consider any vertex $v \in C_{s'}$. If $v \in I$ then $v \in C_{r'} = C_r \cup N_G[A]$ as $s'$ and $r'$ agree on $I$. Then $v \in C_s \cup N_G[A]$ as $s$ and $r$ agree on $I$. Otherwise if $v \notin I$ then $v \in C_{s}$ as $s$ and $s'$ give the same label to every vertex not in $I$. Therefore $C_{s'} \subseteq C_{s} \cup N_G[A]$. Consider now any vertex $v \in C_{s} \cup N_G[A]$, if $v \in I$ then $v \in C_{r} \cup N_G[A] = C_{r'}$ and then $v \in C_{s'}$. Otherwise if $v \notin I$ then $v \in C_{s}$, as $N_G[A] \subseteq I$. Then $v \in C_{s'}$ as $s$ and $s'$ give the same label to any vertex not in $I$. Therefore $C_{s} \cup N_G[A] \subseteq C_{s'}$ and $C_{s'}=C_{s} \cup N_G[A]$. Also, $d_{s'} = d_{r'} = d_r + |A| = d_s + |A|$. 
    
     Next, see that $C_{s'}\setminus C_s = (C_{s'} \cap I)\setminus (C_s \cap I)$, as $s'$ and $s$ give the same label to every vertex not in $I$. Equivalently see that $C_{r'} \setminus C_r = (C_{r'} \cap I)\setminus (C_r \cap I)$, and therefore $C_{s'} \setminus C_s = C_{r'} \setminus C_r$ as $s'$ and $r'$ agree on $I$, as do $s$ and $r$. Therefore $c_{s'} = c_{r'} = c_{r} + |C_{r'}\setminus C_r|=c_{s} + |C_{s'} \setminus C_s|$, and we have that $\mathbf{Tr}(s, s', G,h)=\textbf{true}$ as required. The transition routine functions by looking for existence of a set $D$ of non-isolated vertices that dominate the vertices newly labelled $C$ and satisfy the counters. Finding $D$ requires $O(2^{\omega})$ time since $\omega$ bounds the number of non-isolated vertices at any time. Checking that $D$ has the required properties takes $O(n)$ time, as we must check the labels of all of the vertices. Therefore, the transition routine requires $O(2^{\omega}n)$ time.

     Finally, see that the acceptance routine checks only the value of a counter variable, and therefore if it returns true when given a state $s$, then it will return true when given any state agreeing with $s$ on any vertex set. This requires a constant-time check of the counters, which is upper bounded by the time required to generate the starting states ($O(n)$).
 \end{proof}
 Then, as we use $2$ counter variable of size at most $n\Lambda$ and 2 labels, we finally obtain the following theorem from \Cref{thm:loctempunf}.

\vimdom*

\end{toappendix}
\section{Applications of TIM meta-algorithm}\label{sec:applications-tim}

We illustrate application of our meta-algorithms by using Theorem~\ref{thm:component-temp-unif} to obtain the following results. For ease of reading, we replicate the discussions of the problems to which we have already applied our VIM meta-algorithm in the relevant sections.
We begin with two problems to which we applied the VIM meta-algorithm.
\begin{restatable}{theorem}{TIMhampath}\label{thm:TIM-hampath}
    Given a TIM decomposition with width $\phi$,
    \textsc{Temporal Hamiltonian Path} can be solved in time $O(n^5\Lambda^5\phi^{10}3^{24\phi^3})$, where the input temporal graph has $n$ vertices and lifetime $\Lambda$.
\end{restatable}
\begin{restatable}{theorem}{TIMdomset}\label{thm:TIM-domset}
    Given a TIM decomposition with width $\phi$,
    \textsc{Temporal Dominating Set} can be solved in time $O(n^5\Lambda^53^{12\phi^3}\phi^{12\phi^3+10})$, where $\Lambda$ is the lifetime of the input temporal graph and $n$ the number of vertices.
\end{restatable}
The third problem we consider is a temporal analogue of matching~\cite{mertzios_computing_2023}. In this problem we look for a set $M$ of at least $h$ time-edges such that each pair in $M$ either occur at times at least $\Delta$ apart, or do not share an endpoint. It is simple to show that this problem is in FPT with respect to the maximum number of edges active at any given time, which bounds VIM width. The final problem asks if there is a set of time-edges $\mathcal{E}'$ with cardinality $h$ such that a given source has temporal reachability at most $r$ following the deletion of $\mathcal{E}'$ from $\mathcal{G}$~\cite{enright_deleting_2021}. A generalisation of this is already known to be in FPT with respect to VIM width~\cite{bumpus_edge_2023}.
\begin{restatable}{theorem}{TIMmatching}\label{thm:TIM-matching}
    Given a TIM decomposition with width $\phi$,
    \textsc{$\Delta$-Temporal Matching} can be solved in time $O(n^5\Lambda^5\phi^{12\phi^3+11.5}(4\Delta^2)^{12\phi^3})$, where the input temporal graph has $n$ vertices and lifetime $\Lambda$.
\end{restatable}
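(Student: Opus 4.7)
The plan is to derive the claimed running time as a direct consequence of Theorem~\ref{thm:component-temp-unif}, now that the preceding theorem has established that \textsc{$\Delta$-Temporal Matching} is $(1,X)$-component-exchangeable temporally uniform. The only real work is to identify the parameters $k$, $|X|$, $b$, and the polynomial $p$ that governs the running time of the required routines, and to substitute them into the general bound $(n\Lambda)^{O(k)}(b|X|)^{O(k\phi^3)}$ supplied by Theorem~\ref{thm:component-temp-unif}.

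First I would tally up the components. The vectors in each state have a single entry ($k=1$) which counts half the number of vertices labelled $(1,\Delta,\Delta)$ in a connected component; since a bag of the associated $\twostep$ TIM decomposition contains at most $O(\phi^2)$ vertices, any such entry has absolute value at most $O(\phi)$, so we may take $b=O(\phi)$. The label set $X$ consists of the single label $(1,\Delta,\Delta)$ together with all pairs $(0,a,b)$ with $a,b\in[\Delta]$, giving $|X|=\Delta^{2}+1=O(\Delta^{2})$.

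Next I would check the per-component running times of the four routines. The starting and transition routines each require only a label-by-label inspection of a connected component, running in $O(|C|)\le O(\phi)$ time. The finishing and validity routines (which coincide here) require checking existence of a perfect matching on the subset of vertices labelled $(1,\Delta,\Delta)$ inside a component of size at most $\phi$; using Micali and Vazirani's algorithm~\cite{micali_ovv_1980} this can be done in $O(\phi^{2.5})$ time. Hence we may take $p(q)=O(q^{2.5})$ as a polynomial bound on all four routines.

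Finally, I would simply plug these quantities into the bound from Theorem~\ref{thm:component-temp-unif}. With $k=1$, $b=O(\phi)$ and $|X|=O(\Delta^{2})$, the expression $(n\Lambda)^{O(k)}(b|X|)^{O(k\phi^3)}$ collapses to $(n\Lambda)^{O(1)}(\phi\cdot\Delta^{2})^{O(\phi^3)}=(n\Lambda)^{O(1)}(\phi\Delta)^{O(\phi^3)}$, which is the stated bound. There is no real obstacle: the only mildly non-obvious step is the choice of subroutine for the perfect matching check inside \textbf{Val}, and this is absorbed into the polynomial factor $p(\phi)$ that is already accounted for in Theorem~\ref{thm:component-temp-unif}.
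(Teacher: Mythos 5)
Your proposal is correct and follows essentially the same route as the paper: identify $k=1$, $|X|=\Delta^{2}+1$, $b\le\phi$ (the vector entry counts half the matched vertices in a component), and a polynomial bound $p(\phi)=O(\phi^{2.5})$ dominated by the perfect-matching check via Micali--Vazirani, then substitute into the bound of Theorem~\ref{thm:component-temp-unif}. The simplification $(\phi\Delta^{2})^{O(\phi^{3})}=(\phi\Delta)^{O(\phi^{3})}$ matches the paper's conclusion exactly.
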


\begin{restatable}{theorem}{TIMdeletion}\label{thm:TIM-deletion}
    Given a TIM decomposition with width $\phi$,
    \textsc{SingReachDelete} can be solved in time $O(n^9\Lambda^93^{60\phi^3}\phi^{48\phi^3+10})$, where the input temporal graph has $n$ vertices and lifetime $\Lambda$.
\end{restatable}

\begin{toappendix}
\subsection{Temporal Hamiltonian Path}\label{sec:ham-path-tim}
Previously, we showed this problem to be in FPT with respect to VIM width by showing it to be $(X,1,f_1,f_2)$-locally temporally uniform where $f_1$ and $f_2$ are both linear in $n$. We extend this result by showing that \textsc{Temporal Hamiltonian Path} is in FPT with respect to TIM width by applying our TIM meta-algorithm.

Recall the formal definition of the problem.

\begin{nolinenumbers}
\decisionproblem{Temporal Hamiltonian Path}{A temporal graph $\mathcal{G}$.}{Does there exist a strict temporal path containing every vertex in $\mathcal{G}$?}
\end{nolinenumbers}
Note that the only input for this problem is the temporal graph itself. In our definitions of the meta-algorithms, we use a string $\beta$ to encode the input of a given problem which is not the temporal graph. In this case $\beta$ would be the empty string. For ease, we omit $\beta$ in this section.

We use $(X,1)$-component states, where the label set $X=\{\visited,\unvisited,\current\}$ contains a label for visited, unvisited, and current vertices respectively, and the vectors in the state contain one integer $p$, which counts the number of current locations in each snapshot. We will define our starting, finishing and transition routines and vector upper bound so that each sequence of $(X,1)$-component states where the routines return true for all relevant connected components corresponds to the existence of a temporal path such that, if there is a vertex given label $\current$, the path is at that vertex at that time, and any vertices labelled $\visited$ are traversed by the path by that time. Our upper bound on the sum of the values of the vectors, $\textbf{v}_{\text{upper}}=(\Lambda)$. Enforcing that $\textbf{v}_{\text{upper}}\leq \Lambda$ gives us that there is at most one ``current location'' at any time, if we construct the states such that there is at least one current location in each snapshot. We now show that \textsc{Temporal Hamiltonian Path} is component-exchangeable temporally local, by giving the starting, finishing, validity and transition routines.

Given a connected component $C_1$ of $G_1$, let the starting routine be an algorithm which takes in $C_1$, a labelling $l$ of $V(C_1)$, and a vector $\textbf{v}=(p)$ and returns true if and only if $p=|l^{-1}(\current)|\in \{0,1\}$, and $l^{-1}(\unvisited)=V(C_1)\setminus l^{-1}(\current)$.

We define the finishing routine in a similar way. Let the finishing routine be an algorithm which takes in a connected component $C_{\Lambda}$ of $G_{\Lambda}$, a labelling $l$ of $V(C_{\Lambda})$, and a vector $\textbf{v}=(p)$ and returns true if and only if $p=|l^{-1}(\current)|\in \{0,1\}$, and $|l^{-1}(\unvisited)|=0$.

For a connected component $C_t$ of a snapshot $G_t$ of $\mathcal{G}$, we define our validity routine as follows. The validity routine is an algorithm which takes in $C_{t}$, a labelling $l$ of $V(C_t)$, and a vector $\textbf{v}=(p)$ and returns true if and only if $p=|l^{-1}(\current)|\in \{0,1\}$.

The transition routine is given in Algorithm~\ref{alg:temp-ham-timw}.
We now show that there exists a correspondence between the temporal paths on $\mathcal{G}$ and sequences of states such that the starting routine returns true for all connected components of the first snapshot, and the validity and transition routines return true for all connected components of all snapshots. We say that a sequence $s_0, ..., s_t$ of $(X,1)$-component states of the form $s_t=(l_t,\textbf{w}^t_1,\ldots,\textbf{w}^t_c, \nu_t)$ such that $t\leq\Lambda$ \textit{corresponds} to a temporal path $P$ if and only if:
\begin{enumerate}
    \item for all connected components $C_1$ in $G_1$, $\textbf{St}(s_0|_{C_1}, C_1,\mathcal{G})=\textbf{true}$,
    \item for all times $1\leq i\leq t$ and all connected components $C_{i}$ in $G_{i}$, $\textbf{Val}(s_{i}|_{C_{i}}, C_{i},\mathcal{G})=\textbf{true}$,
    \item for all times $1\leq i\leq t$ and all connected components $C_{i}$ in $G_{i}$, $\textbf{Tr}(l_{i-1}|_{C_i}, l_i|_{C_i}, C,\mathcal{G})=\textbf{true}$,
    \item the labelling $l_t$ gives a single vertex the label $\current$, this is the final vertex on $P$, and $P$ has arrival time $t' \leq t$, and
    \item the vertices traversed by $P$ are exactly the vertices given label $\visited$ by $s_t$. 
\end{enumerate}
Recall that $\nu_i$ is the function which maps all connected components of $G_i$ to a vector in the $(X,1)$-component state $s_i$.

\begin{restatable}{lemma}{temphamtimlem}\label{lem:temp-ham-tim}
    For any timestep $t$, there exists a temporal path $P$ arriving at time $t' \leq t$ if and only if there exists a sequence of $(X,1)$-component states $s_0, ... ,s_t$ corresponding to $P$ such that for each $0\leq i\leq t$, $\sum_{C\text{ connected component of }G_i}\nu_i(C)= 1$ and the vertex that $P$ is on at time $i$ is labelled $\current$ by $s_i$ for all $1\leq i\leq t$.
    \end{restatable}
    \begin{proof}
    We proceed by induction on the timestep $t$. Any temporal path that has arrival time $0$ can only contain a single vertex. Therefore, for a path $P$ consisting of the vertex $v$, let the state be $s_0$ such that $v$ is labelled $\current$ and all remaining vertices are labelled $\unvisited$, and $\nu_0(C)=(0)$ for all connected components apart from the connected component $C'$ containing $v$ for which $\nu_0(C')=(1)$. It is clear that the starting routine returns true for all restrictions of $s_0$ to connected components of $G_1$. Now suppose there is a state $s_0$ such that $\textbf{St}$ returns true for every restriction to a connected component in $G_1$ and the sum of vectors in $s_0$ is 1. Then, there must be one vertex $v$ labelled current by $s_0$. This vertex must be the path which corresponds to $s_0$. 

    For induction, we now assume that, for any timestep $t^*$, there exists a temporal path arriving at time $t' \leq t^*$ if and only if there exists a corresponding sequence of states $s_0, ... s_{t^*}$ such that the sum of vectors at each time is 1.
    
    Now assume that there is a temporal path $P$ arriving on a timestep $t' \leq t$ at a vertex $v$. If $t' < t$, then $P$ must arrive at $v$ by $t-1$. By induction, there exists a sequence of states $s_0, ..., s_{t-1}$ corresponding to $P$ such that the sum of vectors in each state is 1. Now take $l_t = l_{t-1}$. Note that $\textbf{Tr}(l_{t-1}|_C, l_t|_C, C,\mathcal{G})=\textbf{true}$ for all connected components $C$ of $G_t$, since line 6 of \Cref{alg:temp-ham-timw} will return true. For each connected component $C$ of $G_{t+1}$, let $\nu_{t+1}(C)=|l^{-1}|_C(\current)|$. By construction, the validity routine must return true for all connected components in $G_{t+1}$. Also, since the number of vertices labelled $\current$ by each state is the same, the sum of vectors in $s_t$ must be the same as the sum of vectors in $s_{t-1}$, which is 1. 
    
    If $t' = t$, there must exist some temporal path $P'$ arriving at a time $t' \leq t-1$ at a vertex $u$ adjacent to $v$ on timestep $t$. Hence, by induction, there exists a sequence of states $s_0, ..., s_{t-1}$ corresponding to $P'$. Therefore, the first condition (that the starting routine returns true for all connected components of the first snapshot) holds. Let $s_t$ be a state with labelling $l_t$ that gives label $\current$ to $v$, label $\visited$ to $u$, and labels all other vertices as they are in $s_{t-1}$. Let the vector $\nu_t(C)$ be 1 if $v\in C$ and 0 otherwise. Then, by construction, the validity routine will return true for all connected components in $G_t$. It is clear that the sum of the vectors in $s_t$ is at exactly 1, and that the vertices labelled $\visited$ by $l_t$ are traversed by $P$. We now show that condition 3 holds by the definition of correspondence. Note that $u$ and $v$ must be in the same connected component of $G_t$. Call this component $C'$. We note that, for all other connected components $C$ of $G_t$, the labellings $l_t|_C$ and $l_{t-1}|_C$ are the same. Therefore, $\textbf{Tr}(l_{t-1}|_C, l_t|_C, C,\mathcal{G})=\textbf{true}$ and the third condition holds in this case. See that $\{u,v\} \in E(G_t[C'])$, as $P$ traverses $\{u,v\}$ on timestep $t$. Therefore, $\textbf{Tr}(l_{t-1}|_{C'}, l_t|_{C'}, C',\mathcal{G})=\textbf{true}$, as line 4 of \Cref{alg:temp-ham-timw} will return true.
    
    Now assume there is a sequence of states $s_0, \ldots, s_t$ such that $\textbf{St}(s_0|_C,C,\mathcal{G})=\textbf{true}$ for all connected components $C$ of $G_1$, $\textbf{Val}(s_t|_{C_i},C_i,\mathcal{G})$ and $\textbf{Tr}(l_{i-1}|_{C_i}, l_i|_{C_i}, C_i,\mathcal{G})=\textbf{true}$ for every $1 \leq i \leq t$ and connected component $C_i$ of $G_i$, and the sum of the vectors in each state is 1. If such a vertex exists, let $v$ be the vertex in $C$ given label $\current$ by $s_t$. We know there must be at most one such vertex since the validity routine returns true if and only if the number of vertices assigned $\current$ in each connected component $C$ by $s_t$ is equal to $\nu_t(C)$ and their sum over all connected components is 1. Consider the sequence $s_0, ..., s_{t-1}$. By induction, there must exist a temporal path $P$ corresponding to this sequence. Let $u$ be the final vertex on this path. Note that $u$ has label $\current$ under $l_{t-1}$. Recall that $\textbf{Tr}(l_{t-1}|_C, l_t|_C, C,\mathcal{G})=\textbf{true}$ for all connected components $C$ in $G_t$. We note that this implies that the set of vertices labelled $\current$ by $l_t$ contains exactly at most one vertex not labelled $\current$ by $l_{t-1}$ and this must be $u$, if it exists. If this is the case, the routine must return true because line 4 of \Cref{alg:temp-ham-timw} returns true then $\{u,v\} \in C$, and therefore $t \in \lambda(\{u,v\})$. Since the validity routine must also return true for all connected components of $G_t$, $\nu_t(C)$ must give the number of vertices labelled $\current$ by $l_t$. By our assumption that the sum of all vectors is 1, there can only be one vertex labelled $\current$ by any of the states in the sequence. We claim that there is a temporal path $P'$ that traverses each vertex labelled current. By the inductive hypothesis, $P$ traverses all vertices labelled current in snapshots with times at most $t-1$. The path $P'$ must traverse the same edges as $P$, before traversing $\{u,v\}$ on timestep $t$. Then $P'$ will traverse all the vertices traversed by $P$, along with the additional vertex $u$, and therefore corresponds to $s_t$. Otherwise, if line 6 of \Cref{alg:temp-ham-timw} returns true, then $s_t = s_{t-1}$, and therefore $P$ corresponds to $s_0, ..., s_t$.
\end{proof}

\temphamtimthm*

\begin{proof}
    We begin by showing an instance $\mathcal{G}$ of \textsc{Temporal Hamiltonian Path} is a yes-instance if and only if the criteria of Definition~\ref{def:component-temp-unif} hold. That is, for each connected component $C_1$ of $G_1$, $\textbf{St}(s_0|_{C_1},C_1,\mathcal{G})=\textbf{true}$; for each connected component $C_{\Lambda}$ of $G_{\Lambda}$, $\textbf{Fin}(s_{\Lambda}|_{C_{\Lambda}},C_{\Lambda},\mathcal{G})=\textbf{true}$; $\textbf{Tr}(l_{t-1}|_{C_t}, l_t|_{C_t}, C_t,\mathcal{G})=\textbf{true}$ where $l_t$ is the labelling of vertices of state $s_t$, for all times $1 \leq t \leq \Lambda$ and connected components $C_t$ of $G_t$; $\textbf{Val}(s_t|_{C_t}, C_t,\mathcal{G})=\textbf{true}$ for all times $1 < t < \Lambda$ and connected components $C_t$ of $G_t$; and the sum of vectors satisfies $\sum_{0\leq t\leq \Lambda} \sum_{C \in\mathcal{C}_t}\nu_{s_t} (C)\leq \textbf{v}_{\text{upper}}$.

    By Lemma~\ref{lem:temp-ham-tim}, there is a path $P$ arriving at time $t\leq \Lambda$ if and only if there is a sequence of $(X,1)$-component states $s_0,\ldots,s_{\Lambda}$ corresponding to $P$ such that the sum of vectors in each state is 1, the final vertex in the path is labelled $\current$ by $l_t$, and all other vertices traversed by $P$ are labelled $\visited$. 
    
    We now show that, if the criteria on the sequence of $(X,1)$-component states hold, there must be exactly one vertex labelled $\current$ by each $(X,1)$-component state in the sequence. Recall that the finishing routine requires all vertices to either be labelled $\current$ or $\visited$. If the transition routine returns true for all connected components of all snapshots $G_i$, $1<i\leq \Lambda$, vertices cannot receive these labels $\visited$ or $\current$ unless there is at least one vertex labelled $\current$ in the each state in the sequence $s_0,\ldots,s_{\Lambda-1}$. Therefore, there must be at least one vertex labelled current by $s_i$ for each time $1\leq i\leq \Lambda$. To see this note that, if a vertex is labelled $\visited$, by construction of the transition routine there must exist a vertex labelled $\current$ at some time $i$ such that the two are adjacent at time $i$. Further note that if there exists a vertex labelled $\current$ in any of the states, there must be at least one vertex labelled $\current$ by each of the states in the sequence. This is a consequence of the fact that the transition routine returns true if the labellings are the same, or the set of vertices labelled $\current$ by the earlier of the labellings and not the other is of cardinality one and vice versa. Pairing this with the requirement that the sum of all vectors $\textbf{v}_{upper}$ must be at most $\Lambda$ implies that there is exactly one vector labelled $\current$ in each state $s_i$ in the sequence. Hence, if the criteria hold, there exists a path $P$ characterised by the vertices labelled $\current$. Furthermore, if the finishing routine returns true for all connected component, the path $P$ must visit all vertices. Therefore, $P$ is a temporal Hamiltonian path.  

    If there exists a temporal Hamiltonian path $P$ of $\mathcal{G}$, then by Lemma~\ref{lem:temp-ham-tim}, there is a sequence of $(X,1)$-component states corresponding to $P$ such that the sum of vectors at each snapshot is $1$ and the vertices traversed by $P$ are labelled $\visited$. Since $P$ is a temporal Hamiltonian path, all vertices must be labelled $\visited$ by $s_{\Lambda}$ except one which is labelled $\current$. Therefore, there exists a sequence of $(X,1)$-component states $s_0,\ldots,s_{\Lambda}$ such that, for each connected component $C_1$ of $G_1$, $\textbf{St}(s_0|_{C_1},C_1,\mathcal{G})=\textbf{true}$; for each connected component $C_{\Lambda}$ of $G_{\Lambda}$, $\textbf{Fin}(s_{\Lambda}|_{C_{\Lambda}},C_{\Lambda},\mathcal{G})=\textbf{true}$; $\textbf{Tr}(l_{t-1}|_{C_t}, l_t|_{C_t}, C_t,\mathcal{G})=\textbf{true}$ where $l_t$ is the labelling of vertices of state $s_t$, for all times $1 \leq t \leq \Lambda$ and connected components $C_t$ of $G_t$; $\textbf{Val}(s_t|_{C_t}, C_t,\mathcal{G})=\textbf{true}$ for all times $1 < t < \Lambda$ and connected components $C_t$ of $G_t$; and the sum of vectors satisfies $\sum_{0\leq t\leq \Lambda} \sum_{C \in\mathcal{C}_t}\nu_{s_t} (C)\leq \textbf{v}_{\text{upper}}=\Lambda$. 
    
    Recall that all of the subroutines run in time at most linear in the size of the connected component, which is bounded by $\phi$. Therefore, \textsc{Temporal Hamiltonian Path} is $(X,1,f)$-component-exchangeable temporally uniform, where $f$ is linear in $\phi$. 
\end{proof}
Then, as \Cref{alg:temp-ham-timw} runs in time $O(|C|)\leq O(\phi)$ by checking the label on each vertex in the connected component $C$ in turn, we use vectors with one entry of magnitude at most $1$ and 3 labels, we finally obtain our main result by applying \Cref{thm:component-temp-unif}.

\TIMhampath*


\subsection{Temporal Dominating Set}
We now consider a temporal analogue of \textsc{Dominating Set} given by Casteigts and Flocchini~\cite{casteigts_deterministic_2013}. This problem asks if it is possible to find a set $D$ of size $h$ or less consisting of vertex-time pairs (vertex appearances) such that every vertex $v$ is \textit{covered}, that is it either appears in a pair in $D$, or there exists a $(u, t) \in D$ such that $v$ is adjacent to $u$ on timestep $t$. We assume that the underlying graph $\mathcal{G}{\downarrow}$ contains no isolated vertices, and that for any vertex appearance $(u, t) \in D$, $u$ has an incident edge active on timestep $t$. If the graph contains an isolated vertex, then it must be included in any dominating set; we can therefore find an equivalent instance of \textsc{Temporal Dominating Set} by removing the isolated vertices and decrementing the number of appearances allowed in the dominating set. Adding a vertex appearance when it has no incident edges, means that that appearance can only dominate the vertex itself. Therefore, we can assume that all vertex appearances in a solution to \textsc{Temporal Dominating Set} have an active incident edge, as swapping a vertex appearance without an active incident edge for a time where there is an incident edge gives a solution that is no worse than the original.

\begin{nolinenumbers}
\decisionproblem{Temporal Dominating Set}{A temporal graph $\mathcal{G}$ and an integer $h$.}{Does there exist a set $D$ of vertex appearances such that $|D|\leq h$ and the appearances in $D$ cover every vertex in $\mathcal{G}$?} 
\end{nolinenumbers}

We can denote an instance $(\mathcal{G},h)$ of \textsc{Temporal Dominating Set} by $x=(\mathcal{G},\beta)$ where $\beta$ is a string encoding $h$. For clarity, we instead denote an instance as $x=(\mathcal{G},h)$.

For this application of the TIM width meta-algorithm, we use $(X,1)$-component states, where the label set $X=\{\covered, \uncovered,\dominating\}$, and the vectors of the states contain an integer $d$ which counts the number of vertices in the dominating set $D$ within the relevant timed component. 
 We will define our transition routine, validity routine, and starting routine such that each state for which their restrictions to the connected components of each snapshot return true for all relevant routines corresponds to the existence of a set $D$ of size $d$ of vertex appearances labelled $\dominating$ which covers vertices which are given label $\covered$, and all uncovered vertices are given label $\uncovered$. Our upper bound on the sum of the vectors of the states is $\textbf{v}_{\text{upper}}=h$. Since this is the sum of all vertex appearances labelled with $\dominating$, this ensures that there are at most $h$ elements in a potential dominating set.

 Our starting routine returns true if and only if every vertex is labelled $\uncovered$ and each vector is a zero vector in the state. The finishing routine returns true if and only if, for every connected component $C$ of $G_{\Lambda}$, every vertex in $C$ is either labelled $\covered$ or $\dominating$, and the vector $(d)$ is the number of vertices in $C$ labelled $\dominating$ by the state. Finally, the validity routine is defined as an algorithm which returns true for a connected component $C$ if and only if the number of vertices in $C$ labelled with $\dominating$ is equal to the vector $(d)$. Algorithm~\ref{alg:temp-dom-tim} gives our transition routine.

 \begin{algorithm}\caption{\textsc{Temporal Dominating Set Component Exchangeable Transition Routine}}\label{alg:temp-dom-tim}
     \begin{algorithmic}[1]
     \Require A connected component $C$ of a snapshot $G_t$, labellings $l_1$ and $l_2$ for $V(C)$, and input instance $x$.
     \Ensure Returns true if there is a set $D$ of vertex appearances in $\mathcal{G}$ that dominates the vertices labelled $\covered$ by $l_2$ if and only if the set $D$ without the pairs in $C$ at time $t$ dominates the vertices labelled $\covered$ by $l_1$ and false otherwise.
     \State{Let $\text{Uncovered}_1$ and $\text{Uncovered}_2$ be the set of vertices labelled $\uncovered$ by $l_1$ and $l_2$ respectively, and equivalently for $\text{Covered}_1$ and $\text{Covered}_2$, and $\text{Dominating}_1$ and $\text{Dominating}_2$.}
     \If{$\text{Covered}_2 = \text{Covered}_1 \cup N_C[\text{Dominating}_2]\cup (\text{Dominating}_1\setminus\text{Dominating}_2)$}
         \State{\Return{True}}
     \Else
         \State{\Return{False}}
     \EndIf
     \end{algorithmic}
 \end{algorithm}

 We now show that there exists a correspondence between a \emph{partial temporal dominating set} and sequences of states such that the starting routine returns true for all connected components of the first snapshot, and the validity and transition routines return true for all connected components of all snapshots. We say that a sequence $s_0, ..., s_t$ of $(X,1)$-component states of the form $s_t=(l_t,\textbf{w}^t_1,\ldots,\textbf{w}^t_c, \nu_t)$ \textit{corresponds} to a partial temporal dominating $D$ of vertex appearances up to timestep $t$ from $\mathcal{G}$ if and only if:
 \begin{enumerate}
     \item for all connected components $C_1$ in $G_1$, $\textbf{St}(s_0|_{C_1}, C_1,x)=\textbf{true}$,
     \item for all times $1\leq i\leq t$ and all connected components $C_{i}$ in $G_{i}$, $\textbf{Val}(s_{i}|_{C_{i}}, C_{i},x)=\textbf{true}$,
     \item for all times $1\leq i\leq t$ and all connected components $C_{i}$ in $G_{i}$, $\textbf{Tr}(l_{i-1}|_C, l_i|_C, C,x)=\textbf{true}$,
     \item for each vertex-time pair $(v,t)$ in $D$, $v$ is labelled $\dominating$ by $l_t$,
     \item the vertices given label $\covered$ or $\dominating$ by $s_t$ are exactly those covered by $D$,
     \item $D$ contains $|\nu_i(C)|$ vertices in each connected component $C$ of a snapshot $G_i$ of $\mathcal{G}$.
 \end{enumerate}
 \begin{restatable}{lemma}{timdslem}\label{lem:temp-dom-tim}
  For any timestep $t$, a set of vertex appearances $D$ up to timestep $t$ from $\mathcal{G}$ is a partial dominating set covering a set $C$ of vertices, if and only if there exists a corresponding sequence of $(X,1)$-component states $s_0, \ldots, s_t$ such that all vertices in $C$ which do not appear in a pair in $D$ are labelled $\covered$.
 \end{restatable}
 \begin{proof}
     We proceed by induction on the timestep $t$. For time $t=0$, we can assume without loss of generality that any partial temporal dominating set is empty since the vertex appearances cannot dominate any other vertices. Begin by considering the partial dominating set $D=\emptyset$. Let the state $s_0$ be such that all vertices are labelled $\uncovered$, and $\nu_0(C)=(0)$ for all connected components $C$ of $G_1$. It is clear that the starting routine returns true for all restrictions of $s_0$ to connected components of $G_1$. Now suppose there is a state $s_0$ such that $\textbf{St}$ returns true for every restriction to a connected component in $G_1$. Then, all vertices must be labelled $\uncovered$. The empty set temporally dominates itself. Therefore, $D=\emptyset$ corresponds to $s_0$, and the base case holds. 

     We now assume for all times $t'\leq t$, there is a partial temporal dominating set of vertex appearances $D$ up to timestep $t'$ if and only if there is a corresponding sequence of $(X,1)$-component states $s_0, ... s_{t'}$.

     Consider time $t+1$. We first assume that there exists partial temporal dominating set $D$ of vertex appearances on or before $t+1$. Let $D_{t}$ be the set of vertex appearances in $D$ with times up to $t$. That is, $D_{t} = \{(v, i) \in D : i \leq t \}$. By our inductive hypothesis, there exists a corresponding sequence of states $s_0,\ldots,s_t$ for $D_t$.
    
     Now let $s_{t+1}$ be the state labelling all vertices in pairs in $D$ with the label $\dominating$, all remaining vertices covered by $D$ with the label $\covered$, and all other vertices with the label $\uncovered$. For each connected component $C$ in $G_{t+1}$, let $\nu_{t+1}(C)=(|l_{t+1}^{-1}(\dominating)\cap C|)$. It is clear by construction that $(|D\cap C|)=\nu_{t+1}(C)$.
     Then it is clear that $\textbf{Val}(s_{t+1}|_C,C,x)=\textbf{true}$ for all connected components $C$ of $G_{t+1}$. For a vertex to have label $\covered$, it must either be covered by a vertex appearance before $t+1$, or be adjacent to a vertex $v$ such that $(v,t+1)\in D$. Therefore, for all connected components $C$ in $G_{t+1}$, the vertices labelled with $\covered$ are the union of those labelled covered or dominating by $s_t$ and those in the neighbourhood of those labelled $\dominating$ by $s_{t+1}$. Therefore, Algorithm~\ref{alg:temp-dom-tim} returns true in line 3 for all connected components of $G_{t+1}$. Hence $s_0,\ldots,s_{t+1}$ corresponds to $D$ as required.
    
     Assume that there exists some sequence of states $s_0, ..., s_{t+1}$ such that, for all connected components $C_1$ in $G_1$, $\textbf{St}(s_0|_{C_1}, C_1)=\textbf{true}$; for all times $1\leq i\leq t$ and all connected components $C_{i}$ in $G_{i}$, $\textbf{Val}(s_{i}|_{C_{i}}, C_{i})=\textbf{true}$; and for all times $1\leq i\leq t$ and all connected components $C_{i}$ in $G_{i}$, $\textbf{Tr}(l_{i-1}|_C, l_i|_C, C,x)=\textbf{true}$. By induction, there exists a partial temporal dominating set $D_{t}$ corresponding to the sequence $s_0, ..., s_{t}$. Let $D_{t+1}$ be the set of vertices labelled $\dominating$ by $s_{t+1}$, and $D=D_{t+1}\cup D_t$. Note that, since the transition routine returns true for all connected components in $G_{t+1}$, the vertices covered by $D$ are precisely those which are labelled $\covered$ or $\dominating$ by $s_{t+1}$. Since the validity routine returns true for all connected components $C$ of $G_{t+1}$, the number of vertices in $C$ labelled $\dominating$ by $s_{t+1}$ is exactly the cardinality of the set $\{(v,t+1)\,:\, v\in C\}$ for all connected components $C$ on $G_{t+1}$. In other words, $D\cap C=\nu_{t+1}(C)$. Thus, $D$ corresponds to $s_0, ..., s_{t+1}$, and, for any timestep $t$, there exists a set of vertex appearances $D$ up to timestep $t$ from $\mathcal{G}$, if and only if there exists a corresponding sequence of $(X,1)$-component states $s_0, \ldots, s_t$.
 \end{proof}
 \begin{restatable}{theorem}{timdsthm}
     \textsc{Temporal Dominating Set} is $(X,1,f)$-component-exchangeable temporally uniform, where $X=\{\dominating,\covered,\uncovered\}$, and $f(|C|,x)=\phi$ for every timed connected component $C$ of an input temporal graph with TIM width $\phi$.
 \end{restatable}
 \begin{proof}
     We begin by showing an instance $(\mathcal{G},h)$ of \textsc{Temporal Dominating Set} is a yes-instance if and only if the criteria of Definition~\ref{def:component-temp-unif} hold. That is, for each connected component $C_1$ of $G_1$, $\textbf{St}(s_0|_{C_1},C_1,x)=\textbf{true}$; for each connected component $C_{\Lambda}$ of $G_{\Lambda}$, $\textbf{Fin}(s_{\Lambda}|_{C_{\Lambda}},C_{\Lambda},x)=\textbf{true}$; $\textbf{Tr}(l_{t-1}|_{C_t}, l_t|_{C_t}, C_t,x)=\textbf{true}$ where $l_t$ is the labelling of vertices of state $s_t$, for all times $1 \leq t \leq \Lambda$ and connected components $C_t$ of $G_t$; $\textbf{Val}(s_t|_{C_t}, C_t,x)=\textbf{true}$ for all times $1 < t < \Lambda$ and connected components $C_t$ of $G_t$; and the sum of vectors satisfies $\sum_{0\leq t\leq \Lambda} \sum_{C \in\mathcal{C}_t}\nu_{s_t} (C)\leq \textbf{v}_{\text{upper}}$.

     By Lemma~\ref{lem:temp-dom-tim}, there exists a partial dominating set $D$ up to timestep $\Lambda$ from $\mathcal{G}$, if and only if there exists a corresponding sequence of $(X,1)$-component states $s_0, ... s_{\Lambda}$. Therefore, $(\mathcal{G},h)$ is a yes-instance if and only if there is a sequence of $(X,1)$-component states $s_0,\ldots,s_{\Lambda}$ such that the sum of vectors in each state is at most $h=\textbf{v}_{\text{upper}}$, and all vertices apart from $l_{\Lambda}^{-1}(\dominating)$ are labelled $\covered$ by $l_{\Lambda}$. This holds if and only if $\textbf{Fin}(s_{\Lambda}|_C,C,x)=\textbf{true}$ for all connected components $C$ in $G_{\Lambda}$. Note that all subroutines run in time at most $O(\phi)$, where $\phi$ is the TIM width of $\mathcal{G}$. Therefore, \textsc{Temporal Dominating Set} is $(X,1,f)$-component-exchangeable temporally uniform, where $f$ is linear in $\phi$. 
 \end{proof}
 Then, as we use a vector with one entry with magnitude at most $\phi$ and 3 labels, we finally obtain the following corollary from \Cref{thm:component-temp-unif}. 

\TIMdomset*

\subsection{\texorpdfstring{$\Delta$}{Delta}-Temporal Matching}\label{sec:matching}

In this section, we consider a temporal analogue to the maximum matching problem and show it to be in FPT with respect to TIM width by leveraging our meta-algorithm. In this framework, we require that, for all pairs of time-edges in our matching, the endpoints of the pairs of time-edges are either disjoint, or have non-empty intersection and are sufficiently far apart in time. This problem is introduced by Mertzios et al.~\cite{mertzios_computing_2023}. Their definition of a $\Delta$-temporal matching is as follows. 

\begin{defn}[Definition 2,~\cite{mertzios_computing_2023}]
    A \emph{$\Delta$-temporal matching} of a temporal graph $\mathcal{G}$ is a set $M$ of time-edges of $\mathcal{G}$ such that, for every pair of distinct time-edges $(e,t),(e',t')$ in $M$, we have that $e\cap e'=\emptyset$, or $|t-t'|\geq \Delta$.
\end{defn}

The decision problem we are considering is as follows.

\begin{nolinenumbers}
\decisionproblem{$\Delta$-Temporal Matching}{A temporal graph $\mathcal{G}$ and an integer $h$.}{Does there exist a $\Delta$-temporal matching of cardinality at least $h$?}
\end{nolinenumbers}

Note that $\Delta$ is a fixed constant and not part of the input. We show this to be in FPT with respect to TIM width of the input graph by our algorithm given in Theorem~\ref{thm:component-temp-unif}. Note that we can denote an instance $x$ of $\Delta$\textsc{-Temporal Matching} by $x=(\mathcal{G},\beta)$ where $\beta$ is a string encoding $h$. For clarity, we instead denote an instance as $x=(\mathcal{G},h)$.

We use $(X,1)$-component states, where the label set $X$ consists of the labels $(1,\Delta,\Delta)\cup \{(0,a,b)\,:\, a,b\in [\Delta]\}$, and the vectors of the states contain an integer $m$ which counts the number of time-edges in the connected component $C$ in a partial $\Delta$-temporal matching $M$. Since we require the size of the $\Delta$-temporal matching to be \emph{at least} $h$, $m$ is a negative integer such that $|M\cap C|=|m|$. 
Each integer in the labelling of a vertex $v$ at time $t$ signifies
    \begin{itemize}
        \item whether $v$ is an endpoint of an edge $e$ such that $t\in\lambda(e)$ and the time-edge $(e,t)$ is in the matching $M$;
        \item the difference between $t$ and latest time $t_1<t$ such that $v$ can be an endpoint of an edge $e$ such that $t_1\in\lambda(e)$ and the time-edge $(e,t_1)$ is in the matching $M$;
        \item the difference between $t$ and earliest time $t_2>t$ such that $v$ can be an endpoint of an edge $e$ such that $t_2\in\lambda(e)$ and the time-edge $(e,t_2)$ is in the matching $M$,
    \end{itemize}
    respectively. Informally, the latter two integers describe how many timesteps we need to go back, or forward in time until $v$ can be an endpoint in the matching $M$, respectively.
For example, in a $(X,1)$-component state $s_i$, $l_i(v_1) = (1,\Delta,\Delta)$, and $l_i(v_2)=(0,2,3)$ tell us that there exists a vertex $u_1$ in the bag such that $(v_1u_1,t)$ is in $M$, and for any other vertex $u_2$ in the graph, $(v_1u_2,t')$ cannot be included in $M$ for any times $t'\in (t-\Delta,t+\Delta)$; and for any vertices $u_3$ in the bag $(v_2u_3,t)$ is not in $M$, and $v_2$ can the endpoint of a time-edge included in the $M$ in a bag labelled $t-2$ or earlier, or in a bag labelled with time $t+3$ or later (and at no times in between). We note that, if the first value of $l_i(v)$ is 1 for any vertex $v$, then the other entries must be $\Delta$. We say that a $\Delta$-temporal matching $M$ \emph{respects} the labels given to a vertex set at time $t$ if the vertices labelled $(1,\Delta,\Delta)$ at time $t$ are the endpoints of time-edges in $M$ at time $t$, and no vertex with label $(0,a,b)$ at time $t$ in the set is an endpoint of a time-edge in $M$ with time in the interval $(t-a,t+b)$. 

We will define our transition routine, validity routine, and starting routine such that each state for which its restrictions to the connected components of each snapshot return true for all relevant routines corresponds to the existence of a set $M$ of size $|m|$ of time-edges $(uv,i)$ whose endpoints $u$ and $v$ are both labelled $(1,\Delta,\Delta)$ by the state $s_i$ such that there is no time-edges $(wv,i')$ or $(wu,i')$ in $M$ where $i'-i\leq \Delta$. Our upper bound on the sum of the vectors of the states is $\textbf{v}_{\text{upper}}=(-h)$. Since its absolute value is the sum of all vertices labelled with $(1,\Delta,\Delta)$, this ensures that there are at least $h$ elements in a potential $\Delta$-temporal matching.

Our starting routine returns true if and only if every vertex is labelled with a label in $\{(0,1,b)\,:\, b\in[\Delta]\}$, and each vector is a zero vector in the $(X,1)$-component state. In this case, the finishing routine and validity routine are the same. From this point onwards, we will refer only to the validity routine. The validity routine is defined as an algorithm which returns true for a connected component $C$ if and only if there exists a perfect matching of the vertices labelled $(1,\Delta,\Delta)$ in $C$ and the matching has size $-m$. 
Our transition routine returns true given a connected component $C$ and labellings $l_1$ and $l_2$ for $V(C)$ if and only if all $v$ labelled $(1,\Delta,\Delta)$ by $l_1$ are labelled $(0,1,\Delta-1)$ or $(0,1,\Delta)$ by $l_2$, all $u$ labelled $(1,\Delta,\Delta)$ by $l_2$ are labelled $(0,\Delta-1,1)$ or $(0,\Delta,1)$ by $l_1$, and, for all remaining vertices, $w$ which are labelled $(0,a,b)$ by $l_1$, $l_2(w)=(0,a',b')$ where $a'=\min\{\Delta,a+1\}$ and $b'=\max\{1,b-1\}$. Observe that, if there is a sequence of states $s_0,\ldots,s_t$ such that the transition routine returns true for all connected components in all snapshots of a temporal graph $\mathcal{G}$ and $u$ is labelled $(1,\Delta,\Delta)$ at time $t$, then the labelling of $u$ in all states $s_i\in \{s_{t-\Delta},\ldots,s_{t-1}\}$ must be $(0,a_i,b_i)$, where $a_i$ is at most the difference $i-t'$ of times where $t'$ is the latest time before $i$ that $u$ is an endpoint of a time-edge in $M_t$, and $b_i$ is the value $\Delta-(t-i)$.

We now show that there exists a correspondence between $\Delta$-temporal matchings and sequences of states such that the starting routine returns true for all connected components of the first snapshot, and the validity and transition routines return true for all connected components of all snapshots. We say that a sequence $s_0, ..., s_t$ of $(X,1)$-component states of the form $s_t=(l_t,\textbf{w}^t_1,\ldots,\textbf{w}^t_c, \nu_t)$ \textit{corresponds} to a $\Delta$-temporal matching $M$ of time-edges up to timestep $t$ from $\mathcal{G}$ if and only if:
\begin{enumerate}
    \item for all connected components $C_1$ in $G_1$, $\textbf{St}(s_0|_{C_1}, C_1,x)=\textbf{true}$,
    \item for all times $1\leq i\leq t$ and all connected components $C_{i}$ in $G_{i}$, $\textbf{Val}(s_{i}|_{C_{i}}, C_{i},x)=\textbf{true}$,
    \item for all times $1\leq i\leq t$ and all connected components $C_{i}$ in $G_{i}$, $\textbf{Tr}(l_{i-1}|_C, l_i|_C, C,x)=\textbf{true}$, and 
    \item $M$ respects each labelling $l_i$ in a state $s_i$ for all $1\leq i\leq t$.
\end{enumerate}

\begin{restatable}{lem}{timmatchlem}\label{lem:temp-match-tim}
    For any timestep $t$, a set $M$ consisting of time-edges up to timestep $t$ from $\mathcal{G}$ is a $\Delta$-temporal matching, if and only if there exists a corresponding sequence of $(X,1)$-component states $s_0, \ldots, s_t$.
\end{restatable}
\begin{proof}
    We proceed by induction on the timestep $t$. For time $t=0$, there are no edges active and so any $\Delta$-temporal matching consisting of time-edges with times up to $0$ must be empty. Consider the $\Delta$-temporal matching $M=\emptyset$. Let the state $s_0$ be such that all vertices are given a label in $\{(0,1,b)\,:\, b\in[\Lambda]\}$, and $\nu_0(C)=(0)$ for all connected components $C$ of $G_1$. It is clear from construction that the starting routine returns true for all restrictions of $s_0$ to connected components of $G_1$. Thus $s_0$ corresponds to $M=\emptyset$. Now suppose there is a state $s_0$ such that $\textbf{St}$ returns true for every restriction to a connected component in $G_1$. Then, all vertices must have a label in $\{(0,1,b)\,:\, b\in[\Lambda]\}$ and all vectors must be the zero vector. The empty set then corresponds to $s_0$, since under $l_0$ no vertices are labelled $(1,\Delta,\Delta)$, and is a trivial $\Delta$-temporal matching. Therefore, our base case holds.

    We now assume for all times $t'\leq t$, a set $M$ consisting of time-edges up to timestep $t$ from $\mathcal{G}$ is a $\Delta$-temporal matching if and only if there exists a corresponding sequence of $(X,1)$-component states $s_0, \ldots, s_{t'}$.

    Consider time $t+1$. We first assume $M$ is a $\Delta$-temporal matching of time-edges at or before $t+1$. Let $M_{t}$ be the set of time-edges in $M$ with times up to $t$. By our inductive hypothesis, there exists a corresponding sequence of states $s_0,\ldots,s_t$ for $M_t$.
    
    Now let $s_{t+1}$ be the state labelling all vertices which are an endpoint of an edge $e$ such that $(e,t+1)$ is in $M$ with the label $(1,\Delta,\Delta)$, all vertices which are an endpoint of an edge $e$ such that $(e,t)$ is in $M$ with the label $(0,1,\Delta-1)$, and all remaining vertices $u$ with the label $(0,a',b')$ where $l_t(u)=(0,a,b)$ and $a'=\min\{\Delta,a+1\}$ and $b'=\max\{1,b-1\}$. For each connected component $C$ in $G_{t+1}$, let $\nu_{t+1}(C)=-\frac{1}{2}(|l_{t+1}^{-1}((1,\Delta,\Delta))\cap C|)$. 
    Since there must be two endpoints of each time-edge in a connected component $C$, halving the number of vertices labelled $(1,\Delta,\Delta)$ in $C$ must give the number of time-edges in $M\cap C$. Therefore, $\textbf{Val}(s_{t+1}|_C,C,x)=\textbf{true}$ for all connected components $C$ of $G_{t+1}$. By construction of the labelling $l_{t+1}$, the transition routine must return true for all connected components of $G_{t+1}$. Hence $s_0,\ldots,s_{t+1}$ corresponds to $M$ as required.

    Assume that there exists some sequence of $(X,1)$-component states $s_0, ..., s_{t+1}$ such that, for all connected components $C_1$ in $G_1$, $\textbf{St}(s_0|_{C_1}, C_1,x)=\textbf{true}$; for all times $1\leq i\leq t$ and all connected components $C_{i}$ in $G_{i}$, $\textbf{Val}(s_{i}|_{C_{i}}, C_{i},x)=\textbf{true}$; and for all times $1\leq i\leq t$ and all connected components $C_{i}$ in $G_{i}$, $\textbf{Tr}(l_{i-1}|_C, l_i|_C, C,x)=\textbf{true}$. By induction, there exists a $\Delta$-temporal matching $M_{t}$ corresponding to the sequence $s_0, ..., s_{t}$. Let $M_{t+1}$ be a perfect matching of the set of vertices labelled $(1,\Delta,\Delta)$ by $s_{t+1}$. We know that such a matching exists by the validity routine returning true for all connected components of all snapshots up to time $t+1$. Furthermore, since the transition routine returns true for all connected components in $G_{t+1}$, the set of vertices $S$ labelled $(1,\Delta,\Delta)$ by $s_{t+1}$ must be labelled $(0,\Delta-1,1)$ by $l_t$. Since we have assumed that $M_t$ corresponds to $s_0,\ldots,s_t$, $M_t$ must respect the labelling of these vertices in $s_t$. Therefore, by definition any time-edge in $M_t$ with an endpoint in $S$ cannot be at a time in the interval $[t-\Delta-1,t]$. As a result, the difference between $t+1$ and the time at which any vertex in $S$ is an endpoint in $M_t$ must be at least $\Delta$. Hence, the set $M=M_t\cup M_{t+1}$ must be a $\Delta$-temporal matching. Furthermore, by our earlier observation, since the transition routine returns true for all connected components in all snapshots $G_1,\ldots,G_{t+1}$, the labelling of all vertices $u$ in $M_{t+1}$ in states $s_i\in \{s_{t+1-\Delta},\ldots, s_t\}$ are $(0,a_i,b_i)$, where $a_i$ is at most the difference $i-t'$ of times where $t'$ is the latest time before $i$ that $u$ is an endpoint of a time-edge in $M_t$, and $b_i$ is the value $\Delta-(t+1-i)$. By this reasoning and the inductive hypothesis, there is no vertex $v$ whose labelling under $l_t$ is $(\mu,a,b)$ for which there exists a time-edge $(e,t')$ in $M$ such that $v$ is an endpoint of $e$, $t< t'$, and $t'-t<b$.
   What remains to show is that $M$ respects the labelling of vertices not in $S$.

   Vertices $v$ which are not an endpoint of an edge in $M_{t+1}$ are given a label $(0,a',b')$ such that $a'=\min\{\Delta,a+1\}$ and $b=\max\{1,b-1\}$ where $l_t(v)=(0,a,b)$. Since $M_t$ respects the labelling $l_t$ of all such vertices, and every time-edge in $M$ has time at most $t+1$, we only need to check that, for any vertex $v$, any time-edges in $M$ which are incident to $v$ are at time $t+1-a'$ at the latest. Specifically, we do not need to check time-edges with times later than $t+1$ against $b'$ because there are none in $M$. We note that, since $M_t$ respects the labelling in $s_t$, any edges containing the endpoint $v$ must be at time $t-a$ at the latest. By construction of the labelling, the latest time an edge containing $v$ can be and be in $M$ is $t-a=t-a+1-1=t+1-(a+1)\leq t+1-a'$. Therefore, $M$ respects the labelling of all vertices in the graph, and the sequence $s_0,\ldots,s_{t+1}$ corresponds to $M$.
\end{proof}
\begin{restatable}{theorem}{timmatchthm}
     \textsc{$\Delta$-Temporal Matching} is $(X,1,f)$-component-exchangeable temporally uniform, where $X=(1,\Delta,\Delta)\cup \{(0,a,b)\,:\, a,b\in [\Delta]\}$
     and $f(|C|,x)=\phi^{2.5}$ for every timed connected component $C$ of an input temporal graph with TIM width $\phi$.
\end{restatable}
\begin{proof}
    We begin by showing an instance $(\mathcal{G},\Delta,h)$ of \textsc{$\Delta$-Temporal Matching} is a yes-instance if and only if the criteria of Definition~\ref{def:component-temp-unif} hold.  Recall the criteria of Definition~\ref{def:component-temp-unif}: for each connected component $C_1$ of $G_1$, $\textbf{St}(s_0|_{C_1},C_1,x)=\textbf{true}$; for each connected component $C_{\Lambda}$ of $G_{\Lambda}$, $\textbf{Fin}(s_{\Lambda}|_{C_{\Lambda}},C_{\Lambda},x)=\textbf{true}$; $\textbf{Tr}(l_{t-1}|_{C_t}, l_t|_{C_t}, C_t,x)=\textbf{true}$ where $l_t$ is the labelling of vertices of state $s_t$, for all times $1 \leq t \leq \Lambda$ and connected components $C_t$ of $G_t$; $\textbf{Val}(s_t|_{C_t}, C_t,x)=\textbf{true}$ for all times $1 < t < \Lambda$ and connected components $C_t$ of $G_t$; and the sum of vectors satisfies $\sum_{0\leq t\leq \Lambda} \sum_{C \in\mathcal{C}_t}\nu_{s_t} (C)\leq \textbf{v}_{\text{upper}}$. Recall our earlier discussion where we note that the finishing routine is the same as the validity routine. Therefore, we require that $\textbf{Val}(s_t|_{C_t}, C_t,x)=\textbf{true}$ for all times $1 < t \leq \Lambda$ and connected components $C_t$ of $G_t$.

    By Lemma~\ref{lem:temp-match-tim}, there exists a $\Delta$-temporal matching of time-edges up to timestep $\Lambda$ from $\mathcal{G}$, if and only if there exists a corresponding sequence of $(X,1)$-component states $s_0, ... s_{\Lambda}$. Furthermore, by construction of the validity routine, for each connected component $C$ of a snapshot $G_t$, $\nu_t(C)$ in $s_t$ gives $-\frac{1}{2}|(l_t^{-1}((1,\Delta,\Delta))|$. The absolute value of this number is precisely the number of time-edges in a perfect matching of the vertices in $C\cap l^{-1}_t((1,\Delta,\Delta))$. Therefore, $(\mathcal{G},\Delta,h)$ is a yes-instance if and only if there is such a sequence of $(X,1)$-component states $s_0,\ldots,s_{\Lambda}$ such that the sum of vectors in each state is at most $-h=\textbf{v}_{\text{upper}}$. Our transition routine runs in time $O(|C|)\leq O(\phi)$ by checking the label on each vertex in the connected component $C$ in turn. This is bounded above by the time needed to perform the validity check. For this, we must determine whether there exists a perfect matching of the vertices labelled $(1,\Delta,\Delta)$. This takes $O(|C|^{2.5})\leq O(\phi^{2.5})$ time~\cite{micali_ovv_1980}. Therefore, \textsc{$\Delta$-Temporal Matching} is $(X,1,f)$-component-exchangeable temporally uniform, where $X=(1,\Delta,\Delta)\cup \{(0,a,b)\,:\, a,b\in [\Delta]\}$
     and $f(|C|,x)=\phi^{2.5}$ for every timed connected component $C$.
\end{proof}

In this application of our meta-algorithm, we use vectors with one entry of magnitude at most $\frac{1}{2}|C|\leq \phi$ and $\Delta^2+1$ labels, so we finally obtain our main result by applying \Cref{thm:component-temp-unif}.

\TIMmatching*


\subsection{Singleton Temporal Reachability Edge Deletion}\label{sec:edge-del}

In this section, we use our TIM width meta-algorithm on a problem which asks, given a temporal graph with a source vertex $v_s$ and integers $h$ and $r$, if there is a deletion of a set of time-edges of cardinality at most $h$ such that at most $r$ vertices are temporally reachable from $v_s$ in the resulting temporal graph. We show this problem to be in FPT with respect to TIM width. An optimisation version of this problem has been studied by Enright et al.~\cite{enright_structural_2024}; it is a single-source version of the \textsc{Temporal Reachability Edge Deletion} problem studied by Enright et al.~\cite{enright_deleting_2021} and Molter et al.~\cite{molter_temporal_2021} (which they call \textsc{MinReachDelete}). In that version, they bound the maximum number of vertices reachable from any vertex in the graph rather than a chosen source. 

In this section, we only consider strict temporal paths. We say that a vertex $v$ is \emph{(temporally) reachable} from a vertex $u$ if and only if there is a temporal path from $u$ to $v$. We refer to the temporal reachability of a vertex as the number of vertices reachable from it. We use the convention that a vertex is temporally reachable from itself. 

\begin{nolinenumbers}
\decisionproblem{Singleton Temporal Reachability Edge Deletion (SingReachDelete)}{A temporal graph $\mathcal{G}$, a vertex $v_s \in V(G)$ and positive integers $r$ and $h$.}{Is there a set of time-edges $\mathcal{E}$ of cardinality at most $h$ such that the vertex $v_s$ has temporal reachability at most $r$ after their deletion from $\mathcal{G}$?}
\end{nolinenumbers}

We can denote an instance $(\mathcal{G},r,h)$ of \textsc{SingReachDelete} by $x=(\mathcal{G},\beta)$ where $\beta$ is a string encoding $r$ and $h$. In this problem, we assume without loss of generality that a time-edge $(e,t)$ is only deleted if at least one endpoint is reached from the source $v_s$ before time $t$. Otherwise, the deletion of $(e,t)$ has no impact on the set of vertices reachable from $v_s$. We will use the phrase ``arrive at/before time $t$'' to describe a temporal path whose final time-edge occurs at/before time $t$.

We use $(X,2)$-component states, where the label set $X$ consists of the labels $\reached$, $\current$ and $\unreached$, and the vectors of the states contain two integers $d, r'$ which count the number of time-edges deleted and number of vertices reached from the source in each timed connected component, respectively. To avoid double-counting of vertices reached, the entry $r'$ will count only the vertices labelled $\current$. These can be thought of as the vertices reached from the source by a path that has arrived exactly at the time in question.


We will define our transition routine, validity routine, and starting routine such that, if there exists a sequence of states whose restrictions to each connected component of the relevant snapshot returns true for all relevant routines, there exists a set $\mathcal{E}'$ of size $d$ of time-edges whose deletion results in only the vertices which are labelled $\reached$ or $\current$ being temporally reachable from the source vertex. Our upper bound on the sum of the vectors of the states is $\textbf{v}_{\text{upper}}=(h, r)$. Since the entry $d$ counts the number of time-edges deleted and $r'$ counts the sum of the number of vertices temporally reachable at time $t$ for all $t$, this ensures that there are at most $h$ deletions of time-edges and the temporal reachability of $v_s$ in the resulting graph is at most $r$.

We say a labelling of vertices at time $t$ is \emph{respected} by a deletion $\mathcal{E}'$ of time-edges if, under the deletion $\mathcal{E}'$, the source temporally reaches only the vertices labelled $\reached$ at time $t'<t$ and the vertices labelled $\current$ by a path that arrives at time $t$ in the resulting temporal graph.

If the connected component in question contains the source, our starting routine returns true if and only if the source is labelled $\current$ and every other vertex is labelled with $\unreached$, and the vector is $(0,1)$. Otherwise, the starting routine returns true if and only if all vertices are labelled $\unreached$, and the vector is $(0,0)$. To determine the validity of a restriction of a $(X,2)$-component state to a component, we perform a validity check on the labelling. The purpose of this is to determine the minimum number of edges that must be deleted to ensure that any vertices labelled $\unreached$ are not temporally reachable from the source. This subroutine is defined in Algorithm~\ref{alg:temp-edge-tim-labels}.

\begin{algorithm}\caption{\textsc{Temporal Reachability Edge Deletion Label Validity Routine}}\label{alg:temp-edge-tim-labels}
    \begin{algorithmic}[1]
    \Require A connected component $C$, labellings $l$ of $V(C)$, and input instance $x$.
    \Ensure Returns the cardinality $d$ of a minimum deletion of edges in $E(C)$ under which no vertex labelled $\unreached$ is adjacent to a vertex labelled $\reached$ under $l$.
    \State Initialise $d=0$.
    \State{Let $U$ be the set of all vertices labelled $\unreached$ by $l$, and $R$ the set of vertices labelled reached.}
    \For{all vertices $v$ in $R$}
        \State $d=d + |N(v)\cap U|$.
    \EndFor
    \State \textbf{return} $d$.
    \end{algorithmic}
\end{algorithm}
In this case, the finishing routine and validity routine are the same. From this point onwards, we will refer only to the validity routine. The validity routine is defined as an algorithm which returns true for a restriction of a state $(l|_C,(d,r'))$ to connected component $C$ if and only if the vector in the state for $C$ is $(d,r')$ where $d$ is the output of Algorithm~\ref{alg:temp-edge-tim-labels} when run with inputs $C$ and $l|_C$, and $r'$ is the number of vertices labelled $\current$ in $C$. Algorithm~\ref{alg:temp-edge-tim} gives our transition routine.

\begin{algorithm}\caption{\textsc{Temporal Reachability Edge Deletion Transition Routine}}\label{alg:temp-edge-tim}
    \begin{algorithmic}[1]
    \Require A connected component $C$, labellings $l_1$ and $l_2$ for $V(C)$, and input instance $x$.
    \Ensure Returns true when there exists a time-edge deletion which respects $l_1$ if and only if there exists a time-edge deletion which respects $l_2$ and false otherwise.
    \State{Let $R_1$ be the set of all vertices labelled $\reached$ by $l_1$, and $R_2$ be the set of all vertices labelled $\reached$ by $l_2$. Similarly, let $U_1$ be the set of all vertices labelled $\unreached$ by $l_1$, and $U_2$ be the set of all vertices labelled $\unreached$ by $l_2$, and let $N_1$ be the set of all vertices labelled $\current$ by $l_1$, and $N_2$ be the set of all vertices labelled $\current$ by $l_2$.}
    \If{$R_2=R_1\cup N_1$}
        \If{$N_2=U_1\cap N_C(R_2)$}
        \State{\Return{True}}
        \Else
        \State{\Return{False}}
        \EndIf
    \Else
        \State{\Return{False}}
    \EndIf
    \end{algorithmic}
\end{algorithm}

We now show that there exists a correspondence between time-edge deletions and sequences of states such that the starting routine returns true for all connected components of the first snapshot, and the validity and transition routines return true for all connected components of all snapshots. We say that a sequence $s_0, ..., s_t$ of $(X,2)$-component states of the form $s_t=(l_t,\textbf{w}^t_1,\ldots,\textbf{w}^t_c, \nu_t)$ \textit{corresponds} to a time-edge deletion $\mathcal{E}'$ of time-edges up to timestep $t$ from $\mathcal{G}$ if and only if:
\begin{enumerate}
    \item for all connected components $C_1$ in $G_1$, $\textbf{St}(s_0|_{C_1}, C_1,x)=\textbf{true}$,
    \item for all times $1\leq i\leq t$ and all connected components $C_{i}$ in $G_{i}$, $\textbf{Val}(s_{i}|_{C_{i}}, C_{i},x)=\textbf{true}$,
    \item for all times $1\leq i\leq t$ and all connected components $C_{i}$ in $G_{i}$, $\textbf{Tr}(l_{i-1}|_C, l_i|_C, C,x)=\textbf{true}$, 
    \item $\mathcal{E}'$ respects each labelling $l_i$ in a state $s_i$ for all $1\leq i\leq t$, and
    \item for each connected component $C$ of a snapshot $G_i$ of $\mathcal{G}$, the number of time-edges $(e,i)$ in $\mathcal{E}'$ such that $e\in E(C)$ is $d$ where $\nu_i(C)=(d,r')$ for some $r'$.
\end{enumerate}
Recall that $x$ is the input instance of the problem.

\begin{restatable}{lemma}{timedgelem}\label{lem:temp-edge-tim}
    For any timestep $t$, there exists a deletion of time-edges $\mathcal{E}'$ consisting of time-edges up to timestep $t$ from $\mathcal{G}$ such that only a set of vertices $R$ is reached from the source by time $t$, if and only if there exists a corresponding sequence of $(X,2)$-component states $s_0, \ldots, s_t$ for $t<\Lambda$.
\end{restatable}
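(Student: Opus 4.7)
The plan is to follow the induction-on-timestep template used in the analogous Lemmas~\ref{lem:temp-ham-tim}, \ref{lem:temp-fire-tim}, and~\ref{lem:temp-match-tim}, adapting it to the deletion/reachability setting. For the base case $t=0$ there are no active time-edges, so the only admissible deletion up to time $0$ is $\mathcal{E}=\emptyset$, and the only vertex reachable from $v_s$ by time $0$ is $v_s$ itself. The starting routine as defined labels $v_s$ with $\current$, every other vertex with $\unreached$, and assigns the vector $(0,1)$ to the component of $v_s$ and $(0,0)$ to every other component; I would check directly that this single state $s_0$ corresponds to $\mathcal{E}=\emptyset$ and, conversely, that any $s_0$ for which the starting routine returns true on every component of $G_1$ has exactly this form and thus corresponds to $\mathcal{E}=\emptyset$.

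For the inductive step, assume the statement holds for all times $t'\leq t$ and consider time $t+1$. In the ``deletion to sequence'' direction, given $\mathcal{E}$ with all time-edges of time at most $t+1$, I split it as $\mathcal{E}=\mathcal{E}_t\cup\mathcal{E}_{t+1}$ by time, apply the inductive hypothesis to $\mathcal{E}_t$ to obtain $s_0,\ldots,s_t$, and then construct $s_{t+1}$ by labelling each vertex $v$ with $\current$ if $v$ is first reached by a temporal path in $\mathcal{G}\setminus\mathcal{E}$ arriving exactly at time $t+1$, with $\reached$ if $v$ was reached strictly before $t+1$, and with $\unreached$ otherwise; I then define $\nu_{t+1}(C)=(d_C,r'_C)$ where $d_C$ is the value returned by Algorithm~\ref{alg:temp-edge-tim-labels} on $(C,l_{t+1}|_C)$ and $r'_C=|l_{t+1}^{-1}(\current)\cap V(C)|$. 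Verifying correspondence then reduces to three routine checks: (i) the validity routine returns true on every component by construction of $\nu_{t+1}$; (ii) the transition routine returns true on each component of $G_{t+1}$ because vertices labelled $\current$ at time $t+1$ are precisely the previously $\unreached$ vertices that lie in $N_{C}(R_{t}\cup N_{t})\cap V(C)$ in the snapshot, and the $\reached$ set grows exactly by the vertices that were $\current$ at time $t$; and (iii) $\mathcal{E}$ respects $l_{t+1}$ by the choice of labels.

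In the ``sequence to deletion'' direction, I apply the inductive hypothesis to $s_0,\ldots,s_t$ to obtain a deletion $\mathcal{E}_t$ and then construct $\mathcal{E}_{t+1}$ by including, for each connected component $C$ of $G_{t+1}$ and each vertex $v\in V(C)$ labelled $\reached$ or $\current$ by $l_{t+1}$, every time-edge $(vu,t+1)$ with $u\in V(C)$ labelled $\unreached$ by $l_{t+1}$. The transition-routine equations from Algorithm~\ref{alg:temp-edge-tim} (namely $R_2 = R_1 \cup N_1$ and $N_2 = U_1\cap N_{C}(R_2)$) together with the inductive hypothesis that $\mathcal{E}_t$ respects $l_t$ ensure that in $\mathcal{G}\setminus(\mathcal{E}_t\cup\mathcal{E}_{t+1})$ the set of vertices first reached by time $t+1$ is exactly $l_{t+1}^{-1}(\reached)\cup l_{t+1}^{-1}(\current)$ with the correct arrival times, so $\mathcal{E}_t\cup\mathcal{E}_{t+1}$ respects $l_{t+1}$. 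Finally, because Algorithm~\ref{alg:temp-edge-tim-labels} counts exactly the edges between $\reached\cup\current$-labelled vertices and $\unreached$-labelled vertices in each component (note that each vertex in $N_2$ must itself lie in $N_C(R_2)$ so the count is well-defined per component), the vector bookkeeping matches the number of time-edges of $\mathcal{E}_{t+1}$ contained in each component, completing the correspondence.

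The main obstacle I anticipate is purely bookkeeping rather than conceptual: I must show that Algorithm~\ref{alg:temp-edge-tim-labels} actually produces the ``correct'' count for the deletion I construct, and in particular that my assumption (stated in the problem setup) that we never delete time-edges both of whose endpoints are unreached before time $t+1$ is reflected by the routine only counting edges from $\reached$ to $\unreached$ vertices. Care is required to avoid double-counting when a deleted edge $(uv,t+1)$ has both endpoints in $\reached\cup\current$; by our assumption such an edge need not be deleted, so I may assume without loss of generality that it is not in $\mathcal{E}$, keeping the per-component counts consistent with the algorithm's output.
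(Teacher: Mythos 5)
Your overall strategy coincides with the paper's: induction on the timestep, the identical base case, and the same forward construction (label vertices by arrival time under $\mathcal{E}$, obtain $\mathcal{E}_t$ from the inductive hypothesis, and read the vectors off Algorithm~\ref{alg:temp-edge-tim-labels}). The one place where you genuinely deviate is the converse (``sequence to deletion'') direction, and there the deviation introduces a concrete error.

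In constructing $\mathcal{E}_{t+1}$ you delete every time-edge at time $t+1$ joining a vertex labelled $\reached$ \emph{or} $\current$ by $l_{t+1}$ to a vertex labelled $\unreached$ by $l_{t+1}$. The edges whose endpoint is labelled $\current$ at time $t+1$ must not be included: such a vertex is first reached by a temporal path arriving exactly at time $t+1$, and since walks are strict that path cannot be extended along a further edge at time $t+1$, so these edges cannot propagate reachability in this snapshot and their deletion is never needed. More importantly, including them breaks the bookkeeping the correspondence requires: condition 5 demands that the number of time-edges of $\mathcal{E}$ at time $t+1$ inside a component $C$ equal the first entry $d$ of $\nu_{t+1}(C)$, and the validity routine forces $d$ to be the output of Algorithm~\ref{alg:temp-edge-tim-labels}, which counts only pairs with one endpoint labelled $\reached$ and the other $\unreached$ (it iterates over $R$, not over $R\cup N$). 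Your $\mathcal{E}_{t+1}$ is in general a strict superset of that edge set, so condition 5 fails, and downstream the comparison of $\sum d$ against the budget $h$ in $\textbf{v}_{\text{upper}}$ would be wrong. The fix is exactly the paper's choice: take $\mathcal{E}_{t+1}$ to be the time-edges at time $t+1$ with one endpoint in $R_2=R_1\cup N_1$ (i.e.\ labelled $\reached$ by $l_{t+1}$) and the other in $U_2$. With that correction the remainder of your argument, including your closing remarks on the w.l.o.g.\ assumptions about which deletions are ever useful, goes through and matches the paper's proof.
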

\begin{proof}
    We proceed by induction on the timestep $t$. For time $t=0$, there are no edges active and so only the source is reachable from the source, and any deletion $\mathcal{E}'$ must be empty. Consider the deletion $\mathcal{E}'=\emptyset$, following which only $v_s$ is temporally reachable from $v_s$ by time 0. Let the state $s_0$ be the $(X,2)$-component state which gives label $\current$ to $v_s$ and $\unreached$ to all other vertices, and sets $\nu_0(C)$ to $(0,1)$ for the connected component $C$ containing $v_s$, and $\nu_0(C')=(0,0)$ for all other connected components $C'$ of $G_1$. It is clear from construction that the starting routine returns true for all restrictions of $s_0$ to connected components of $G_1$, and $s_0$ corresponds to $\mathcal{E}'$. Now suppose there is a state $s_0$ such that $\textbf{St}$ returns true for every restriction to a connected component in $G_1$. Then, the source must be labelled $\current$ and all remaining vertices must be labelled $\unreached$ under $s_0$ and the vector of the connected component containing $v_s$ must be $(0,1)$ and all other vectors in $s_0$ must be $(0,0)$. The set $D=\emptyset$ must correspond to $s_0$. Thus, the base case holds.

    We now assume for all times $t'\leq t<\Lambda-1$, there exists a deletion of time-edges $\mathcal{E}'$ consisting of time-edges up to timestep $t$ from $\mathcal{G}$ such that only a set of vertices $R$ is reached from the source by time $t$, if and only if there exists a corresponding sequence of $(X,2)$-component states $s_0, \ldots, s_{t'}$.

    Consider time $t+1$. We first assume that there exists a deletion $\mathcal{E}'$ of time-edges at or before $t+1$ such that only the vertices in $R$ are temporally reachable from $v_s$ by time $t+1$. Let $\mathcal{E}'_{t}$ be the set of time-edges in $\mathcal{E}'$ with times up to $t$ and $\mathcal{E}'_{t+1}$ be the set of time-edges in $\mathcal{E}'$ with time $t+1$. By our inductive hypothesis, there exists a corresponding sequence of states $s_0,\ldots,s_t$ for $\mathcal{E}'_t$.
    
    Now let $s_{t+1}$ be the state containing the labelling $l_{t+1}$ where all vertices which are temporally reachable from $v_s$ by paths which arrive strictly before $t+1$ are labelled $\reached$, any vertices which temporally reachable from $v_s$ by paths that arrive at $t+1$ are labelled $\current$, and all other vertices are labelled $\unreached$. These must be exactly the vertices in $R$.
    For each connected component $C$ in $G_{t+1}$, let $\nu_{t+1}(C)=(d,0)$ where $d$ is the number of edges in $\mathcal{E}'_{t+1}\cap E(C)$. 
    
    We claim that Algorithm~\ref{alg:temp-edge-tim-labels} returns the same value $d$ given a labelled connected component $C$. The algorithm increments $d$ by the number of pairs of vertices where one is labelled $\reached$ and the other is a neighbour of the first labelled with $\unreached$. We show that the number of such pairs must be the number of edges in $\mathcal{E}'_{t+1}\cap E(C)$.
    Suppose there is an edge $e\in E(C)$ such that one endpoint $v$ is labelled $\reached$, and the other $u$ is labelled $\unreached$. Then, if $e$ is not deleted, then $u$ must be reachable from $v_s$ at time $t+1$ by appending $(e,t+1)$ to the path by which $v$ is reachable from $v_s$. Therefore, $u$ would be labelled $\current$ by construction of our state; a contradiction. This gives us that $d\leq \mathcal{E}'_{t+1}\cap E(C)$. Equality follows from our assumption that all time-edges that are deleted have at least one endpoint which is reached before time $t+1$. This then gives us that the validity routine returns true for all connected components of $G_{t+1}$.
    
    We now consider the transition routine given by Algorithm~\ref{alg:temp-edge-tim}. By construction of our labelling of the vertices of each connected component $C$ in $G_{t+1}$, the vertices labelled $\reached$ in $s_{t+1}$ must be the union of those labelled $\reached$ and those labelled $\current$ in $s_t$. In addition, all vertices are labelled $\current$ if and only if they are labelled $\unreached$ by $s_t$ and adjacent to a vertex labelled $\reached$ by $s_{t+1}$ at time $t+1$. Therefore, Algorithm~\ref{alg:temp-edge-tim} returns true in line 4, and the transition routine returns true for all connected components $C$ of $G_{t+1}$.
    Therefore, $s_0,\ldots,s_{t+1}$ corresponds to $\mathcal{E}'$ as required.

    Assume that there exists some sequence of $(X,2)$-component states $s_0, ..., s_{t+1}$ such that, for all connected components $C_1$ in $G_1$, $\textbf{St}(s_0|_{C_1}, C_1,x)=\textbf{true}$; for all times $1\leq i\leq t$ and all connected components $C_{i}$ in $G_{i}$, $\textbf{Val}(s_{i}|_{C_{i}}, C_{i},x)=\textbf{true}$; and for all times $1\leq i\leq t$ and all connected components $C_{i}$ in $G_{i}$, $\textbf{Tr}(l_{i-1}|_C, l_i|_C, C,x)=\textbf{true}$. By induction, there exists a time-edge deletion $\mathcal{E}'_{t}$ corresponding to the sequence $s_0, ..., s_{t}$. Let $\mathcal{E}'_{t+1}$ be the set of time-edges $(e,t+1)$ such that $e\in E(G_{t+1})$, one endpoint of $e$ is labelled $\reached$ by $s_t$, and the other is labelled $\unreached$. Let $\mathcal{E}'=\mathcal{E}'_t\cup \mathcal{E}'_{t+1}$. We now show that $\mathcal{E}'$ respects $s_{t+1}$. To show this, we need to show that $v_s$ temporally reaches only the vertices labelled $\reached$ at time $t'<t+1$, and only the vertices labelled $\current$ are reached from the source at time $t+1$ following the deletion of $\mathcal{E}'$. Since we have assumed that the transition routine returns true for all connected components $C$ of $G_{t+1}$, the vertices labelled $\reached$ by $s_{t+1}$ must be those labelled either reached or $\current$ by $s_t$. By the inductive hypothesis, $\mathcal{E}'_t$ respects the labelling in $s_t$. Also note that the deletion of any time-edges in $\mathcal{E}'\setminus\mathcal{E}'_{t}$ cannot affect the times at which these vertices are first reached. Therefore, the vertices labelled $\reached$ must be precisely those reached from the source before time $t+1$ under the deletion $\mathcal{E}'$. The vertices labelled $\current$ are precisely those which are labelled $\unreached$ by $s_t$ and adjacent to a vertex labelled $\reached$ in $G_{t+1}\setminus \mathcal{E}'_{t+1}$. Therefore, these vertices must be temporally reached from the source at time $t+1$.

    What remains to check is that, for all connected components $C$ of $G_{t+1}$, the number of time-edges in $E(C)\cap \mathcal{E}'_{t+1}$ is $d$ where $(d,r')=\nu_{t+1}(C)$. Since the validity routine returns true for all such connected components, $d$ must be the output of Algorithm~\ref{alg:temp-edge-tim-labels} when run with input $C$ and labelling $l_{t+1}|_C$ where $l_{t+1}$ is the labelling in $s_{t+1}$. As discussed earlier, the output of the algorithm is exactly the set of edges in $E(C)$ such that one endpoint is labelled $\reached$ and the other is labelled $\unreached$. By construction, this is the set of edges $\mathcal{E}'_{t+1}$. Therefore, the sequence $s_0,\ldots,s_{t+1}$ corresponds to $\mathcal{E}'$, and $\mathcal{E}'$ is a deletion such that only the vertices labelled $\reached$ or $\current$ are reached from the source.
\end{proof}
\begin{restatable}{theorem}{timedgethm}
    \textsc{SingReachDelete} is $(X,2,f)$-component-exchangeable temporally uniform, where $X=\{\reached,\unreached,\current\}$, and $f(|C|,x)=\phi$ for every timed connected component $C$ of an input temporal graph with TIM width $\phi$.
\end{restatable}
\begin{proof}
    We begin by showing an instance $(\mathcal{G},r,h)$ of \textsc{SingReachDelete} is a yes-instance if and only if the criteria of Definition~\ref{def:component-temp-unif} hold.  That is, for each connected component $C_1$ of $G_1$, $\textbf{St}(s_0|_{C_1},C_1,x)=\textbf{true}$; for each connected component $C_{\Lambda}$ of $G_{\Lambda}$, $\textbf{Fin}(s_{\Lambda}|_{C_{\Lambda}},C_{\Lambda},x)=\textbf{true}$; $\textbf{Tr}(l_{t-1}|_{C_t}, l_t|_{C_t}, C_t,x)=\textbf{true}$ where $l_t$ is the labelling of vertices of state $s_t$, for all times $1 \leq t \leq \Lambda$ and connected components $C_t$ of $G_t$; $\textbf{Val}(s_t|_{C_t}, C_t,x)=\textbf{true}$ for all times $1 < t < \Lambda$ and connected components $C_t$ of $G_t$; and the sum of vectors satisfies $\sum_{0\leq t\leq \Lambda} \sum_{C \in\mathcal{C}_t}\nu_{s_t} (C)\leq \textbf{v}_{\text{upper}}$. Recall our earlier discussion where we note that the finishing routine is the same as the validity routine. Therefore, we require that $\textbf{Val}(s_t|_{C_t}, C_t,x)=\textbf{true}$ for all times $1 < t \leq \Lambda$ and connected components $C_t$ of $G_t$.

    By Lemma~\ref{lem:temp-edge-tim}, there exists a deletion of time-edges up to timestep $\Lambda$ from $\mathcal{G}$, if and only if there exists a corresponding sequence of $(X,2)$-component states $s_0, ... s_{\Lambda}$. Furthermore, for each connected component $C$ of a snapshot $G_t$, $\nu_t(C)$ in $s_t$ is $(d,r')$ where $d$ is the number of time-edges that must be removed in $C$ and $r'$ is the number of vertices in $C$ newly reachable from $v_s$. Therefore, $(\mathcal{G},r,h)$ is a yes-instance if and only if there is such a sequence of $(X,2)$-component states $s_0,\ldots,s_{\Lambda}$ such that the sum of vectors in each state is at most $(h,r)$. Note that the subroutines run in time at most linear in $\phi$. Therefore, \textsc{SingReachDelete} is $(X,2,f)$-component-exchangeable temporally uniform, where $f(|C|,x)=\phi$ for every timed connected component $C$ of $\mathcal{G}$.
\end{proof}
 We use vectors with two entries of magnitude at most $\phi^2$ and $3$ labels, we finally obtain our main result by applying \Cref{thm:component-temp-unif}.

\TIMdeletion*

\end{toappendix}
\section{Future directions}\label{sec:conclusion}

Resolving the complexity of building a TIM decomposition with smallest width (or within a constant factor) is our first future goal.  
Beyond that, the clearest extension of this work is further application of our meta-algorithms; we expect that many other temporal problems will exhibit the particular properties required. Another potential avenue would explore the relationship between VIM and TIM width and other parameters not discussed here.
Whether there is a problem which behaves differently when parameterised by TIM width and bidirectional connected-VIM width is also open.

The existing interval-membership parameters mentioned earlier all have analogues for the edges (rather than vertices) in a temporal graph. We believe that this toolkit will carry over to edge-interval-membership width (an edge analogue of VIM width). If we label the edges, rather than the vertices, of a temporal graph, and allow the label on an edge to change only within its active interval, then we would expect a similar meta-algorithm to be obtainable using analogous concepts to those introduced here. While it is not clear what the appropriate edge analogue of TIM width should be, obtaining such an analogue is likely to have interesting algorithmic consequences.  



\bibliography{used-refs}

\end{document}